\newcommand*\widefbox[1]{\fbox{\hspace{2em}#1\hspace{2em}}}
\newtheorem{theorem}{Theorem}
\newtheorem{lemma}{Lemma}
\newtheorem*{remark}{Remark}
\numberwithin{equation}{section}
\newcolumntype{P}[1]{>{\centering\arraybackslash}p{#1}}
\def\dotminus{\mathbin{\ooalign{\hss\raise1ex\hbox{.}\hss\cr
  \mathsurround=0pt$-$}}}
\begin{document}

\title{A predator-prey model with age-structured role reversal }
\author[1]{Luis Carlos Suárez\thanks{lcsuarez@umd.edu}}
\author[1]{Maria Cameron\thanks{mariakc@umd.edu}}
\author[2]{William F. Fagan\thanks{bfagan@umd.edu}}
\author[1]{Doron Levy\thanks{dlevy@umd.edu}}
\affil[1]{Department of Mathematics, University of Maryland, College Park, MD 20742, USA}
\affil[2]{Department of Biology, University of Maryland, College Park, MD 20742, USA}

\maketitle

\abstract{
We propose a predator-prey model with an age-structured predator population that exhibits a functional role reversal. The structure of the predator population in our model embodies the ecological concept of an "ontogenetic niche shift," in which a species' functional role changes as it grows. This structure adds complexity to our model but increases its biological relevance. The time evolution of the age-structured predator population is motivated by the Kermack-McKendrick Renewal Equation (KMRE). Unlike KMRE, the predator population's birth and death rate functions depend on the prey population's size. We establish the existence, uniqueness, and positivity of the solutions to the proposed model's initial value problem. The dynamical properties of the proposed model are investigated via Latin Hypercube Sampling in the 15-dimensional space of its parameters. Our Linear Discriminant Analysis suggests that the most influential parameters are the maturation age of the predator and the rate of consumption of juvenile predators by the prey. We carry out a detailed study of the long-term behavior of the proposed model as a function of these two parameters. In addition, we reduce the proposed age-structured model to ordinary and delayed differential equation (ODE and DDE) models. The comparison of the long-term behavior of the ODE, DDE, and the age-structured models with matching parameter settings shows that the age structure promotes the instability of the Coexistence Equilibrium and the emergence of the Coexistence Periodic Attractor.}

%{\bf MSC}
%
%	92D25  	Population dynamics (general)
%
%    35L04  	Initial-boundary value problems for first-order hyperbolic equations
%
%    65M06  	Finite difference methods for initial value and initial-boundary value problems involving PDEs

{\bf Keywords:} age-structured, role reversal, Kermack-McKendrick renewal equation, Latin hypercube sampling, linear discriminant analysis, phase diagrams

\section{Introduction}
\subsection{An overview}
{ People have been thinking about trophic interactions between various species since ancient times. The concept of a food chain was introduced by a 10th-century Arab philosopher, al-Jahiz (\cite{Agutter2008}). The birth of modern ecology, recognizing complex interactions between species and equipped with mathematical modeling, can be attributed to the 1920s, when \cite{Elton1927} upgraded food chains to food webs and \cite{Lotka} and \cite{Volterra} proposed their predator-prey models.

While food webs can sometimes correctly capture the big picture for ecological systems, they typically oversimplify complex inter- and intraspecific interactions by ignoring an aspect of crucial importance: the variance in body size among conspecific individuals. Such variance is often associated with age differences, but other factors, such as differential foraging success, can also lead to variance in body size. \cite{WG1984} argued that changes in the body size of animals have a big influence on ecological processes. For example, juvenile predators may have very different diets from conspecific adults, and they may be eaten by other species that also compete with the prey of conspecific adults for resources. Such interactions, which biologists term ``ontogenetic niche shifts," mean that a species' functional role changes as it grows. Such shifts may negatively affect the survival and recruitment of predators, creating the so-called juvenile predator bottleneck. 

Further complexity of ecosystems is caused by role-reversal and cannibalism (\cite{Polis1988ExploitationCA}). \cite{Polis1991} pointed out four problems commonly present in catalogs of empirical food webs, making them inadequate for analysis: aggregation and undercounting of species, inadequate dietary information, lack of attention to age structure and ontogenetic niche shifts, and neglect of ecological feeding loops that occur due to role reversal and cannibalism. Polis gave several examples of ontogenetic reversals in the ecosystem of the Coachella Valley in Riverside County, California. For instance, gopher snakes eat eggs and young burrowing owls, while adult burrowing owls eat young gopher snakes. 

In fact, biological instances of role reversal and cannibalism are quite common in nature (\cite{Polis1988ExploitationCA, Polis1991}). 
%In the ecosystem of salt marshes along the US Atlantic Coast, adult killifish eat the shrimp \cite{shrimp1,shrimp2}.
% Barnacles filter and eat zooplankton, including larval echinoderms, while adult echinoderms eat barnacles~\cite{barnacles}. 
For example, in the small deciduous forest on the campus of Hokkaido University, Japan, spider mites kill the larvae of one of their key predators~(\cite{Saito1986}). In another system, cod, the top predator in the Baltic Sea, prey on herring and sprat that consume the eggs of cod~(\cite{Koster2000}), and juvenile cod between ages 0 and 2 may also lose from about a third to nearly half of its population due to cannibalism~(\cite{Neuenfeldt2000TrophodynamicCO}). In some systems, these interactions may be sublethal, such as the size-dependent aggression towards siblings that takes place among \textit{Dendrobates tinctorius} tadpoles~(\cite{Fou2022}).  

In the review paper of 2015, \cite{NAK2015} gave examples of three types of ontogenetic niche shifts---in diet, interaction type, and habitat---and emphasized the necessity of accounting for ontogenetic perspectives for understanding mechanisms underlying biodiversity and ecosystem functioning.}

\subsection{Mathematical modeling of ecosystems}
{ Numerous mathematical models of ecological systems, with various degrees of complexity, were designed to capture and analyze particular aspects of the systems' dynamics. Models with age-structured predator population based on the McKendrick equation~(\citeyear{McKendrickXLVTheRO}) were developed and studied by \cite{Cushing1982} and ~\cite{Cushing1984,
Cushing1986}. Size-structured models featuring cannibalism based on the continuity equation were proposed by \cite{Cushing1992} and \cite{FOdell}. An ODE-based predator-prey model with role reversal developed by \cite{Lehtinen2020} demonstrated the feasibility of many ecological scenarios. The sudden ecological shifts from coexistence to predator extinction may occur due to various catastrophic bifurcations, such as saddle-node, homoclinic, and subcritical Hopf. Li, Liu, and Wei~(\citeyear{LLW2022}) proposed a simple and enlightening generic ODE-based prey-predator model with role reversal. We use their model as a building block in this work. Delayed differential equations (DDEs) offer an alternative and, perhaps, simpler way to account for the age structure than the McKendrick equation~(\cite{DelayPredatorMohr}). Recently, \cite{DelayPredatorMishra} introduced a DDE-based model with role reversal. They showed that the maturation time of the predator and the prey handling time are the key factors in the dynamics of this model.

These models of complex interactions between predator and prey can be summarized as follows.
\begin{itemize}
    \item The dynamics depend on model type (ODE, DDE, PDE, etc.), model settings, and model parameters and can be complex.
    \item The predictions of various models are hard to compare to each other as the authors focus on different aspects of ecological processes.
\end{itemize}
}

\subsection{The goal and a summary of main results}
{ Here, our goal is to understand how age structure of the predator population and role reversal jointly affect the dynamics of a coupled predator-prey system.  The first objective is to develop and investigate a predator-prey model with an age-structured predator population and role reversal. The second objective is to examine how the model type affects the long-term dynamics.

Thus, we develop a mixed ordinary and partial differential equation-based model featuring an age-structured predator population and role reversal. The time evolution of the age density of the predator population is modeled by a Kermack-McKendrick-type Renewal Equation with birth- and death-rate functions depending on the prey population size. Note that the term \emph{population size} refers to the density or numbers of prey or predators rather than their physical size. The prey population, in turn, depends on the juvenile and adult predator population sizes.  This makes the predator PDE analytically unsolvable in contrast to that in (\cite{DelayPredatorMishra}). The dynamics of the prey population are governed by an ODE mimicking the prey equation in (\cite{LLW2022}). 

We analyze the proposed model using analytical and numerical tools. We prove the existence, uniqueness, and positivity of the solution to the initial value problem for the proposed model, { and the convergence of the numerical solution to it}. In addition, we obtain systems of ODEs and, separately, DDEs that approximate the dynamics of the age-structured predator population.
\color{black}
The proposed model involves 15 parameters. We use the Latin Hypercube Sampling Method (\cite{LHS1977,LHS1979,LHS1981}) to examine the long-term behavior of the model over the parameter space. The results show that the system settles to one of three attractors, which we term the {Equilibrial Coexistence Attractor}, the {Periodic Coexistence Attractor}, and the {Predator-Free Attractor}, in 22\%, 19\%, and 55\% of cases, respectively. In the remaining { approximately 4\%} of cases, the prey and predator population sizes blow up. { This blow-up is possible due to unbounded birth rates of prey and predators and a complex interplay between their population sizes involving time delay effects.} We conduct a Linear Discriminant Analysis (LDA) (a.k.a. Multiple Discriminant Analysis (MDA)) (\cite{DHS2001}) and conclude that the two most influential parameters on the qualitative long-term behavior of the system are the maturation age $\tau^{\ast}$ and the consumption rate of juvenile predators by prey $g$.

Subsequently, we fix most parameters at some default values and conduct a thorough investigation of the behavior of the proposed model on 
the two most pivotal parameters, $\tau^*$ and $g$. { We plot phase diagrams in $(\tau^*,g)$-space, denoting regions of the various attractor types, and a collection of bifurcation diagrams. We do so for two settings of the model characterized by a sharp or a gradual transition from the juvenile predator to adult. The regions of various attractor types in these settings are somewhat different.  }

We present a collection of bar plots of the age density of the predator at Equilibrial Coexistence Attractors and at a Periodic Coexistence Attractor. The age density is a monotonically decreasing and fast-decaying function for the Equilibrial Coexistence Attractors. It exhibits traveling waves with decaying amplitude during the period of the Periodic Coexistence Attractor.  In all cases, the predator population size decays to negligibly small values before reaching the chosen age cap.

Finally, we derive ODE- and DDE-based models from our age-structured model, { aiming to make the equations of all three models consistent. We} compare the phase diagrams of these three models in the $(\tau^*,g)$-space.  Evidently, the age structure of the predator population promotes oscillatory behavior. To be precise, Periodic Coexistence Attractors exist in all three models, but the sizes of the regions in the parameter space $(\tau^\ast,g)$ where the Periodic Coexistence Attractor is observed have drastically different sizes across models: the largest for the age-structured model, the smallest for the ODE model, and of an intermediate size for the DDE model. 

Our codes for simulating the age-structured model and the corresponding ODE and DDE models and plotting phase and bifurcation diagrams are available on GitHub~(\cite{mar1akc}).

The remainder of the paper is organized as follows. The age-structured model is developed in Section 2. The conditions for existence, uniqueness, and positivity are stated in Section 3 and proven in Appendix \ref{sec:proof}. The ODE and DDE models are derived in Sections \ref{sec:conn2ODE} and \ref{sec:conn2DDE} respectively. Numerical algorithms are described in Section \ref{sec:numerics}. The results of numerical studies of the age-structured model and the corresponding ODE and DDE models are presented in Section \ref{sec:results}. Our findings are discussed in Section \ref{sec:discussion}, and conclusions are drawn in Section \ref{sec:conclusion}.

\section{Model development}
\label{sec:model}
In this section, we develop a mixed ODE- and PDE-based predator-prey model with age-structured predator population and role reversal.

\subsection{A building block: Li, Liu, and Wei's ODE}
\cite{LLW2022} proposed the following ODE model for a generic predator-prey system with role reversal:
\begin{align}
         x' &= x(r-ax+sy_1-by_2), \notag\\
         y_1' &=kxy_2 - y_1(gx+D+m_1), \label{eq:LLW2022}  \\
         y_2'& = Dy_1 - m_2y_2. \notag
\end{align}
 In this model, the population of predators is subdivided into juvenile and adult predators. 
 $x$, $y_1$, and $y_2$ represent the population sizes of prey and the juvenile and adult predator respectively, while $a$, $s$, $b$, $k$, $g$, $D$, $m_1$ and $m_2$ are parameters whose biological meaning and values are listed in Table \ref{table2} -- we borrowed the selected values from (\cite{LLW2022}) except for that of $b$ that we increased from 0.4 in (\cite{LLW2022}) to $0.8$ to obtain richer dynamics.} 
 Juvenile predators are assumed to lack hunting skills and depend on adult predators for food. However, due to their assumed smaller size, juvenile predators may be eaten by the prey species, which also obtains resources from the environment and is, in turn, eaten by adult predators.

The trivial extinction equilibrium $E_0 = (0, 0,0)$ is always a saddle point with a single positive eigenvalue $r$ corresponding to the eigenvector $(1,0,0)$. The other equilibria of \eqref{eq:LLW2022}, the {Predator-Free Equilibrium} $E_1 = ( \sfrac{r}{a},0, 0)$ and the {Coexistence Equilibrium} $E^\star= (x^\star, y_1^\star, y_2^\star)$, are asymptotically stable or unstable depending on parameter values. { When $E^\star= (x^\star, y_1^\star, y_2^\star)$ is unstable,
% In addition, at certain parameter values, 
there exists the {Periodic Coexistence Attractor} around it~(\cite{LLW2022}). }

The important conditions for the existence of attractors where both species maintain population sizes bounded away from zero are $s\le g$ and $k\le b$~(\cite{LLW2022}). 
These conditions mean that the eaten population is damaged more than the eating population benefits from consuming it. 

While model \eqref{eq:LLW2022} captures that juvenile predators do not reproduce and lack hunting skills, it is still highly idealized. Most importantly, it ignores the time delay between the birth of juvenile predators and their maturation. Instead, juvenile predators become adults at the rate $D$.

\subsection{The Kermack-McKendrick Renewal Equation}
The dynamics of an age-structured population are described by the classical \emph{Kermack-McKendrick Renewal Equation} (KMRE) (\cite{PerthameTransport}, Section 3.1) 
\begin{align}
\label{eq:KMRE}
    & u_t(t,\tau)+u_\tau(t,\tau) = - \mu(\tau) u(t,\tau), \\
    &u(t,0) = \int_0^\infty B(\tau) u(t,\tau) d  \tau, \hspace{5mm} u(0,\tau)=u_0(\tau).\notag
\end{align}
Here, the independent variables $t$ and $\tau$ represent time and age, respectively;  $u(t,\tau)$ is the age density, and $\mu(\tau)$ and $B(\tau)$ are the death and birth rate functions, respectively. The age of the generation born at time $t_0$ is $\tau = t - t_0$. 
% Hence,
% \begin{align}
% \label{dtaudt}
%     \frac{d \tau}{d t} =1.
% \end{align}
% The KMRE can be understood as follows. Let $t$ be the current time. The size of the generation born within the time interval $\Delta t$ around time $t_0 < t$ is $u(t,t-t_0)\Delta t + O(\Delta t^2)$. The total time derivative of the generation size is
% \begin{align}
%     \frac{d}{dt}\left[ u(t,t-t_0) \Delta t + O(\Delta t^2)\right]& = \left[u_t(t,t-t_0) + u_\tau(t,t-t_0) \frac{d \tau}{d t} \right]\Delta t+ O(\Delta t^2) \notag\\
%     &=\left[u_t(t,t-t_0) + u_\tau(t,t-t_0)\right]\Delta t+ O(\Delta t^2). \label{eq:deriveKMRE}
% \end{align}
% Dividing \eqref{eq:deriveKMRE} by $\Delta t$ and taking the limit $ \Delta t\rightarrow 0$ we obtain the left-hand side of \eqref{eq:KMRE}. Therefore, 
KMRE says that the age density of any generation decreases at the age-dependent rate $\mu(\tau)$. 

The initial age density of the predator population,  $u_0(\tau)$, is bounded and integrable. Furthermore,  the  birth and death rate functions are bounded and satisfy the condition
\begin{align}\label{eq:condition}
    1<\int_0^\infty B(\tau)e^{-M(\tau)}d  \tau < \infty,\quad {\rm where}\quad
    M(\tau):=\int_0^\tau \mu(s) d  s.
\end{align}
These assumptions are enough to guarantee the existence and uniqueness of the solutions (\cite{PerthameTransport}). Furthermore, equation \eqref{eq:KMRE} { admits an implicit analytical solution under these assumptions, obtained} via the method of characteristics:
\begin{equation}
    \begin{cases}
        u(t,t+\tau_0) =u_0(\tau_0)e^{-\int_0^t\mu(t'+\tau_0)dt'},& \tau \equiv t + \tau_0 \ge t,\\ 
 u(t_0,t_0+\tau) =\left(\int_0^{\infty}B(\tau')u(t_0,\tau')d\tau'\right)e^{-\int_0^\tau\mu(\tau')d\tau'},& t \equiv t_0 + \tau \ge \tau,
    \end{cases}
\end{equation}
where $\tau_0$ is the age at time zero and $t_0$ is the birth time.

\subsection{An age-structured predator-prey PDE-based model}
 We will model the predator population by a renewal equation similar to KMRE \eqref{eq:KMRE}. The difference between KMRE and our renewal equation governing the predator population dynamics with the role reversal is two-fold. First, the birth and death rate functions of the predator depend on the prey population size, $x$, governed by an ODE. Second, the predator population is divided into two subpopulations of juvenile and adult predators, where the prey can eat the juvenile predators and are, in turn, eaten by the adult predators. 

An important parameter of our model is the  \emph{maturation age}
$\tau^*>0$. It is used to split the total predator population, $y(t)$, to juvenile, $y_1(t)$, and adult, $y_2(t)$, populations and to define predator's birth and death rate functions, $B(x,\tau)$ and $\mu(x,\tau)$. 
The juvenile, adult, and total predator populations are given, respectively, by 
\begin{align*}
% \label{eq:y1y2}
    y_1(t) = \int_0^{\tau*} u(t,\tau) d  \tau, \quad y_2(t) = \int_{\tau^*}^{\infty} u(t,\tau) d  \tau,\quad
    y(t)= y_1(t)+y_2(t).
\end{align*}
The ODE for the prey population, $x$, is borrowed from  (\cite{LLW2022}).
%, \eqref{eq:LLW2022}, where $y_1$ and $y_2$ are defined by \eqref{eq:y1y2}.
The predator birth and death rate functions, $B$ and $\mu$, are chosen to be of the form
\begin{align*}
    & B(x,\tau) = kx\varphi_{\{\tau \geq\tau^*\}}(\tau) + \tilde{B}(\tau)(1-e^{-\zeta x}), \\ 
    & \mu(x,\tau) = gx\varphi_{\{\tau < \tau^*\}}(\tau) + \mu_B(\tau) + \mu_Me^{-\rho x}.
\end{align*}
Here, $\zeta$ and $\rho$ are positive constants. 
The functions $\varphi_{\{\cdot\}}$ are smooth approximations of the characteristic functions of $[\tau^*,\infty)$ and $(-\infty,\tau^{\ast}]$ defined, respectively, by
\begin{align*}
    &\varphi_{{\{\tau\geq\tau^*\}}}(\tau) = \frac{1}{1+e^{-\nu(\tau-\tau^*)}},\quad 
    &\varphi_{{\{\tau<\tau^*\}}}(\tau) = \frac{1}{1+e^{-\nu(\tau^*-\tau)}}
\end{align*}
where $\nu$ is a positive constant. 
As $\nu \rightarrow \infty$, the functions $\varphi_{\{\cdot\}}$ converge pointwise to the indicator functions of the corresponding intervals. As $\nu \rightarrow 0$, $\varphi_{\{\cdot\}}$ converges pointwise to $\sfrac{1}{2}$ for all $\tau$.
{ We chose $\varphi_{\{\cdot\}}$ to be smooth to account for the gradual maturation process that is typical of many species. In nature, the transition from juveniles to adults may span a period from a few minutes to many years, depending on the species.}
The function $\tilde{B}(\tau)$ is the base birth rate supported on $[\tau^*,\infty)$ so that juvenile predators are unable to reproduce. It is nonnegative, bounded, and integrable. Importantly, the birth rate function $B(x,\tau)$ is zero at any age $\tau$ if the prey population size $x$ is zero, and it is positive at any positive $x$.

The choice of the death rate function $\mu(x,\tau)$ is motivated by the expected behavior of the predator. It is the sum of three terms describing three reasons for predators to die. The juvenile predator death rate, $gx\varphi_{\{\tau < \tau^*\}}$, is due to being eaten by prey. The base predator death rate, $\mu_B(\tau)$, accounts for deaths for age-related reasons. 
%{It is assumed to grow exponentially as $\tau\rightarrow\infty$.} 
The term $\mu_Me^{-\rho x}$ describes the death rate from hunger. 

In summary, the proposed model with the age-structured predator population and role reversal is given by
\begin{subequations}
\begin{empheq}[box=\widefbox]{align}
    x' &= x(r-ax+sy_1-by_2),\quad  x(0) = x_0, \label{eq:model-prey}\\
    u_t + u_\tau &= -\mu(x,\tau)u(t,\tau),\quad u(0,\tau) = u_0(\tau),  \label{eq:model-predator}\\
   u(t,0)& = \int_0^\infty B(x,\tau)u(t,\tau)d  \tau, \label{eq:model-BC} \\
   y_1(t) &= \int_0^{\tau*} u(t,\tau) d  \tau, \quad y_2(t) = \int_{\tau^*}^{\infty} u(t,\tau) d  \tau,\label{eq:y1y2}\\
    B(x,\tau)& = kx\varphi_{\{\tau \geq\tau^*\}}(\tau) + \tilde{B}(\tau)(1-e^{-\zeta x})\label{eq:birth-function}\\
    \mu(x,\tau)& = gx\varphi_{\{\tau < \tau^*\}}(\tau) + \mu_B(\tau) + \mu_Me^{-\rho x},\label{eq:death-function}\\
    \varphi_{{\{\tau\geq\tau^*\}}}(\tau)& = \frac{1}{1+e^{-\nu(\tau-\tau^*)}},\quad
    \varphi_{{\{\tau<\tau^*\}}}(\tau) = \frac{1}{1+e^{-\nu(\tau^*-\tau)}}. \label{eq:smooth_phi}
\end{empheq}
\end{subequations}
The initial size of the prey population $x_0$ is positive. The initial age density of the predator, $u_0(\tau)$, is nonnegative and has a compact support.

% \subsection{The ODE model obtained by averaging the parameter functions and the connection to the original ODE model}

\section{Existence, uniqueness, and positivity}
\label{sec:existence}

In this section, we will establish the existence, uniqueness, and positivity of the solution to the system \eqref{eq:model-prey}--\eqref{eq:smooth_phi}. These properties are naturally expected from a model describing population dynamics. However, the task of developing a model with these properties is nontrivial. Before settling on the model \eqref{eq:model-prey}--\eqref{eq:smooth_phi}, we designed a different model motivated by the ODE model \eqref{eq:LLW2022} from~(\cite{LLW2022}) but abandoned it because it led to a U-shaped age density of the predator. Furthermore, proving these properties for the system \eqref{eq:model-prey}--\eqref{eq:smooth_phi} turned out to be harder than we expected. Similar properties for \eqref{eq:LLW2022} directly follow from Nagumo-Brezis' Theorem (Theorem 1.2, page 26 in~(\cite{PavelTangency1999})) about flow invariance. Unfortunately, this theorem is not applicable to the system \eqref{eq:model-prey}--\eqref{eq:smooth_phi} because the flow is not of the form $z' = F(t,z)$ due to the renewal of the characteristics. The proof of existence, uniqueness, and positivity of the solution to the Kermack-McKendrick Renewal Equation in (\cite{IannelliMath1995})(Chapter VII) exploits the fact that the solution can be written out analytically in the form of integrals. This is impossible to do for our model \eqref{eq:model-prey}--\eqref{eq:smooth_phi} because of the interdependence of $x$ and $u$ as the predator's birth and death rates, $B(x,t)$ and $\mu(x,t)$, depend on $x$, and $x$ depends on $u$. As a result, our proof of the basic properties is more complicated and logically involved than the ones for \eqref{eq:LLW2022} and the KMRE.

\subsection{Characteristics of the predator equation}
 PDE \eqref{eq:model-predator} admits characteristics of the form
 \begin{equation}
     \label{eq:num:predator}
     \begin{cases}
      \frac{d}{dt}u(t,\tau_0 + t) = -\mu(x,\tau_0 + t)u(t,\tau_0 + t),& t\le \tau,~~\tau: = \tau_0 + t,\\
     \frac{d}{d\tau}u(\tau + t_0,\tau) = -\mu(x,\tau)u(\tau + t_0,\tau),& t > \tau,~~t: = \tau + t_0,
      \end{cases}
 \end{equation}
where $\tau_0$ is the initial age and $t_0$ is the birth time (see Fig. \ref{fig:age-time}).
\begin{figure}[htbp]
\begin{center}
\includegraphics[width=0.7\textwidth]{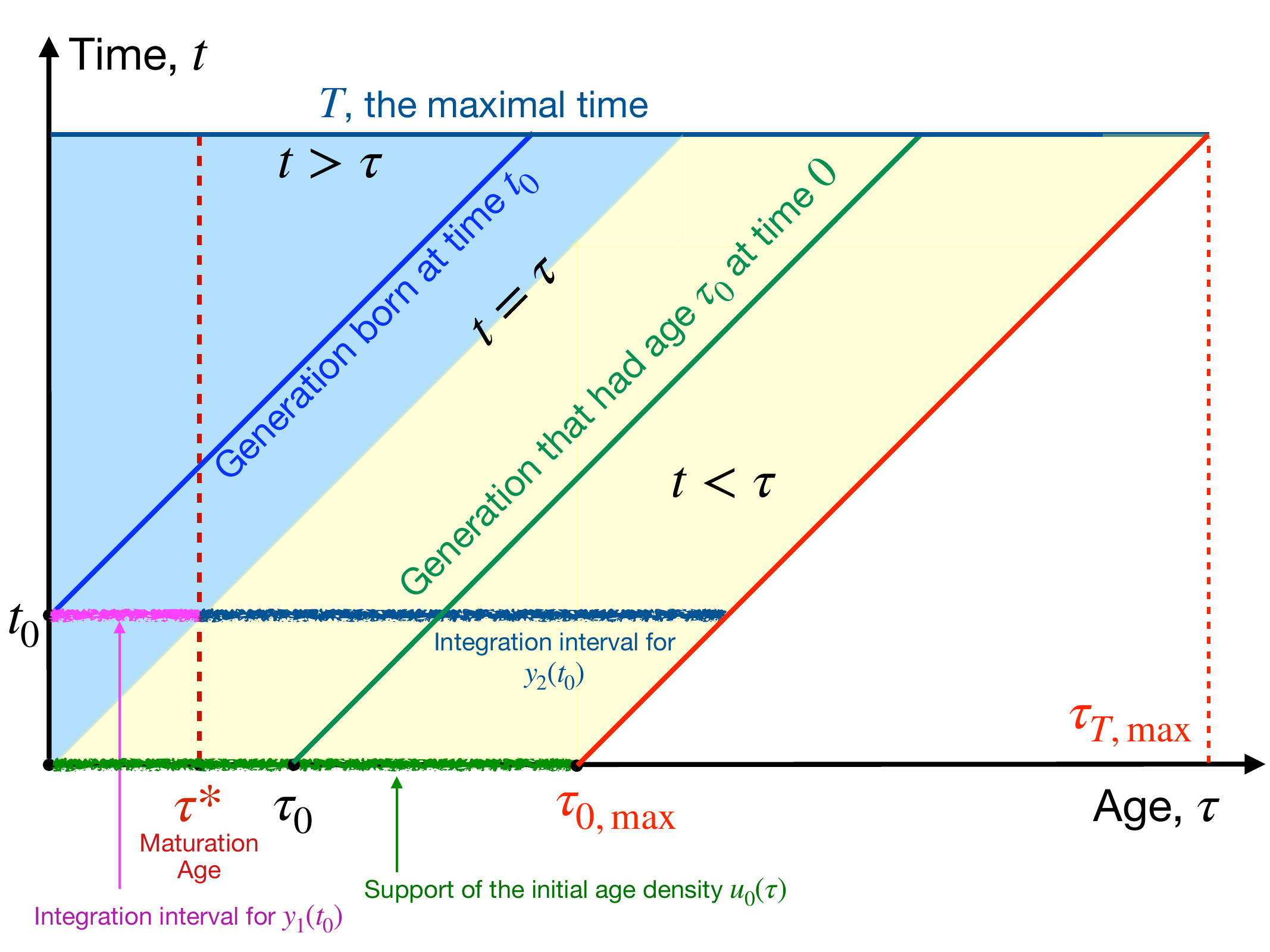}
\caption{The age-time diagram for the model {\eqref{eq:x1}--\eqref{eq:u0}. The  initial age distribution of the predator population is supported on the interval $[0,\tau_{0,\max}]$ marked with a thick green line. The time $T$ is the maximal time until which we intend to compute the solution. Therefore, the interval $[0,\tau_{T,\max}]$, where $\tau_{T,\max}: = \tau_{0,\max}+T$, is the age interval where the predator population at time $T$ is supported.}
}
\label{fig:age-time}
\end{center}
\end{figure}
With this in mind, we rewrite the system \eqref{eq:model-prey}--\eqref{eq:smooth_phi} as an infinite family of ODEs with renewal:
\begin{subequations}
\begin{empheq}[box=\widefbox]{align}
    &\frac{d}{dt} x = x\left(r - ax + s\int_0^{\tau^*} u(t,\tau)d\tau - b\int_{\tau^*}^\infty u(t,\tau) d\tau \right),\label{eq:x1}\\
    &\frac{d}{dt}u(t,t+\tau_0) = -\mu(x,t+\tau_0)u(t,t+\tau_0),\quad \tau_0 \ge 0,\label{eq:u1}\\
    &{\frac{d}{d\tau}u(\tau + t_0,\tau)  = -\mu(x,\tau)u(\tau + t_0,\tau),\quad t_0 > 0},\label{eq:u2}\\
    &u(t,0)  = \int_0^{\infty} B(x,\tau)u(t,\tau) d\tau. \label{eq:u3}\\
    &B(x,\tau) = kx\varphi_{\{\tau \geq\tau^*\}}(\tau) + \tilde{B}(\tau)(1-e^{-\zeta x})\label{eq:Bfun}\\
    &\mu(x,\tau) = gx\varphi_{\{\tau < \tau^*\}}(\tau) + \mu_B(\tau) + \mu_Me^{-\rho x},\label{eq:mufun}\\
    % &\varphi_{\nu,{\{\tau\geq\tau^*\}}}(\tau) = \frac{1}{1+e^{-\nu(\tau-\tau^*)}},\quad 
    % \varphi_{\nu,{\{\tau<\tau^*\}}}(\tau) = \frac{1}{1+e^{-\nu(\tau^*-\tau)}}. \label{eq:sphi}\\
    &x(0)  = x_0,\label{eq:x0}\\
    &u(0,\tau) = u_0(\tau) \label{eq:u0}.
\end{empheq}
\end{subequations}

\subsection{Discretization}
\label{sec:discretization}
The discretization of the model \eqref{eq:u1}--\eqref{eq:u0} serves two purposes: the numerical implementation and the proof of the existence of the solution.

The age line is discretized with step $h$. The prey ODE \eqref{eq:model-prey} and the family of predator ODEs \eqref{eq:num:predator} are discretized in time using the Forward Euler method with the same step $h$. This guarantees the propagation of generations of predators along the corresponding characteristics. The integral \eqref{eq:model-BC} for the number of newborn predators and the integrals for the numbers of the juvenile and adult predators \eqref{eq:y1y2} are calculated using the trapezoidal rule. We aim to compute a numerical solution till the maximal time $T$. Since the initial age distribution has a compact support $[0,\tau_{0,\max}]$, the maximal predator age at time $T$ is $\tau_{T,\max}:=\tau_{0,\max} + T$.
We assume that the maturation age {$\tau^*$}, the maximal initial age $\tau_{0,\max}$, and the maximal time $T$ are $h$-commensurable, i.e., $\tau^\ast = N_1h$, $\tau_{0,\max} = N_0h$, and $T = Nh$ for positive integers $N$, $N_0$, and $N_1$. Then the maximal age at time $T$ is $\tau_{T,\max} = N_2h$ where $N_2 = N_0 + N$.

 The numerical solution to the prey equation at time $nh$ is denoted by $X[n]$. The numerical solution to the predator equation at time $nh$ and age $kh$ is denoted by $U[n,k]$.
The resulting numerical scheme is:
\begin{align}
X[n+1] & = X[n]\left(1 + h\left(r - a X[n] + sY_1[n] - bY_2[n]\right)\right),\label{eq:Xnum} \\
U[n+1,k] &= U[n,k-1]\left(1 - h\mu(X[n],h(k-1)\right),\quad k=1,\ldots,N_2,\label{eq:Unum}\\
U[n+1,0] & =h\Bigg(\frac{1}{2}B(X[n],0)U[n,0] + \frac{1}{2}B(X[n],\tau_{T,\max})U[n,N_2]  \notag \\
& +\sum_{k=1}^{N_2-1} B(X[n],hk)U[n,k])\Bigg),\label{eq:birth_num}\\
Y_1[n+1] &= h\left(\frac{1}{2}U[n,0] + \frac{1}{2}U[n,N_1] +\sum_{k=1}^{N_1-1} U[n,k])\right),\label{eq:Y1num}\\
Y_2[n+1]& = h\left(\frac{1}{2}U[n,N_1] + \frac{1}{2}U[n,N_2] +\sum_{k=N_1+1}^{N_2-1} U[n,k])\right).\label{eq:Y2num}
\end{align}
The initial conditions for this scheme are
\begin{align}
    X[0] &= x_0,\label{eq:Xnum_init}\\
    U[0,k] & = u_0(hk),\quad k = 0,\ldots,N_2,\label{eq:Unum_init}\\
    Y_1[0] & = h\left(\frac{1}{2}u_0(0) + \frac{1}{2}u_0(hN_1) +\sum_{k=1}^{N_1-1} u_0(hk)\right),\label{eq:Y1num_init}\\
Y_2[0]& = h\left(\frac{1}{2}u_0(hN_1) + \frac{1}{2}u_0(hN_2) +\sum_{k=N_1+1}^{N_2-1} u_0(hk)\right).\label{eq:Y2num_init}
\end{align}

\subsection{The main theorem}
\label{sec:main_theorem}
The goal of this section is to establish the existence, uniqueness, and positivity of the solution to the initial-value problem \eqref{eq:x1}--\eqref{eq:u0}.

We define a space $\mathcal{X}$ for pairs $(x,u)$,
\begin{equation}
    \label{eq:Xdef}
    \mathcal{X} = \{(x,u)~|~x\in\mathbb{R},~u\in L_1([0,\tau_{T\max}])\},
\end{equation}
with the norm
\begin{equation}
    \label{eq:normX_main}
    \|(x,u)\| = |x| + \int_{0}^{\tau_{T,\max}} |u|d \tau.
\end{equation}
Hence, $\mathcal{X}$ with the norm \eqref{eq:normX} is Banach because it is a direct product of Banach spaces $\mathbb{R}$ and $L_1([0,\tau_{T,\max}])$.

\begin{theorem}
\label{thm1}
    Let the initial prey population size $x_0$ be positive. Let the initial predator age density $u_0(x)$ be nonnegative, piecewise smooth, and compactly supported on $[0,\tau_{0,\max}]$. Let the constants $r$, $a$, $s$, $b$, and $\mu_M$ be positive and the constants $k$ and $g$ be nonnegative. Furthermore, let the function $\tilde{B}(\tau)$ be continuous, positive, and bounded, and the function $\mu_B(\tau)$ be continuous, positive, and bounded at any finite $\tau$. Let $0\le t\le T$. Then there exists a time interval $[0,T^{\ast}]$, $T^*\le T$,  where the solution to \eqref{eq:x1}--\eqref{eq:u0} exists and is unique and positive. 
    
    { Moreover, the numerical solution by the scheme \eqref{eq:Xnum}--\eqref{eq:Y2num_init} converges to the exact solution to system \eqref{eq:x1}--\eqref{eq:u0} as the time and age step $h\rightarrow 0$, and the numerical error of the age distribution decays as $O(h)$ at any fixed time $t$ in the sense of the norm defined in \eqref{eq:normX_main}.}
\end{theorem}

The proof of Theorem \ref{thm1} is long and tedious. 
We sketch it now and elaborate it in Appendix \ref{sec:proof}.

The proof of the existence of the solution relies on the convergence of the numerical solution to a limit solution as the step $h$ tends to zero. 
% To carry out this construction, we devise a space $\mathcal{X}$ for pairs $(x,u)$,
% \begin{equation}
%     \label{eq:Xdef}
%     \mathcal{X} = \{(x,u)~|~x\in\mathbb{R},~u\in L_1([0,\tau_{T\max}])\},
% \end{equation}
% with the norm
% \begin{equation}
%     \label{eq:normX}
%     \|(x,u)\| = |x| + \int_{0}^{\tau_{T,\max}} |u|d \tau.
% \end{equation}
% Hence, $\mathcal{X}$ with the norm \eqref{eq:normX} is Banach because it is a direct product of Banach spaces $\mathbb{R}$ and $L_1([0,\tau_{T,\max}])$.
We show that the right-hand side of \eqref{eq:x1}--\eqref{eq:u3} is Lipschitz with respect to the norm \eqref{eq:normX} in a bounded region 
\begin{equation}
    \label{eq:Omega}
    \Omega_R = \{(x,u)\in \mathcal{X}~|~\|(x,u)\|\le R\},
\end{equation}
where $R$ is a positive constant. Next, we fix step $h$ and compute the numerical solution $(X_h,U_h)$ till either time $T^*$, which is the minimum of $T$, and the exit time from the region $\Omega_R$. 
%Let  $t = nh$ be any discrete time not greater than $T^*$. 
In addition, we compute a sequence of numerical solutions $(X_{h2^{-p}},U_{h2^{-p}})$ till time $t = nh\le T^*$ with steps $h2^{-p}$, $p = 1,2,\ldots$ and prove that
\begin{equation}
    \label{eq:Oh}
    \|(X_{h2^{-p}}[2^pn],U_{h2^{-p}}[2^pn,\cdot]) - (X_{h}[n],U_{h}[n,\cdot])\| = O(h),\quad p= 1,2,\ldots.
\end{equation}
Hence, this sequence of numerical solutions at time $t = nh$ is Cauchy. Therefore, it converges to an element $(x(t),u(t))\in\mathcal{X}$. 

The uniqueness follows from the Lipschitz continuity of the right-hand side of \eqref{eq:x1}--\eqref{eq:u3} and Gr\"{o}nwall's inequality. 

The proof of positivity is conducted in four stages. First, we prove that $x(t)$ is positive on $[0,T^*]$. Second, we argue that $u(t,\tau)$ is positive for $t<\tau$. Third, we show that $u(t,0)$ is positive. Finally, we conclude that $u(t,\tau)$ for $t > \tau$ is positive.

\begin{remark}
    The $x$-component and $u$ along the characteristics are continuously differentiable as immediately follows from time stepping. The function $u(t,\tau)$ has a discontinuity along the characteristic $t = \tau$ unless 
\begin{equation*}
    u_0(0) = \int_0^\infty B(x(0),\tau)u_0(\tau)d\tau.
\end{equation*}
If $u_0(\tau)$ has discontinuities at $\tau_1,\ldots,\tau_S$, $u(t,\tau)$ also has discontinuities propagating along the characteristics $t = \tau - \tau_j$, $1\le j\le S$. Therefore, the solution is well-defined as the solution to \eqref{eq:x1}--\eqref{eq:u0} everywhere except for the lines $t = \tau$ and $t=\tau - \tau_j$, $1\le j\le S$.
\end{remark}

\section{Reduction to an ODE model}
\label{sec:conn2ODE}
The goal of this section is to reduce the model \eqref{eq:model-prey}--\eqref{eq:smooth_phi} to an ODE model and compare the resulting ODE with \eqref{eq:LLW2022}~(\cite{LLW2022}). 

ODEs governing the population sizes of juvenile and adult predators are obtained by integrating
PDE \eqref{eq:model-predator} in  age $\tau$ over the intervals $[0,\tau^\ast)$ and $[\tau^\ast,\infty)$ respectively: \begin{align}
  &\frac{d}{dt}\int_0^{\tau^\ast} u(t,\tau)d  \tau + u(t,\tau^\ast) - u(t,0) = -\int_0^{\tau^\ast}\mu(x,\tau)u(t,\tau)d  \tau\label{eq:y11}, \\ 
    & \frac{d}{dt}\int_{\tau^\ast}^\infty u(t,\tau)d  \tau - u(t,\tau^\ast) = -\int_{\tau^\ast}^\infty\mu(x,\tau)u(t,\tau)d  \tau\label{eq:y21}.
\end{align}
The integrals in the right-hand sides of \eqref{eq:y11} and \eqref{eq:y21} are expanded using the definition \eqref{eq:death-function} of the death rate function. In the calculations below, we replace $\varphi_{\{\tau < \tau^*\}}(\tau)$ with the indicator function of $[0,\tau^*)$ {for the sake of simplicity}:
\begin{align}
    \int_0^{\tau^\ast}\mu(x,\tau)u(t,\tau)d  \tau &= gxy_1 + \int_0^{\tau^*}\mu_B(\tau) u(t,\tau) d  \tau + \mu_M e^{-\rho x} y_1,\label{eq:mu-1}\\
    \int_{\tau^\ast}^\infty \mu(x,\tau)u(t,\tau)d  \tau & = \int_{\tau^*}^\infty \mu_B(\tau) u(t,\tau) d  \tau + \mu_M e^{-\rho x} y_2. \label{eq:mu-2}
\end{align}
The expression for $u(t,0)$ is obtained using the definition of the birth rate function \eqref{eq:birth-function} and the assumption that $\tilde{B}(\tau)$ is zero on $[0,\tau^*)$:
\begin{align}
\label{eq:ut0}
    u(t,0) = \int_0^{\infty}B(x,\tau)u(t,\tau)d  \tau =  kxy_2 + (1-e^{-\zeta x})\int_{\tau^\ast}^\infty\tilde{B}(\tau)u(t,\tau)d \tau. 
\end{align}

Recalling the definitions of $y_1$ and $y_2$,  \eqref{eq:y1y2}, and using \eqref{eq:y11}--\eqref{eq:ut0} we obtain the following ODEs for the juvenile, $y_1$, and adult, $y_2$, predator population sizes
\begin{align}
    & y_1' + u(t,\tau^\ast) - kxy_2 - (1-e^{-\zeta x})\int_{\tau^\ast}^\infty\tilde{B}(\tau)u(t,\tau)d \tau \notag \\
     = &- gxy_1 - \int_0^{\tau^*}\mu_B(\tau) u(t,\tau) d  \tau - \mu_M e^{-\rho x} y_1,\label{eq:y12}\\
    &y_2' - u(t,\tau^*) = -\int_{\tau^*}^\infty \mu_B(\tau) u(t,\tau) d  \tau - \mu_M e^{-\rho x} y_2. \label{eq:y22}
\end{align}
%To eliminate integrals in \eqref{eq:y12} and \eqref{eq:y22}, we replace $\mu_B(\tau)$ and $\tilde{B}(\tau)$ with constants.
The term {$u(t,\tau^*)$} is the number of juvenile predators becoming adults per time unit. Therefore, we define the transition rate from juvenile to adult predator as
   { 
\begin{align}
\label{eq:D_ode}
D = \frac{\bar{u}(\tau^*)}{y_1},
\end{align}
where $\bar{u}(\tau^*)$ is the value of $u(t,\tau^*)$ at the equilibrium.}
To eliminate integrals in \eqref{eq:y12} and \eqref{eq:y22}, we introduce age-averaged birth rate $b_2$ for adult predators and age-averaged death rates $m_1$ and $m_2$ for juvenile and adult predators respectively:
\begin{align}
     b_2 &:= \frac{\int_{\tau^*}^\infty \tilde{B}(\tau)u(t,\tau)d  \tau}{\int_{\tau^*}^\infty u(t,\tau) d  \tau}, \label{eq:b2_ode}\\
     m_1 &:= \frac{\int_0^{\tau^*} \mu_B(\tau)u(t,\tau)d  \tau}{\int_0^{\tau^*} u(t,\tau) d  \tau}, \label{eq:m1_ode} \\
     m_2 &:= \frac{\int_{\tau^*}^\infty \mu_B(\tau)u(t,\tau)d  \tau}{\int_{\tau^*}^\infty u(t,\tau) d  \tau}. \label{eq:m2_ode}
\end{align}
The resulting ODE system becomes
\begin{subequations}
\begin{empheq}[box=\widefbox]{align}
    & x' = x(r - ax + sy_1 - by_2), \label{eq:Luis_prey} \\ 
    & y_1'= kxy_2 + (1-e^{-\zeta x})b_2y_2 - gxy_1 - m_1y_1 - \mu_M e^{-\rho x}y_1 - Dy_1, \label{eq:Luis_y1}\\
    & y_2' = Dy_1 - m_2y_2 - \mu_M e^{-\rho x} y_2. \label{eq:Luis_y2}
\end{empheq}
\end{subequations}

ODE model \eqref{eq:Luis_prey}--\eqref{eq:Luis_y2} exposes to connection with ODE \eqref{eq:LLW2022}~ \cite{LLW2022}. To obtain  \eqref{eq:LLW2022} from \eqref{eq:Luis_prey}--\eqref{eq:Luis_y2}, one needs to remove the natural birth rate term proportional to $b_2y_2$ and the hunger death rate terms proportional to $\mu_M e^{-\rho x}$.  In turn, to obtain the ODE system \eqref{eq:Luis_prey}--\eqref{eq:Luis_y2} from the proposed model \eqref{eq:model-prey}--\eqref{eq:smooth_phi}, we replaced the birth and death rate functions with their age averages and replaced the smoothed indicator functions with the sharp ones.

\section{Reduction to a DDE model}
\label{sec:conn2DDE}
We also can reduce the model \eqref{eq:model-prey}--\eqref{eq:smooth_phi} to a delayed differential equation (DDE). We proceed as in Section \ref{sec:conn2ODE} except for a different approximation of the term $u(t,\tau^*)$ in \eqref{eq:y12}--\eqref{eq:y22}. 
% Therefore, our starting point for the derivation of a DDE model is 
% \begin{align}
%     & x' = x(r - ax + sy_1 - by_2), \label{eq:xDDE0} \\ 
%     & y_1' =  kxy_2 + (1-e^{-\zeta x})b_2y_2 - gxy_1 - m_1y_1 - \mu_M e^{-\rho x}y_1 -u(t,\tau^*), \label{eq:y1DDE0}\\
%     & y_2' = - m_2y_2 - \mu_M e^{-\rho x} y_2 + u(t,\tau^*),\label{eq:y2DDE0}
% \end{align}
% where the constants $b_2$, $m_1$ and $m_2$ are defined by \eqref{eq:b2_ode}, \eqref{eq:m1_ode}, and \eqref{eq:m2_ode} respectively.
Since the death rate $\mu(x,\tau)$ depends on $x$ and $x$ depends on $u$, we cannot integrate the predator PDE \eqref{eq:model-predator} along the characteristics $\tau = t+\tau_0$ and $\tau = t - t_0$ explicitly. Note that the death rates in (\cite{DelayPredatorMishra}) and (\cite{DelayPredatorMohr}) were assumed independent of the prey population size -- that's why the integration was readily done in these works.
Instead, we will apply the trapezoidal rule in $x$ to obtain $u(t,\tau^*)$ { by integrating along characteristics. 

Let $t\ge\tau^* $. Then the characteristic along which we will integrate can be parametrized as $t - \tau^* + s$, $0\le s \le \tau^*$, where $t-\tau^*$ is the birth time of the corresponding generation.}   Dividing {\eqref{eq:u2}} by $u(t,\tau)$ and replacing the smooth indicator function with the sharp one we get:
{
\begin{align}
\label{eq:predDDE}
    \frac{d}{ds}\log u(t-\tau^*+s,s) = -gx(t-\tau^*+s) -\mu_B(s) - \mu_Me^{-\rho x(t-\tau^*+s)}.
\end{align}
Integrating \eqref{eq:predDDE} from $0$ to $\tau^*$ in $s$  using the trapezoidal rule for $x$ we obtain:
\begin{align}
    u(t,\tau^*) &= u(t-\tau^*,0)\exp\left(-\frac{g\tau^*}{2}[x(t-\tau^*) + x(t)] - M_B \right.\notag\\
    &-\left. \frac{\mu_M\tau^*}{2}\left(e^{-\rho x(t-\tau^*)} + e^{-\rho x(t)}\right)\right).
\end{align}
}
The initial density $u(t-\tau^*,0)$ is found as in \eqref{eq:ut0} except for $t$ is replaced with $t-\tau^*$. The symbol $M_B$ denotes the total death rate for juvenile predators:
\begin{equation}
    \label{eq:MB}
    M_B: = \int_0^{\tau}\mu_B(\tau)d\tau.
\end{equation}

{
We obtain $u(t,\tau^*)$ in the case $t < \tau^*$ in a similar manner.}
%, we integrate \eqref{eq:predDDE} starting from $t = 0$. 

Thus, the DDE model is given by
\begin{subequations}
\begin{empheq}[box=\widefbox]{align}
    & x' = x(r - ax + sy_1 - by_2), \label{eq:xDDE0} \\ 
    & y_1'= kxy_2 + (1-e^{-\zeta x})b_2y_2 - gxy_1 - m_1y_1 - \mu_M e^{-\rho x}y_1 - u(t,\tau^*), \label{eq:y1DDE0}\\
    & y_2' =  - m_2y_2 - \mu_M e^{-\rho x} y_2 + u(t,\tau^*),\label{eq:y2DDE0}\\
& \text{\bf if} ~t\ge \tau^{\ast}:\notag \\
 & u(t,\tau^*) = \left(kx(t-\tau^*)y_2(t-\tau^*) + \left(1-e^{-\zeta x(t-\tau^*)}\right)b_2y_2(t-\tau^*)\right)\notag \\
       & \times\exp\left(-\frac{{g}\tau^*}{2}[x(t-\tau^*) + x(t)] - M_B \right.\notag\\
       & \left.- \frac{\mu_M\tau^*}{2}\left(e^{-\rho x(t-\tau^*)} + e^{-\rho x(t)}\right)\right), \label{eq:uttau*0}\\
& \text{\bf if} ~t < \tau^{\ast}:\notag \\  
&  u(t,\tau^*) =  u_0(\tau^*-t)\exp\left(-\frac{{g}\tau^*}{2}[x(\tau^*-t) + x(t)] - M_B \right.\notag\\
    &-\left. \frac{\mu_M\tau^*}{2}\left(e^{-\rho x(\tau^*-t)} + e^{-\rho x(t)}\right)\right),\label{eq:uttau*1}    
\end{empheq}
\end{subequations}
where the constants $b_2$, $m_1$ and $m_2$ are defined by \eqref{eq:b2_ode}, \eqref{eq:m1_ode}, and \eqref{eq:m2_ode} respectively.

{ 
In contrast to ODE model \eqref{eq:Luis_prey}--\eqref{eq:Luis_y2}, the DDE model \eqref{eq:xDDE0}--\eqref{eq:uttau*1} explicitly contains the maturation age parameter $\tau^*$. Furthermore, this parameter enters the system of equations for the equilibrium of \eqref{eq:xDDE0}--\eqref{eq:uttau*1}, where we set the right-hand sides to zero and $x$, $y_1$, and $y_2$ to their equilibrium values: note the argument of the exponent in the expression for $u(t,\tau^*)$ for $t\ge \tau^*$. Therefore, the equilibria of the ODE and DDE models will not be exactly the same. 
}

\section{Numerics}
\label{sec:numerics}
We discretize the proposed model in the form \eqref{eq:x1}--\eqref{eq:u0} as described in Section
\ref{sec:discretization}. The resulting system is \eqref{eq:Xnum}--\eqref{eq:Y2num} with initial conditions \eqref{eq:Unum_init}--\eqref{eq:Y2num_init}. The integration of the ODE model \eqref{eq:Luis_prey}--\eqref{eq:Luis_y2} and the DDE model \eqref{eq:xDDE0}--\eqref{eq:uttau*1} have been done using Matlab's {\tt ode45} and {\tt dde23} respectively. Our codes are available on GitHub~(\cite{mar1akc}).

\subsection{Settings}
For the sake of computational convenience, we introduce the maximal predator lifespan $L = 30$. This is equivalent to having the base death rate function $\tilde{B}(\tau)$ equal to infinity for $\tau > L$. 
In Section \ref{sec:model}, we did not specify the base birth rate function $\tilde{B}(\tau)$ and the base death rate function $\mu_B(\tau)$ to keep the model more general. Now, we define these functions as
\begin{align}
      \tilde{B}(\tau) &=  \begin{cases}0,&\tau <\tau^*\\
          b_p(e^{-b_{ep}(\tau-\tau^*)}+1),&\tau\ge \tau^*
                \end{cases},\notag \\
   \mu_B(\tau) &= d_p e^{d_{ep}(\tau - L)}.
\end{align}
{This lifespan $L = 30$ is large enough so that the predator's age density is negligibly small at $\tau = L$. }

The initial condition for the prey is set to 
\begin{equation}
    x(0) = 0.5.
    \label{eq:x_init}
\end{equation} 
The initial age density for the predator is chosen to be
\begin{align}
    u_0(\tau) = \begin{cases}
        0.1,&\tau \in [0,\tau^*)\\
        0.05,&\tau\in[\tau^*,L]\\
        0,&\text{otherwise}
    \end{cases}.
    \label{eq:u_init}
\end{align}
These settings leave us with 15 parameters { whose biological meaning and values are listed} in Table \ref{table2}. {The selected parameter values, when applicable, are borrowed from ~(\cite{LLW2022}) and are not attached to any particular predator-prey system. The value of $b$, the consumption rate of prey by adult predators, is increased from 0.4 to 0.8. The ranges for the parameters are chosen around those selected values to keep the order of magnitude of the selected value. The exception is the smoothness parameter of the indicator function $\nu$. When $\nu = 100$, the indicator function is almost the Heaviside function. When $\nu = 1$, it changes gradually. For example, {$\varphi_{\{\tau\ge 1\}}(0)$} takes values of approximately $0.27$ and $3.7\cdot10^{-44}$ for $\nu = 1$ and 100, respectively.}

\begin{sidewaystable}
\caption{Parameters of the proposed model. { The values of $r$, $a$, $k$, and $s$ are borrowed from~(\cite{LLW2022}).}}\label{table2}
\begin{tabular*}{\textheight}{@{\extracolsep\fill}cccc}
\toprule%
%& \multicolumn{3}{@{}c@{}}{Element 1\footnotemark[1]}& \multicolumn{3}{@{}c@{}}{Element\footnotemark[2]} \\\cmidrule{2-4}\cmidrule{5-7}%
Parameter & Biological meaning	&Selected Value & Range \\
\midrule
$\tau^* $ & Maturation age of predator &  1 & [0,2] \\
 $g$ & Consumption rate of juvenile predator by prey& 0.2 & [0,1]   \\
   $\nu $ & Smoothness of the indicator function & 1 and 100 & [1,100]    \\
  $r$ & Intrinsic growth rate of prey & 0.4 & [0.1,0.6] \\ 
 $a$ & Intraspecific competition rate of prey & 0.01 &[0.005,0.05]\\ 
 $k$ & Reproduction rate of predator & 0.3 &[0.1,1] \\
 $b$ & Consumption rate of prey by adult predator & 0.8 & [0.1,1] \\ 
 $s$ & Growth rate of prey due to eating juvenile predators& 0.2 &  [0,1,1]  \\
$\zeta$ & Birth rate, exponential factor & 10 & [5,20]  \\
$\mu_M$ & Prefactor for the hunger death rate of predator & 1 & [0.5,5]  \\
$\rho$ & Exponential factor of the hunger death rate of predator & 5 & [3,7] \\
$d_p$ & Prefactor of the base death rate of predator & 0.4 & [0.1,1] \\
 $b_p$ & Prefactor of the base birth rate of predator  & 0.05 & [0.03,0.1]  \\
$b_{ep}$ & Exponential factor of the base birth rate of predator  & 0.1 & [0.05,0.15]  \\
$d_{ep}$ & Exponential factor of the base death rate of predator & 0.1 & [0.05,0.15]  \\
\bottomrule
\end{tabular*}
\end{sidewaystable}

\subsection{Finding stable and unstable equilibria}
\label{sec:find_equilib}
Stable and unstable equilibria of \eqref{eq:Xnum}--\eqref{eq:Y2num} are found using Newton's method. We define a vector-function $F(X^*,U^*)$, $F:\mathbb{R}^{N_2 + 2}\rightarrow \mathbb{R}^{N_2 + 2}$ by subtracting the right-hand sides of \eqref{eq:Xnum} and \eqref{eq:Unum} from their left-hand sides and replacing $X[\cdot]$ with $X^*$ and $U[\cdot,k]$ with $U^*[k]$. Then, we set $F(X^*,U^*) = 0$ and solve this equation by Newton's iteration. The $(N_2 + 2)\times(N_2 + 2)$ Jacobian matrix $J$  is approximated by forward differences
\begin{align}
    J_{:,0} &= \frac{1}{\epsilon} \left(F(X^* + \epsilon, U^*) - F(X^*,U^*)\right), \label{eq:J1}\\
    J_{:,k+1} &= \frac{1}{\epsilon} \left(F(X^*, U^* + \mathbf{e}_k\epsilon  U^*[k]) - F(X^*,U^*)\right), \quad k= 0,\ldots, N_2\label{eq:Jk}
\end{align}
where  $\epsilon = 10^{-6}$, ``:" denotes all entries from $0$ to $N_2+1$, and $\mathbf{e}_k$ is the standard $k$th basis vector in $\mathbb{R}^{N_2+1}$.
The warm start is obtained by a long integration of \eqref{eq:Xnum}--\eqref{eq:Y2num} at a set of parameter values where the equilibrium is stable and then by a continuation along a parameter under investigation. The stability of the equilibrium was assessed by computing the eigenvalues of the Jacobian matrix $J$ and checking that all of them had negative real parts.

Equilibria of the ODE and DDE systems were found using Matlab's {\tt lsqnonlin}, a nonlinear least-squares solver. The stability of the equilibria of the ODE model was checked by computing the eigenvalues of the Jacobian. The stability of the equilibria of the DDE model was checked by perturbing the equilibrium and examining where the system settled after a long time of integration.

\subsection{Finding limit cycles}
\label{sec:find_limit_cycles}
Periodic coexistence attractors exist at parameter values where the coexistence equilibrium is unstable. Periodic attractors, or limit cycles, are fixed points of Poincar\'e maps on appropriately chosen Poincar\'e sections. As a Poincar\'e section for the discretized age-structured model \eqref{eq:Xnum}--\eqref{eq:Y2num}, we choose the hyperplane in $\mathbb{R}^{N_2+2}$ $X = X^*$, where $X^*$ is the first component of the unstable coexistence equilibrium. The Poincar\'e map $G: \mathbb{R}^{N_2+1}\rightarrow\mathbb{R}^{N_2+1}$ maps the point where a trajectory of \eqref{eq:Xnum}--\eqref{eq:Y2num} crosses the Poincar\'e section to the next crossing in the same direction. The point of intersection is found using time integration of \eqref{eq:Xnum}--\eqref{eq:Y2num} and linear interpolation. The fixed points of the Poincar\'e map are found using the Levenberg-Marquardt nonlinear solver (\cite{NoceWrig06}, Section 10.3). The objective function for the Levenberg-Marquardt method is defined as $f(U) = \tfrac{1}{2}\| U - G(U)\|_2^2$. The Jacobian of $U- G(U)$ is found using forward differences similar to \eqref{eq:Jk}. The Levenberg-Marquardt method reliably found limit cycles, while Newton's method failed to do so in our settings because it required warmer starts than it was practical for us to provide.  

The limit cycles for the ODE system were found similarly. Long-time integration found the limit cycles for the DDE system.

%%%%%%%%%%%%%%%%%%%%%%%%%%%%%%%%%%%%%%%%%%%%

\section{Results}
\label{sec:results}
We subjected the age-structured model \eqref{eq:model-predator}--\eqref{eq:smooth_phi} to a detailed numerical investigation. First, we performed the Latin Hypercube Sampling (\cite{LHS1977,LHS1979,LHS1981}) in the 15-dimensional parameter space. Then we conducted the Linear Discriminant Analysis (LDA) (\cite{DHS2001}) and found which parameters affect the type of attractor the most. Next, we carried out a thorough investigation of the dynamical behavior of the age-structured model \eqref{eq:model-predator}--\eqref{eq:smooth_phi} depending on the maturation age $\tau^*$ and the consumption rate of the juvenile predator by the prey $g$, the two most important parameters. {Finally, we repeated the investigation of the dynamical behavior of the corresponding ODE,  \eqref{eq:Luis_prey}--\eqref{eq:Luis_y2},  and DDE, \eqref{eq:xDDE0}--\eqref{eq:uttau*1}, models in the same region of the $(\tau^*,g)$-plane and compared the calculated phase diagrams for all three models.}

\subsection{Latin Hypercube Sampling}
\label{sec:LHS}

The Latin Hypercube Sampling aims to examine the types of dynamical behavior that may occur in the system depending on its parameters.
As we have shown in Section \ref{sec:main_theorem}, the solution to \eqref{eq:model-predator}--\eqref{eq:smooth_phi} exists, is unique, and is positive. 

Latin Hypercube Samples were generated in 15-dimensional space formed by the direct product of the intervals specified in Table \ref{table2}. The initial condition for each parameter set was the same as in \eqref{eq:x_init} and \eqref{eq:u_init}. At each parameter set, the numerical integration continued either till time $T_{\max} = 500$  or till any of the solution components exceeded 1000.  In the latter case, we declared a blow-up. The choice of the time step for Latin Hypercube is a trade-off between reliability and runtime. We monitored the positivity of the solution components throughout the run. Negative values of $u$ or $x$ result from numerical errors and mean that one needs to reduce the time step. By such trial and error, we chose the time step $\Delta t = 0.005$. 

{ We conducted two runs of Latin Hypercube Sampling of 10,000 samples each. Four types of long-term behavior were observed:
\begin{equation*}
\begin{array}{c|c|c}
& \text{Run}~1 & \text{Run 2} \\
\hline
\text{Blow-Up} & 351 & 433 \\
\text{Equilibrial Coexistence Attractor}& 2217 & 2215\\
\text{Periodic Coexistence Attractor}& 1937 & 1902\\
\text{Predator-Free Attractor}& 5495 & 5450\end{array}
\end{equation*}
These types of behavior are illustrated and a summarising bar graph is presented in Fig. \ref{fig:LHS}.}
\begin{figure}[h!] 
    \centering    \includegraphics[width=0.9\textwidth]{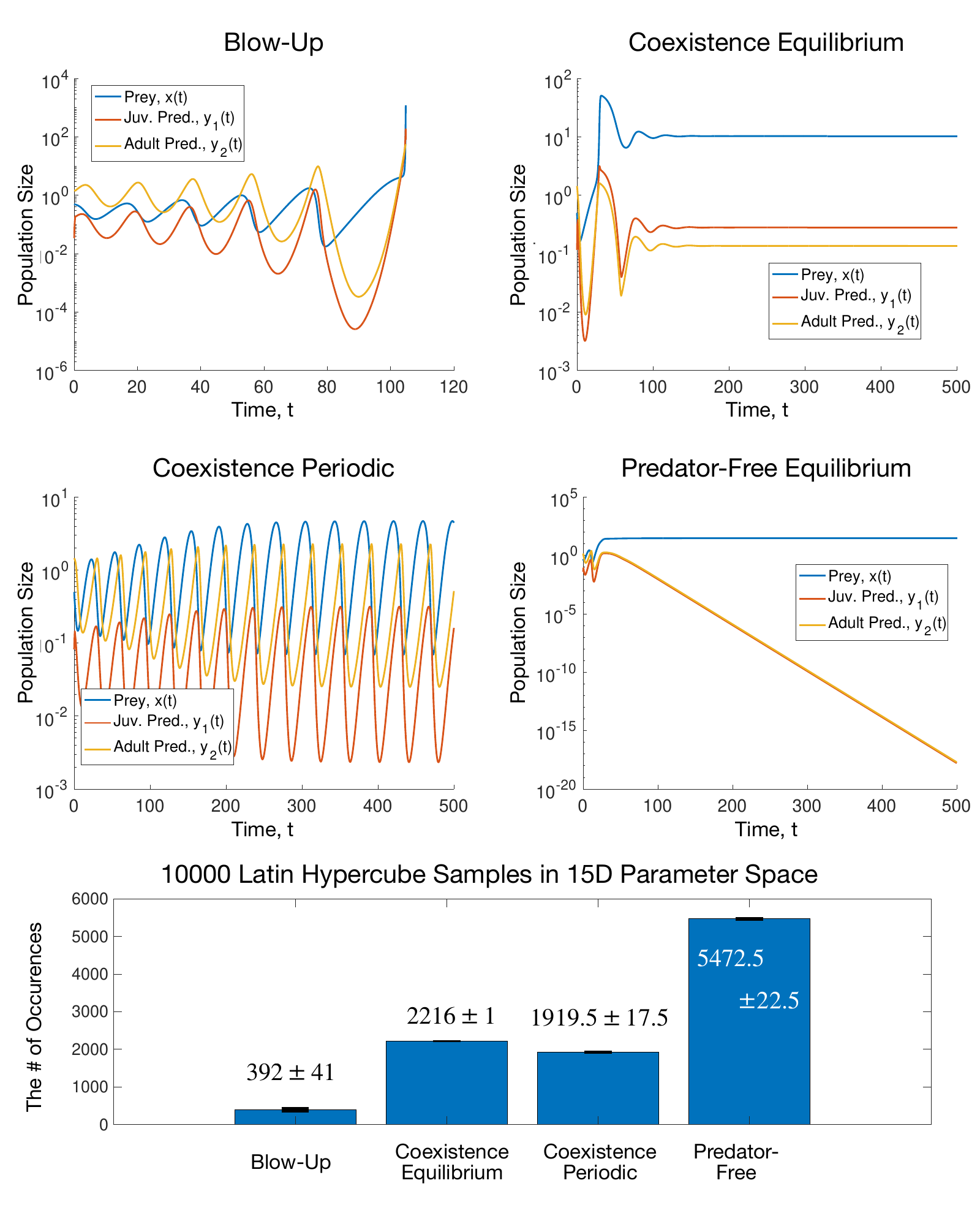}
    \caption{The results of Latin Hypercube Sampling. The intervals for the parameters are specified in Table \ref{table2}.}
    \label{fig:LHS}
\end{figure}

{ The Coexistence Equilibrium, Coexistence Periodic, and Predator-Free Attractors are biologically relevant types of behavior, while the blow-up is not. The blow-up scenarios are characterized by oscillations of the prey and predator population sizes of increasing amplitude. Such growing oscillations seem to be possible due to the unbounded predator birth-rate function $B(x,t)$ in equation \eqref{eq:birth-function}, the unbounded prey birth rate $r + sy_1$ in the prey ODE \eqref{eq:model-prey}, and time delay effects. { A positive feedback loop may arise in the model when the consumption rate of the juvenile predator, $g$, is small, the maturation age $\tau^*$ is large, and the growth rate of the prey due to eating juvenile predators, $s$, is large. Then, the more prey there are, the more prey is eaten by the adult predators, the more the juvenile predators are born, and the more prey grows due to feeding on juvenile predators without damaging them much.}
\color{black}

More biologically plausible birth rates must contain saturation, as no species can reproduce infinitely fast. For example, one may consider the following modifications to the prey ODE and the predator birth rate function,  \eqref{eq:model-prey} and \eqref{eq:birth-function}, respectively:
\begin{align}
    x' & = x\left(r-ax+s\hat{y}_1\tanh\left(\frac{y_1}{\hat{y}_1}\right) - by_2\right),\label{eq:x_saturated}\\
    B(x,\tau)& = k\hat{x}\tanh\left(\frac{x}{\hat{x}}\right)\varphi_{\tau\ge\tau^*}(\tau) + \tilde{B}(1-e^{-\zeta x}), \label{eq:birth_rate_saturated}
\end{align}
where $\hat{y}_1$ and $\hat{x}$ are parameters. The function ${\tt \tanh}$, { a popular activation function in neural network-based smooth solution models to PDEs used, e.g., in (\cite{LiLinRen})}, vanishes at zero, is approximately equal to its argument on $[0,1]$, and nearly reaches its upper bound when its argument is greater than 4. { The resulting prey birth rate term, $s\hat{y}_1\tanh\left(\sfrac{y_1}{\hat{y}_1}\right)$, is approximately equal to $sy_1$ when $y_1\le \hat{y}_1$, and is bounded from above by $s\hat{y}_1$, when $y_1\rightarrow\infty$. The predator birth rate term $k\hat{x}\tanh\left(\sfrac{x}{\hat{x}}\right)$ behaves likewise.}
We set $\hat{x} = 20$ and $\hat{y}_1 = 10$ and repeated Latin Hypercube Sampling in the 15D parameter space with parameter ranges from Table \ref{table2}. The system with saturated birth rates settled at Equilibrial Coexistence Attractor, Periodic Coexistence Attractor, and Predator-Free Attractor in 1631, 2529, and 5840 cases, respectively. No blow-up was registered. 

{The existence, uniqueness and positivity of the age-structured model with modifications \eqref{eq:x_saturated} and \eqref{eq:birth_rate_saturated} can be proven as it is done for our original age-structured model \eqref{eq:x1}--\eqref{eq:u0}. The proof of the boundedness of the solution to this modified age-structured model can be outlined as follows. The boundedness of the prey population follows from the comparison principle for ODEs and the dominance of the right-hand side of \eqref{eq:x_saturated} by the function $x(r+s\hat{y_1}) -ax^2$ for all positive $x$. Hence, the solution $x(t)$ is bounded by $a^{-1}(r + s\hat{y}_1)$. The predator birth rate function \eqref{eq:birth_rate_saturated} is bounded by $\bar{B}:=k\hat{x} + B$ from above. The predator death rate is bounded from below by $\bar{\mu}: = d_p\exp(-d_{ep}L)$ in our model. Therefore, the predator population will be dominated by the solution to the KMRE and hence will be bounded. We leave further improvements of the age-structured model and a rigorous proof of its properties for future work.}

%Checking analytically that the solutions to the initial-value problem for the age-structured model with saturated birth rates \eqref{eq:x_saturated}--\eqref{eq:birth_rate_saturated} are bounded is difficult due to the complex interplay between the prey and predator and time delay effects. We leave it to future work.

%  Saturated model: eq = 1631, per = 2529, pred-free = 5840

\subsection{Linear Discriminant Analysis}
Linear Discriminant Analysis (LDA) (or Multiple Discriminant Analysis (MDA)) is a classical linear supervised learning tool (\cite{DHS2001}). It aims at finding a low-dimensional subspace such that data from different categories projected onto this subspace are separated the most while the projected data from the same categories are clustered the most. A description of LDA is found in Appendix \ref{sec:AppB}. 

We use LDA to discriminate parameter sets with four different types of long-term behavior and find which parameters affect the type of long-term behavior the most. Since there are four categories, the maximal dimension of the optimal subspace is three. We project the parameters onto an optimal plane spanned by the two dominant eigenvectors of the generalized eigenvalue problem described in  Appendix \ref{sec:AppB} -- see Fig. \ref{fig:LDA}.
\begin{figure}[h!] 
    \centering    \includegraphics[width=\textwidth]{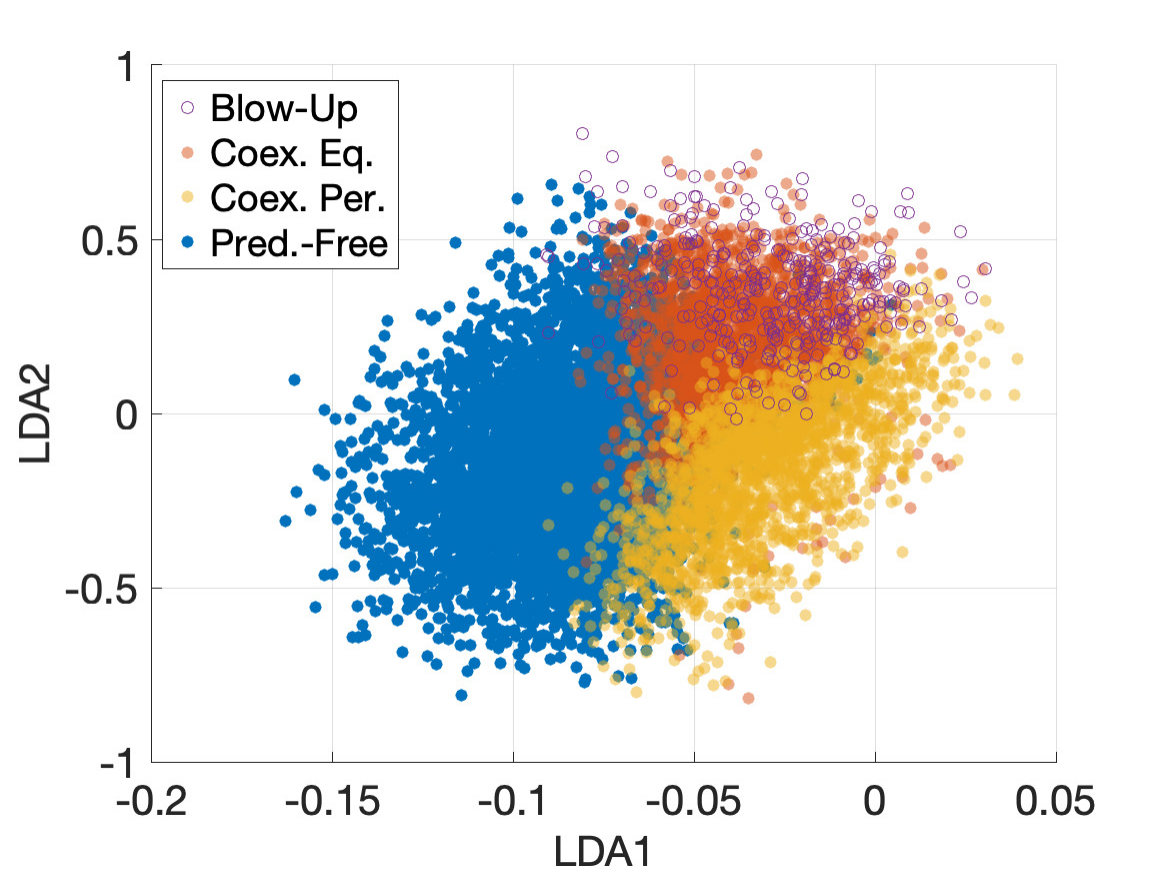}
    \caption{LDA projection from the 15D parameter space onto a 2D optimal subspace where the parameter sets leading to four different types of long-term behavior (Blow-Up, Coexistence Equilibrium, Coexistence periodic, and Predator-free attractor) are separated the most.}
    \label{fig:LDA}
\end{figure}
Note that the eigenvectors of the LDA generalized eigenvalue problem are not orthogonal. We orthonormalized them by one step of the Gram-Schmidt procedure.

To understand which parameters affect the type of long-term behavior the most, we do the following calculation. Let $X$ be $n\times d$ matrix whose rows are the parameter sets generated by the Latin Hypercube Sampling, $n = 10,000$, and $d = 15$ in our case. Let $R$ be a diagonal matrix whose diagonal entries $R_i$ are the differences between the maximal and minimal values of parameter $i$, $i=1,\ldots,d$. Then, the matrix $X$ can be decomposed as 
\begin{equation}
    \label{eq:Xdecomp}
    X = ZR + 1_{n\times 1}x_{\min},
\end{equation}
where $x_{\min}$ is a row vector whose entries are the lower bounds for the parameter ranges, and $Z$ is an $n\times d$ matrix whose entries take values between zero and 1. Therefore, to account for ranges, we take the $d\times 2$ LDA projection matrix $W$ with orthonormalized columns and multiply it by $R$ on the left. Then, we plot bar graphs of $RW$ and display the result in Fig. \ref{fig:LDA_vec}. 
\begin{figure}[h!] 
    \centering    
    \includegraphics[width=\textwidth]{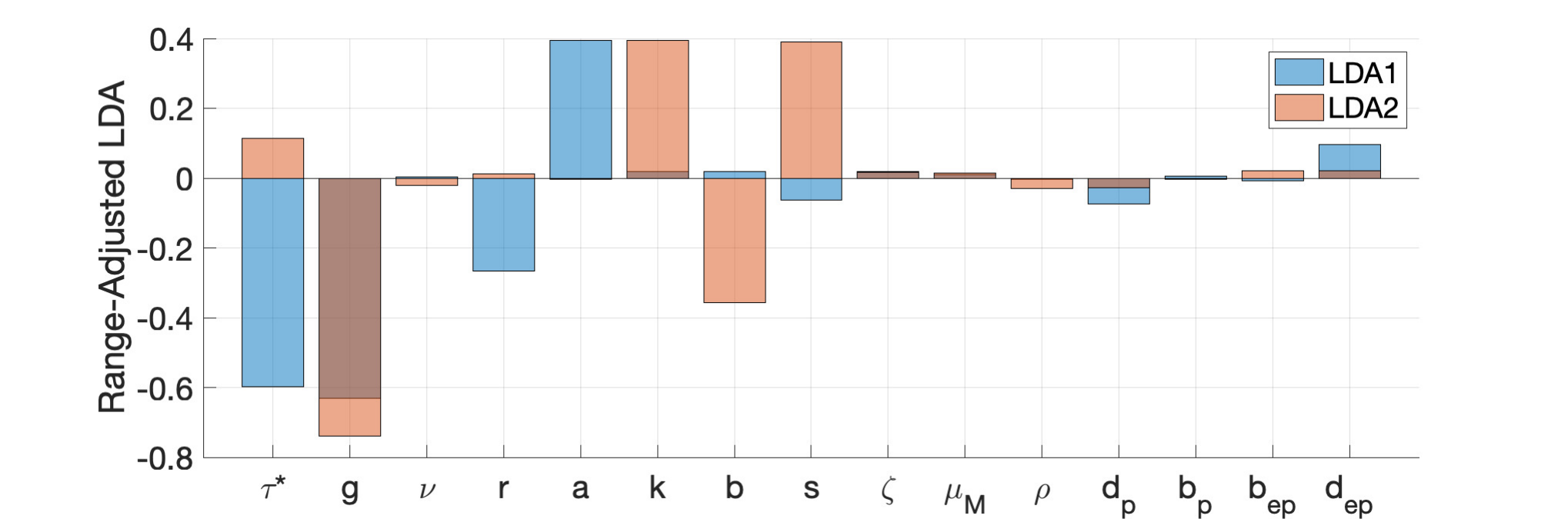}
    \caption{LDA range-adjusted bar graphs quantifying the importance of the model parameters for the type of long-term behavior.}
    \label{fig:LDA_vec}
\end{figure}
This graph suggests that the first LDA is the most influenced, in decreasing order, by  $g$, the consumption rate of juvenile predators by the prey, $\tau^*$, predator maturation age, $a$, intraspecific competition rate of the prey, and $r$, the prey birth rate. The second LDA is the most affected, in decreasing order, by $g$, $k$, the reproduction rate of the predator, $s$, the consumption rate of prey by adult predators, and $b$, the rate of predation by the predator.

\subsection{Phase diagrams in the $(\tau^*,g)$-plane} 
Fig. \ref{fig:LDA_vec} suggests that the two most influential parameters of our model on the long-term behavior type are the predator maturation age, $\tau^*$, and the consumption rate of juvenile predators by the prey, $g$. We conducted a detailed investigation of the long-term behavior of the system \eqref{eq:model-prey}--\eqref{eq:smooth_phi} at $(\tau^*,g)\in[0,{2}]\times[0,1]$, and the rest of the parameters fixed at the selected values specified in the third column of Table \ref{table2}. { We used a time/age step of $h = 0.0125$}.
%, except for the indicator function smoothness parameter $\nu$ that assumed three values: $\nu\in\{1,10,100\}$. 
%These parameter settings admit a large time step on $\Delta t = 0.1$ in numerical integration. 

For each value of $\tau^*$ from 0.1 to {2} with step { 0.05}, we moved along the parameter $g$ with step 0.01 from $g=1$ down to $g=0$. At each value of $g$, we found the equilibrium as described in Section \ref{sec:find_equilib}. If the equilibrium was unstable, i.e., if the Jacobian \eqref{eq:J1}--\eqref{eq:Jk} had an eigenvalue with a positive real part, we found the Periodic Coexistence attractor as described in Section \ref{sec:find_limit_cycles}. { We did so for two values of the indicator function smoothness parameter $\nu$: $\nu = 100$ corresponding to a sharp transition from juvenile to adult, and $\nu = 1$, describing a gradual transition. 
The resulting phase diagrams in the plane $(\tau^*,g)$, { bifurcation diagrams at $\tau^* = 1$,}  and ensembles of bifurcation diagrams in $g$ corresponding to the grid values of $\tau^*$ are displayed in Fig. 
\ref{fig:nu100}. Comparing these diagrams at $\nu = 100$ and $\nu = 1$, we observe that the gradual transition from juvenile to adult somewhat increases the region of the Predator-Free Attractor, slightly reshapes the region of the Periodic Coexistence Attractor, and slightly increases the amplitude of stable oscillations at large $\tau^*$ compared with those in the case of the sharp transition. }  
{ We also observe an oscillatory feature in the upper branches of the bifurcation diagrams for the juvenile predator at both $\nu = 100$ and $\nu = 1$. These features persisted throughout our time- and age-step refinement. We leave an investigation into this phenomenon for future work.}

\begin{figure}[h!] 
    \centering    
    \includegraphics[width=\textwidth]{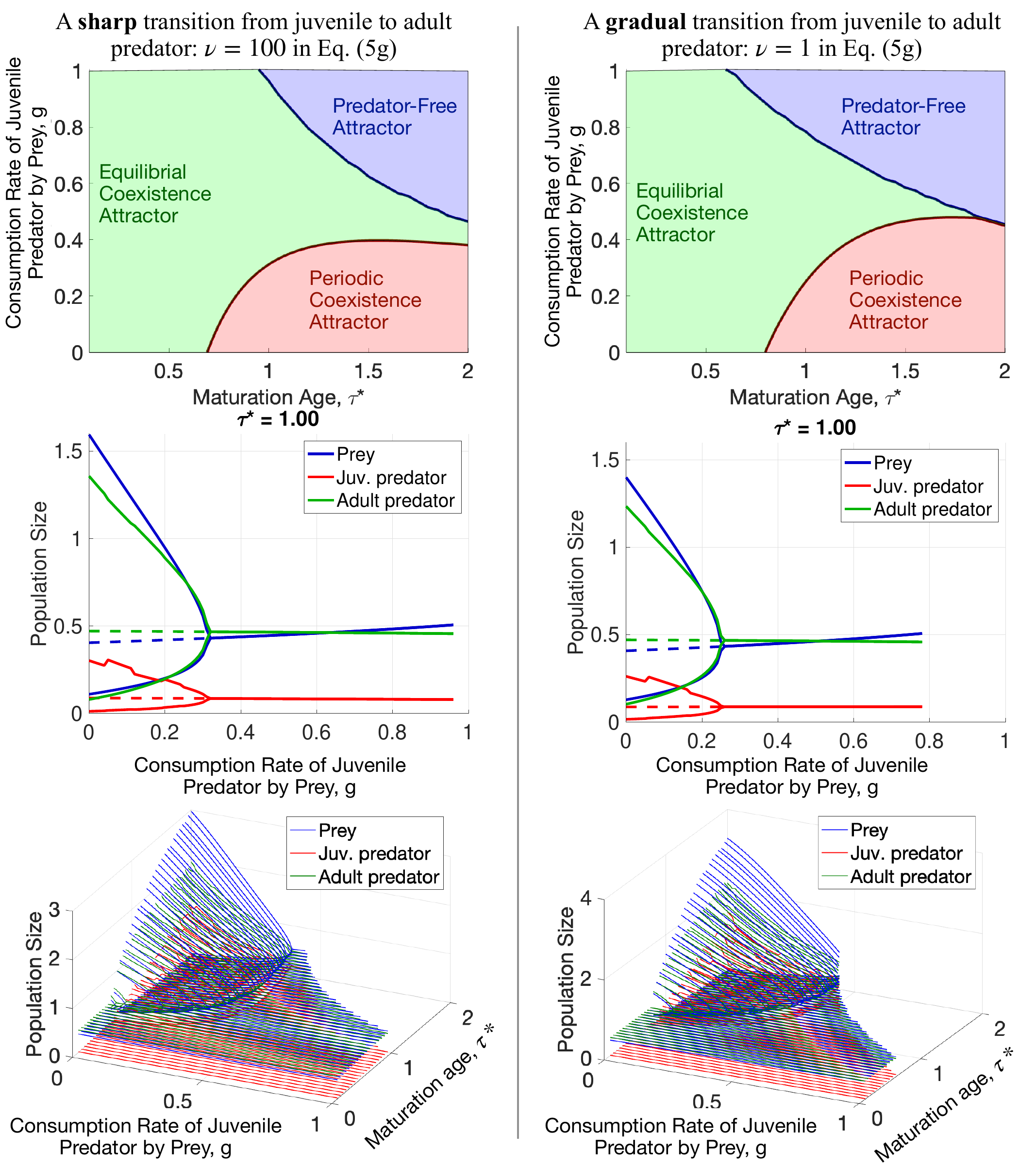}
    \caption{  Phase and bifurcation diagrams for the proposed model \eqref{eq:model-prey}--\eqref{eq:smooth_phi}. A sharp (left) and a gradual (right) transition from juvenile to adult predator: $\nu = 100$  and $\nu = 1$ in \eqref{eq:smooth_phi}, respectively. Top row: Phase diagrams. Middle Row: Bifurcation diagrams at the maturation age $\tau^* =1$. Bottom row: Ensembles of the bifurcation diagrams.}
    \label{fig:nu100}
\end{figure}

\subsection{Age density}
We examined predator age-density at several pairs of maturation age $\tau^*$ and the consumption rate of juvenile predators by prey $g$ at which the system admits the Equilibrial Coexistence Attractor. The remaining parameters were set to their selected values in Table \ref{table2}. The results are shown in Fig. \ref{fig:AgeDenEq}. The predator age-density decays rapidly with age and approaches zero at the age cutoff $L = 30$, as desired.
\begin{figure}[h!] 
    \centering    
    \includegraphics[width=\textwidth]{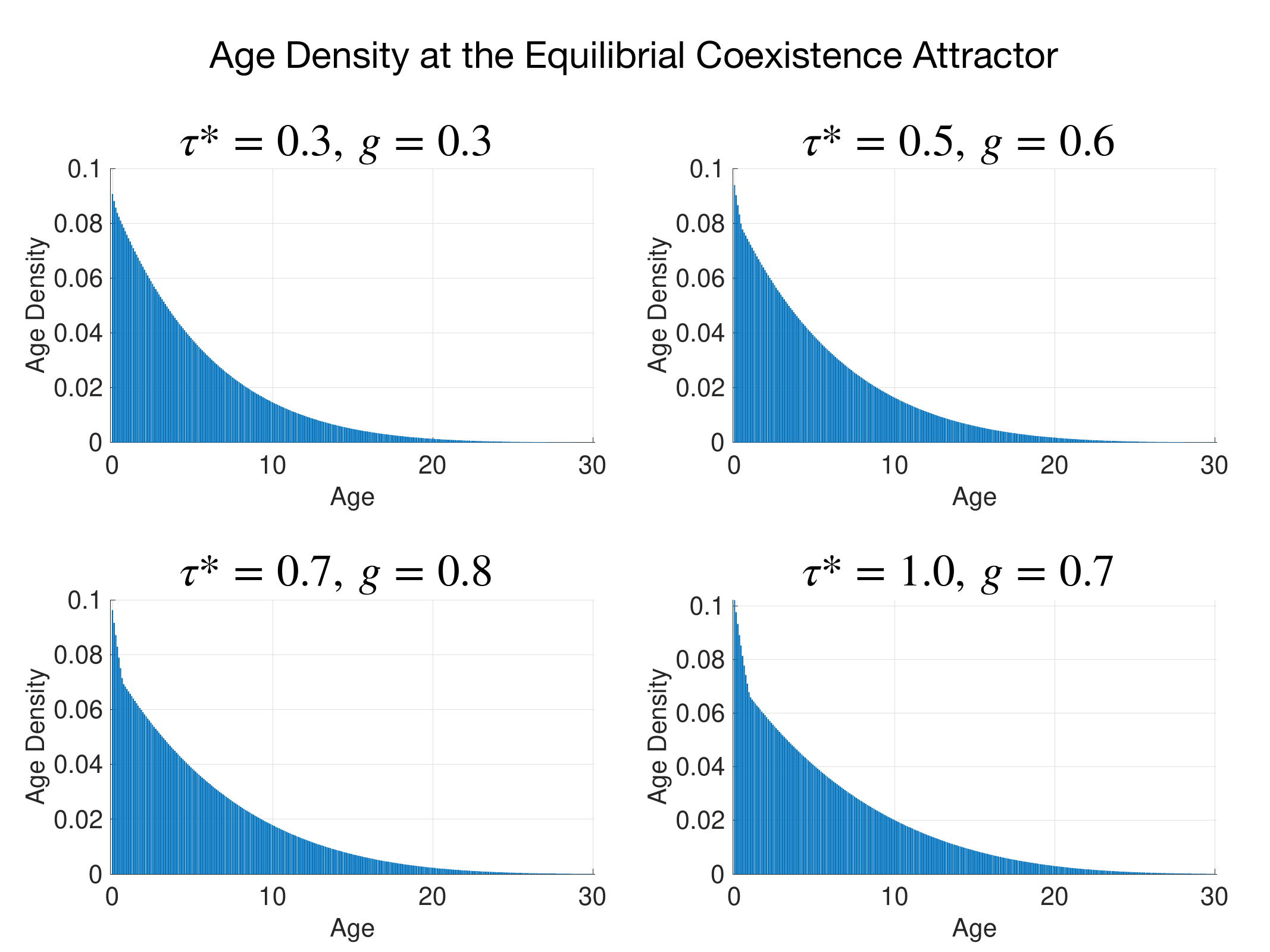}
    \caption{The predator age density at several pairs of values $(\tau^*,g)$ leading to the Equilibrial Coexistence Attractor. The rest of the parameters are at their selected values from Table \ref{table2}, and $\nu = 100$.}
    \label{fig:AgeDenEq}
\end{figure}
We have also extracted the predator age density at $(\tau^* = 1, g = 0.1)$ where the system settles on the Periodic Coexistence Attractor -- see Fig. \ref{fig:AgeDenPer}.
\begin{figure}[h!] 
    \centering    
    \includegraphics[width=\textwidth]{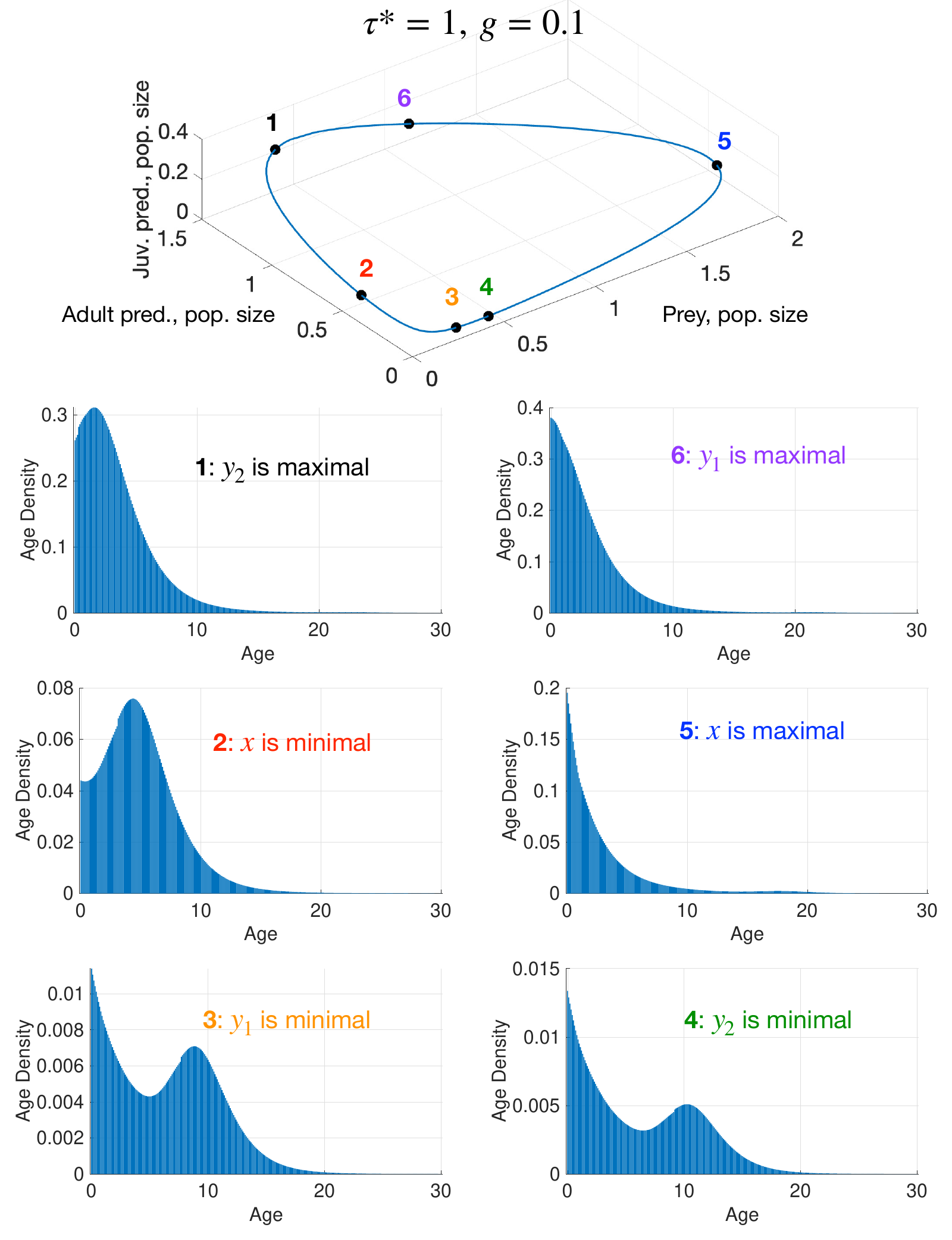}
    \caption{The predator age density at $(\tau^* = 1, g = 0.1)$ leading to the Periodic Coexistence Attractor. The rest of the parameters are at their selected values from Table \ref{table2}, and $\nu = 100$.}
    \label{fig:AgeDenPer}
\end{figure}
We observe rapidly decaying age density waves and the following alternation of peaks and minima of the prey and juvenile and adult predator population sizes: 
\begin{equation*}
\max y_2\rightarrow
\min x\rightarrow
\min y_1\rightarrow
\min y_2\rightarrow
\max x\rightarrow
\max y_1\rightarrow \max y_2.
\end{equation*}

\subsection{Comparison with the ODE and DDE models}
\label{sec:ODEnum}
The proposed model \eqref{eq:model-prey}--\eqref{eq:smooth_phi} was reduced to ODE and DDE models,  \eqref{eq:Luis_prey}--\eqref{eq:Luis_y2}  and \eqref{eq:xDDE0}--\eqref{eq:uttau*1}, respectively, by replacing the smooth indicator functions with the sharp ones, integration over age, and age-averaging (see Section \ref{sec:conn2ODE}). The trapezoidal rule was used to approximate the transition term from juvenile to adult predator in the DDE model, yielding delayed terms by the maturation age $\tau^*$. 

The goal of this section is to numerically investigate the ODE  and DDE models and compare their long-term behavior to that of the age-structured model \eqref{eq:model-prey}--\eqref{eq:smooth_phi}.  
We set all parameters of \eqref{eq:model-prey}--\eqref{eq:smooth_phi}, except for $\tau^*$ and $g$, to their selected values from Table \ref{table2}. The parameters $(\tau^*,g)$ ran through all values from the rectangle $[0,2]\times[0,1]$. The  parameters for the ODE system \eqref{eq:Luis_prey}--\eqref{eq:Luis_y2}, 
\begin{itemize}
\item $D$, the rate from juvenile to adult predator,~\eqref{eq:D_ode},
\item $b_2$, the predator birth rate,~\eqref{eq:b2_ode},
\item $m_1$, the death rate for juvenile predators,~\eqref{eq:m1_ode}, and
\item $m_2$, the death rate for adult predators,~\eqref{eq:m2_ode},
\end{itemize}
were computed by { age-averaging at the equilibria, stable or unstable,} admitted by the system \eqref{eq:model-prey}--\eqref{eq:smooth_phi} { with $\nu = 100$} at each pair $(\tau^*,g)$. 
The age-averaged values of $D$, $b_2$, $m_1$, and $m_2$ as functions of $\tau^*$ and $g$ are displayed in Fig. \ref{fig:plots-averaged-parameters}. { Their ranges are
\begin{equation*}
    \begin{array}{ll}
    D:&\quad [0.353, 9.91],\\
b_2:&\quad [0.0794, 0.0835],\\
m_1:&\quad [0.0200, 0.0219],\\
m_2:&\quad [0.0363, 0.0546].
    \end{array}
\end{equation*}}
The DDE model \eqref{eq:xDDE0}--\eqref{eq:uttau*1} uses the same parameter values $b_2$, $m_1$, and $m_2$ as the ODE model. It does not involve the parameter $D$. 

Thus, with the selected parameter values from Table \ref{table2} and Fig. \ref{fig:plots-averaged-parameters}, the ODE and DDE systems mimic the age-structured system \eqref{eq:model-prey}--\eqref{eq:smooth_phi} as closely as possible. 
\begin{figure}[h!]
    \centering
\includegraphics[width = \textwidth]{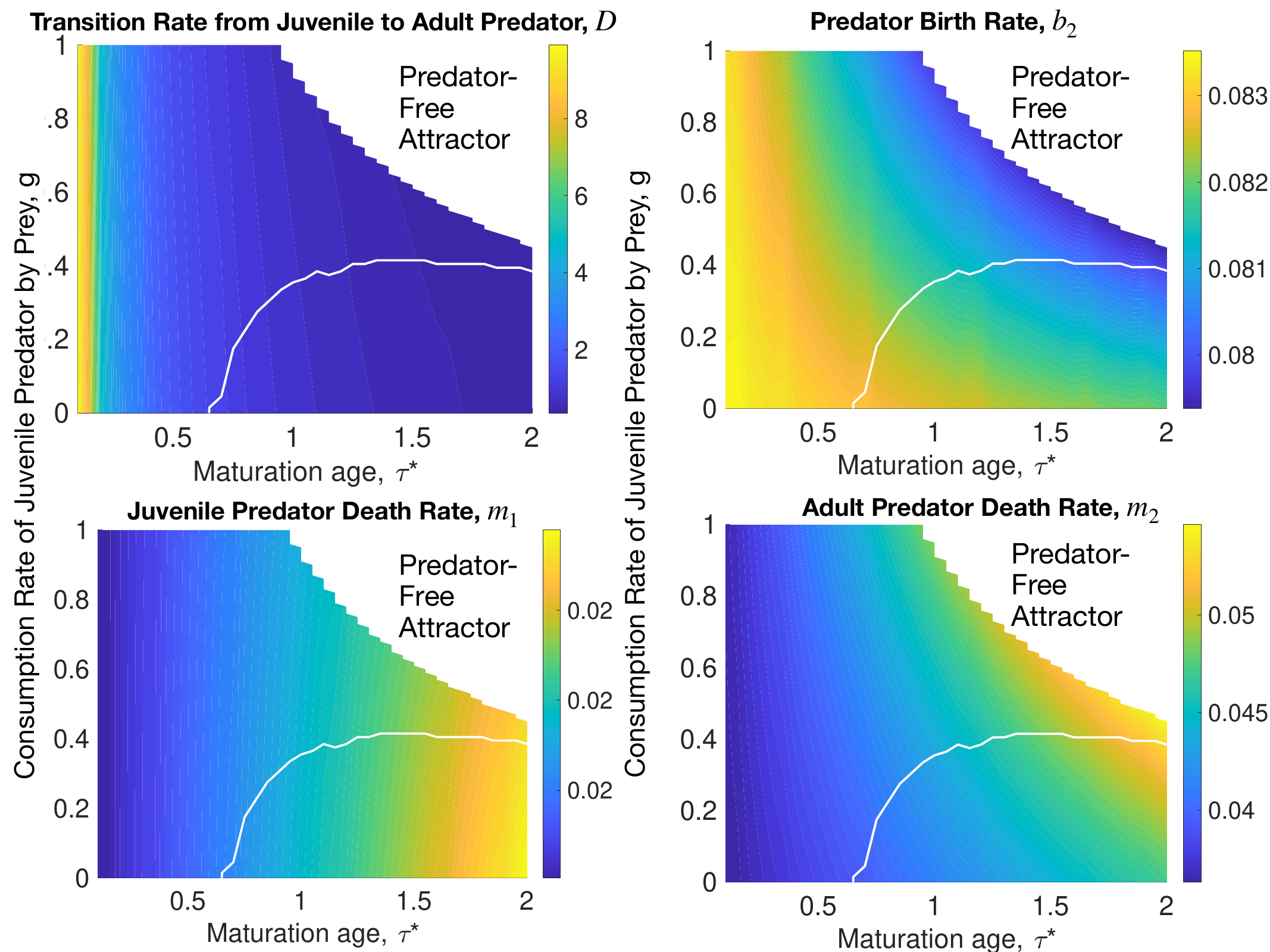}
\caption{The dependence of the age-averaged parameters of the ODE model \eqref{eq:Luis_prey}--\eqref{eq:Luis_y2} on the maturation age $\tau^*$, and the consumption rate of juvenile predators by the prey, $g$.  
}
\label{fig:plots-averaged-parameters}
\end{figure}

The phase diagrams in $(\tau^*,g)$ for the ODE system \eqref{eq:Luis_prey}--\eqref{eq:Luis_y2} and the DDE system \eqref{eq:xDDE0}--\eqref{eq:uttau*1}  are shown in Fig. \ref{fig:diagram_ODE_DDE}. 

As in the age-structured model, the phase diagrams for the ODE and DDE systems have three regions corresponding to the Equilibrial Coexistence Attractor, the Periodic Coexistence Attractor, and the Predator-Free Attractor. The regions of the Predator-Free Attractor for the ODE and DDE models did not increase compared to the age-structured model \eqref{eq:model-prey}--\eqref{eq:smooth_phi} in Fig. \ref{fig:nu100} (top left). We remark that we did not run the ODE and DDE models in the Predator-Free region of the age-structured model with $\nu = 100$ because we did not compute the parameters $D$, $b_2$, $m_1$, and $m_2$ in it.}  The region of the Periodic Coexistence Attractor is considerably smaller for the ODE system. This region for the DDE system is intermediate in size between those of the ODE and age-structured models. Fig. \ref{fig:phase_diagram_comparison} superimposes the phase diagrams of the age-structured model \eqref{eq:model-prey}--\eqref{eq:smooth_phi} with the indicator function smoothness parameters $\nu = 1$ and $100$, the ODE system  \eqref{eq:Luis_prey}--\eqref{eq:Luis_y2}, and the DDE system \eqref{eq:xDDE0}--\eqref{eq:uttau*1}.  The age-structured model is much more prone to developing oscillations than the closely mimicking ODE and DDE systems. 

\begin{figure}[h!] 
    \centerline{
(a)    \includegraphics[width=\textwidth]{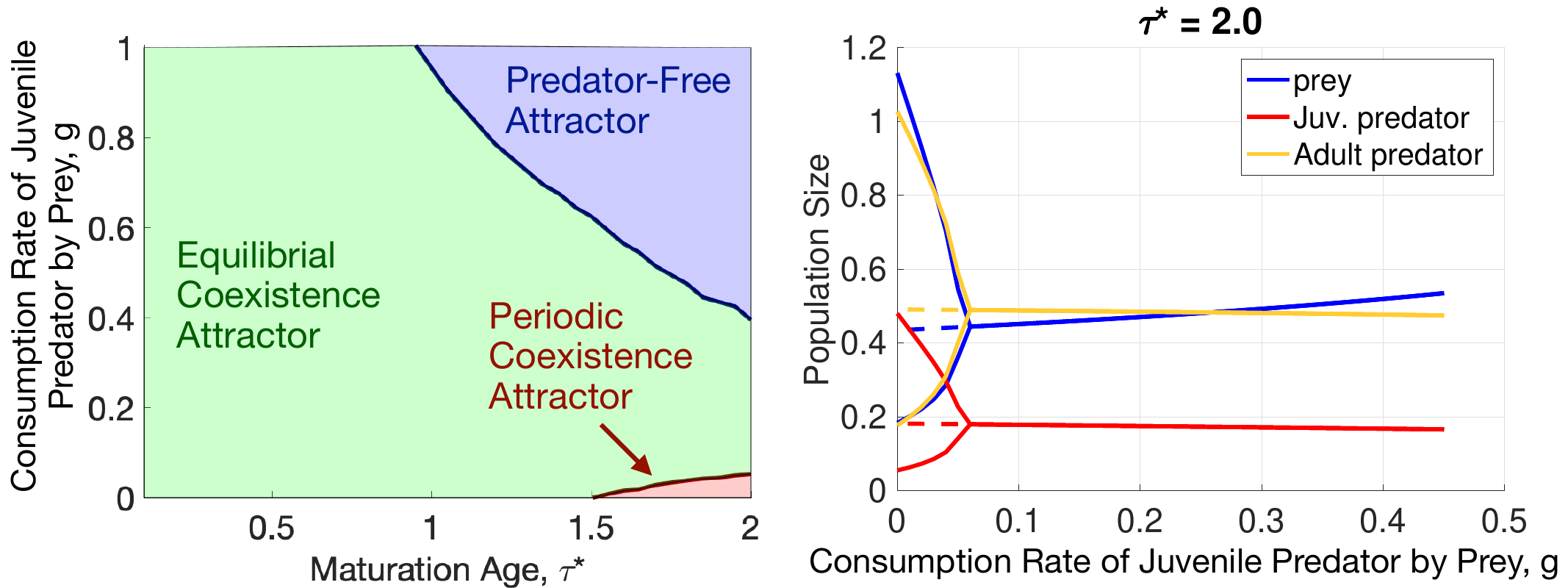}
}
    \centerline{
(b)    \includegraphics[width=\textwidth]{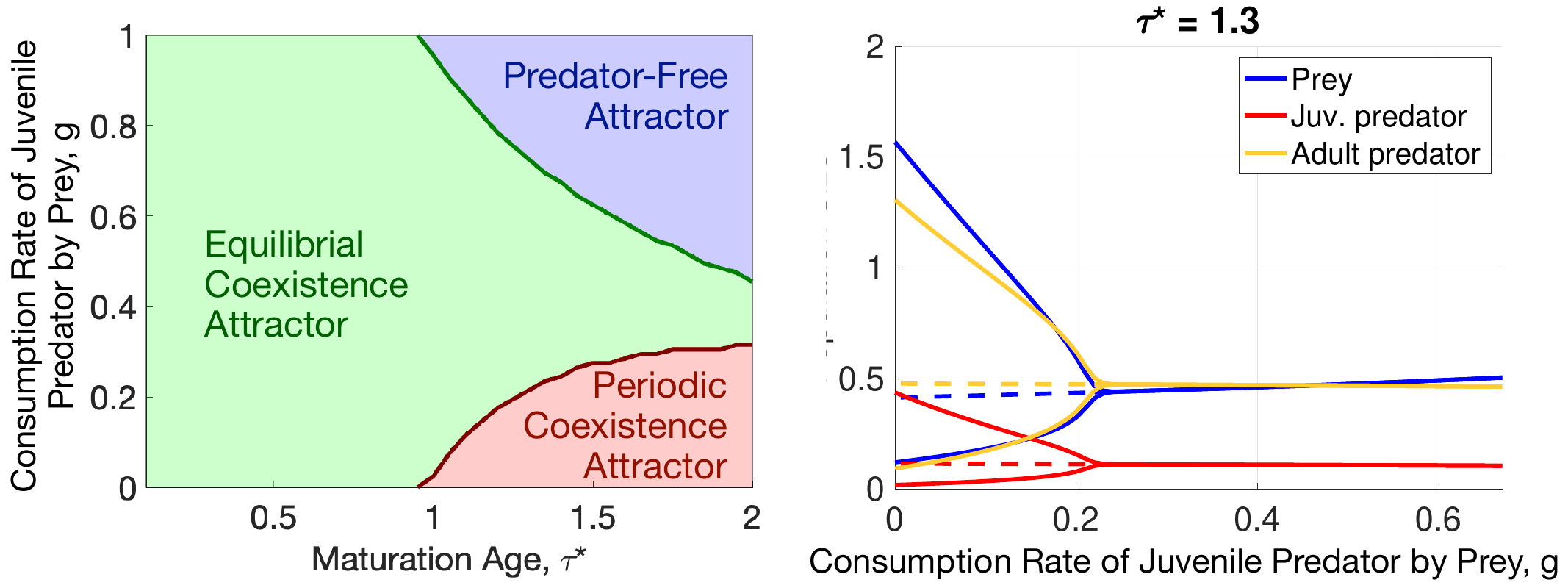}
}
    \caption{(a): The phase diagram in the $(\tau^*,g)$ plane and a bifurcation diagram at $\tau^* = 2$ for the ODE model \eqref{eq:Luis_prey}--\eqref{eq:Luis_y2} with parameters $D$, $b_2$, $m_1$, and $m_2$ age-averaged over the corresponding attractors of the proposed model \eqref{eq:model-prey}--\eqref{eq:smooth_phi} and displayed in Fig. \ref{fig:plots-averaged-parameters}.
    (b): The phase diagram in the $(\tau^*,g)$ plane and a bifurcation diagram at $\tau^* = 1.3$ for the DDE model \eqref{eq:xDDE0}--\eqref{eq:uttau*1} where parameters $b_2$, $m_1$, and $m_2$ age-averaged over the corresponding attractors of the proposed model \eqref{eq:model-prey}--\eqref{eq:smooth_phi} and displayed in Fig. \ref{fig:plots-averaged-parameters}.
    }
    \label{fig:diagram_ODE_DDE}
\end{figure}

\begin{figure}[h!] 
    \centering    \includegraphics[width=\textwidth,height=\textheight,keepaspectratio]{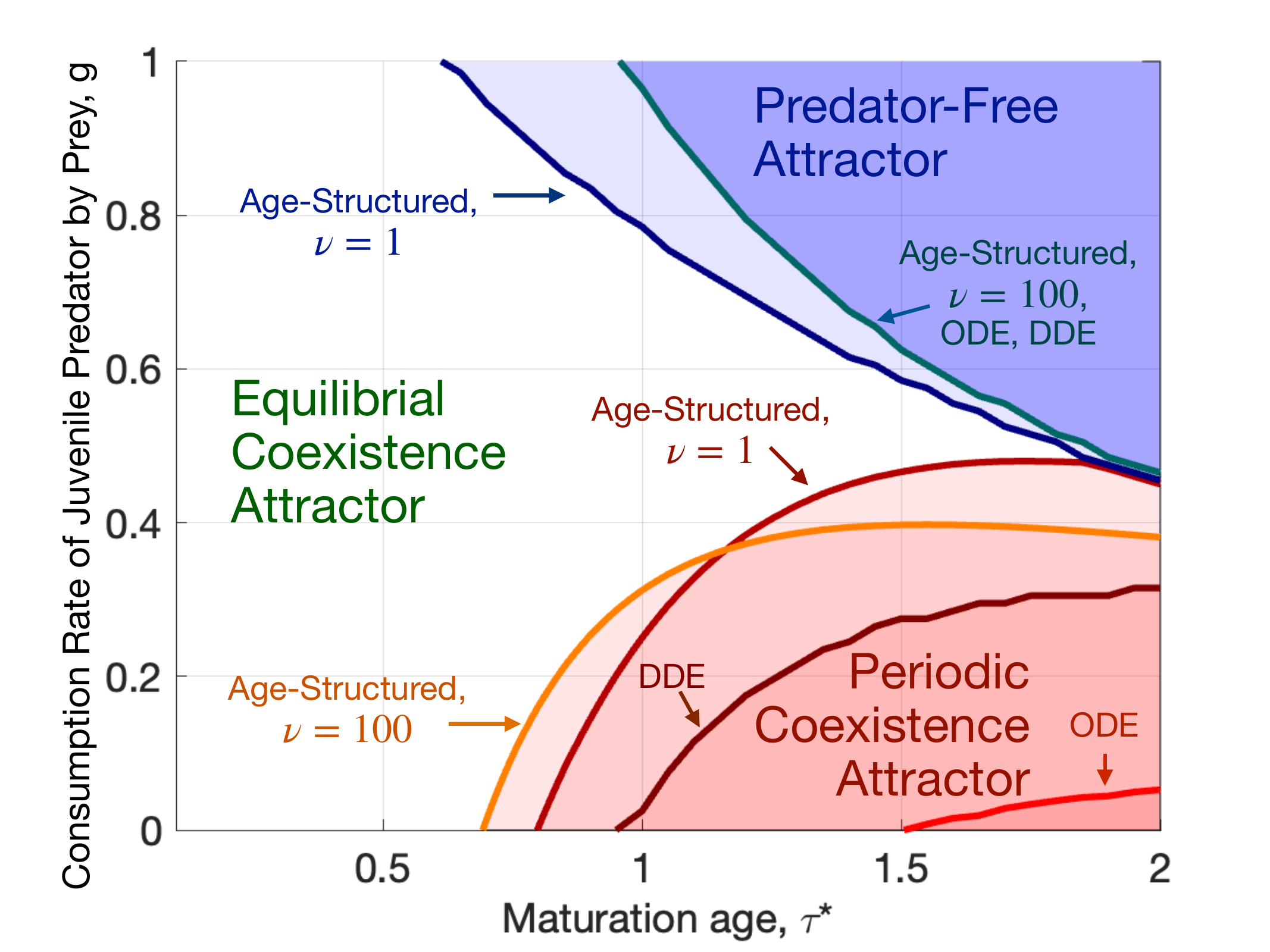}
    \caption{A comparison of phase diagrams on the $(\tau^*,g)$ plane of the age-structured model \eqref{eq:model-prey}--\eqref{eq:smooth_phi} with the indicator function smoothness coefficients $\nu = 100$ and $1$, the ODE model \eqref{eq:Luis_prey}--\eqref{eq:Luis_y2}, and the DDE model \eqref{eq:xDDE0}--\eqref{eq:uttau*1} with parameters mimicking the age-structured model with $\nu=100$.}
    \label{fig:phase_diagram_comparison}
\end{figure}

\section{Discussion}
\label{sec:discussion}
We developed a predator-prey model \eqref{eq:model-prey}--\eqref{eq:smooth_phi} with an age-structured predator population and role reversal, proved the existence, uniqueness, and positivity of the initial-value problem for it, and investigated its long-term behavior numerically. In addition, we derived ODE and DDE  models from it and compared their long-term behavior to that of the age-structured model with the corresponding parameter values. This study taught us several important lessons.
\begin{enumerate}
    \item The proposed model involves 15 parameters. Latin Hypercube Sampling has shown that, depending on these parameters, the system settles on the Predator-Free Attractor, Equilibrial Coexistence Attractor, and Periodic Coexistence Attractor in 22\%, 19\%, and 55\% of cases, respectively, or blows up in approximately 4\% of cases. The blow-up is not biologically relevant. {It is enabled due to the terms $kx\varphi_{\{\tau\ge \tau^*\}}(\tau)$ in the birth rate function  of the predator, \eqref{eq:birth-function} and $sxy_1$ in the right-hand side of the prey ODE \eqref{eq:model-prey}. The replacement of the birth-rate function and the prey ODE with their saturated versions \eqref{eq:x_saturated} and \eqref{eq:birth_rate_saturated} resulted in the absence of blow-ups in the Latin Hypercube Sampling. Nonetheless, we left this caveat in our model for educational purposes. Our future models will necessarily involve saturation, preventing blow-ups. }
    
    \item We performed a Linear Discriminant Analysis to find which parameters affect the type of long-term behavior the most. These are the maturation age $\tau^*$ and the consumption rate of the juvenile predator by the prey $g$. {The importance of these parameters agrees with the main claim of \cite{WG1984}: the maturation age of the predator and its ontogenetic niche shift in diet and interspecific interactions are significant determinants of the dynamics of simplified ecological systems.}
    \item We fixed all parameters but $\tau^*$ and $g$ at their selected values (see Table \ref{table2}) and further investigated the system's dynamics. Blow-up does not occur in these settings, and the system approaches one of the three attractors depending on $\tau^*$ and $g$. 
    { The phase diagrams in Fig. ~\ref{fig:nu100} in the $(\tau^*,g)$-plane reveals the following. If the maturation age is small enough, the system reaches a coexistence equilibrium at all values of $g\in[0,1]$. As the maturation age $\tau^*$ increases, the predator becomes extinct at large $g$, and stable oscillations develop at small $g$. The region of the Predator-Free Attractor occupies the top right corner of the phase diagrams where both $\tau^*$ and $g$ are large, while the Periodic Coexistence Attractor occupies the bottom right corner where $\tau^*$ is large and $g$ is small. This suggests that the long maturation of the predator and significant consumption of its juveniles create a juvenile bottleneck that is hard to overcome and results in the predator's extinction. This result is in agreement with~\cite{WG1984}.  }
    
\item { The effect of a gradual rather than a sharp transition from juvenile to adult is a minor increase of the Predator-Free Attractor region, a minor reshaping of the Periodic Coexistence Attractor region,   and a minor increase of the amplitude stable oscillations at large $\tau^*$. Overall, this effect is mild. }

    \item We extracted the predator age density at a collection of pairs $(\tau^*,g)$ leading to an Equilibrial Coexistence Attractor and at a pair $(\tau^*,g)$ admitting the Periodic Coexistence Attractor. The age densities at Equilibrial Coexistence Attractors monotonically decay, while they feature decaying waves at the periodic Coexistence Attractor. This behavior of the age density is biologically relevant.
    \item To gauge the effect of the age-structured predator population, we derived ODE and DDE models. Furthermore, we computed the parameter values for these models by age-averaging the solution to the age-structured model at its coexistence equilibrium, whether stable or unstable. The phase diagram for the resulting ODE model has a much smaller area of the Periodic Coexistence Attractor and a much larger region of the Equilibrial Coexistence Attractor. The sizes of these regions in the phase diagram of the DDE model are intermediate between those of the ODE and age-structured models. Therefore, the age structure of the predator population promotes oscillatory behavior. 
    \item { Numerical simulation of the ODE and DDE models is much faster than that of the age-structured model. Time and age step refinement in the age-structured model by a factor of two increases runtime by a factor of four. The computation time for the phase diagrams in Fig. 10, which involve finding equilibria and limit cycles, was several days for the age-structured model, compared with several hours for the ODE and DDE models. }
\end{enumerate}
{ The finding that different models of the same system can yield qualitatively different predictions is not new. For example, \cite{Cantrell2012} showed that spatially explicit and spatially implicit models of a one-dimensional two-patch system yield different predictions for the survival of a population migrating between the two patches.}

\section{Conclusion}
\label{sec:conclusion}
{
{
The proposed predator-prey model \eqref{eq:model-prey}--\eqref{eq:smooth_phi} with a role reversal and age-structured predator population captures the role of predators' gradual maturation and the ontogenetic niche shifts in their diet and their interaction with the prey. The long-term behavior of this system at the default parameter values agrees with our knowledge about some real predator-prey systems. A comparison of the long-term behavior of the age-structured model with the mimicking ODE- and DDE-based models demonstrates that the model type strongly affects the type of coexistence attractor the system admits. Because the age-structured model includes elements of biological reality that are absent in the ODEs and DDEs, one should use caution when drawing conclusions from the latter model types. %Interestingly, the predictions for the extinction of predators are not affected by the model type. 

The proposed model contains a number of imperfections. First, it admits blow-up for a small fraction of parameter cases, and we have suggested some strategies for revising the model that should avoid this. Second, the model does not account for Allee effects~(\cite{Allee1949}), demographic stochasticity, and other ecological processes important at small population sizes. Consequently, the lower bounds of oscillatory solutions to the coupled predator-prey system may be very low, which is ecologically implausible. Solving this issue would require major changes to the model formulation. Nevertheless, we have learned a lot from our study of this model and leave further improvements to future work.
}
\color{black}

\section*{Acknowledgements}
The work of M.C. was partially supported by the AFOSR MURI grant FA9550-20-1-0397. W.F.F. acknowledges support from the U.S. National Science Foundation (DMS2451241). The work of D.L. was partially supported by the Simons Foundation.

\section*{Declarations}

\begin{itemize}
\item Funding: {AFOSR MURI grant FA9550-20-1-0397 (MC);  Simons Foundation Award 848629 (DL); U.S. National Science Foundation Award DMS2451241 (WFF).}
\item Conflict of interest/Competing interests: { The authors have no conflict of interest.}
\item Ethics approval and consent to participate: { The authors comply with ethics rules.}
\item Consent for publication: {All authors consent for publication.}
\item Data availability: {Due to the volume of the simulation data, it will be made available upon request. }
\item Materials availability: {Not applicable.}
\item Code availability: { Codes are available on GitHub~(\cite{mar1akc}).}
\item Author contribution: {Luis Suarez: Conceptualization, Methodology, Software, Validation, Formal analysis, Investigation, Data curation, Writing – original draft, Visualization. Maria Cameron: Methodology, Software, Validation, Formal analysis, Investigation, Data curation, Writing – review\&editing, Visualization, Supervision, Funding Acquisition. William Fagan: Conceptualization, Writing – review\&editing, Supervision. Doron Levy: Conceptualization, Writing – review\& editing, Supervision, Project administration, Funding Acquisition.}
\end{itemize}

%\section{Section title of first appendix}\label{secA1}
%

 \appendix
\setcounter{equation}{0}
\renewcommand{\theequation}{\Alph{section}-\arabic{equation}}
    \setcounter{lemma}{0}
    \renewcommand{\thelemma}{\Alph{section}\arabic{lemma}}

\section*{Appendix}

\section{ Proof of Theorem \ref{thm1}} 
\label{sec:proof}
First of all, we remind the reader that the age-structured model \eqref{eq:model-prey}--\eqref{eq:smooth_phi} can be rewritten as a system of infinitely many ODEs, \eqref{eq:x1}--\eqref{eq:u0}. The first ODE, \eqref{eq:x1}, describes the dynamics of the prey population size, while the two infinite families of ODEs, \eqref{eq:u1} and \eqref{eq:u2}, describe the dynamics of generations of predators present at the initial time $t=0$ and born at $t>0$,  respectively.

\subsection{Compact support of the predator age density.}
\begin{lemma}
    The age density of the predator is compactly supported at any finite time $t$.
\end{lemma}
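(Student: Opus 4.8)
The plan is to exploit the fact that the predator equation \eqref{eq:model-predator} is a pure transport equation along the characteristics \eqref{eq:num:predator}: since there is no diffusion in age, the only mechanisms that can populate a given age $\tau$ at time $t$ are propagation of the initial datum and injection of newborns at $\tau = 0$ through the boundary condition \eqref{eq:u3}. Both mechanisms advance mass along lines of slope one in the $(t,\tau)$-plane, so support can only spread at unit speed in age, and newborns only ever occupy ages $\tau \le t$. This already indicates the correct bound, $\mathrm{supp}\,u(t,\cdot) \subseteq [0,\, t + \tau_{0,\max}]$, which is what I would prove.

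Concretely, I would fix a point $(t,\tau)$ with $\tau > t + \tau_{0,\max}$ and trace back along the characteristic through it. Because $\tau - t > \tau_{0,\max} \ge 0$, this point lies in the region $t < \tau$, so the relevant characteristic is the one from \eqref{eq:u1}: it keeps $\tau - t$ constant and originates at time $0$ at age $\tau_0 := \tau - t > \tau_{0,\max}$. Along this characteristic the density solves the scalar linear homogeneous ODE $\frac{d}{dt}u = -\mu(x,\tau)\,u$ with initial value $u(0,\tau_0) = u_0(\tau_0) = 0$, the latter because $\tau_0$ lies outside the support $[0,\tau_{0,\max}]$ of $u_0$. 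Uniqueness for this linear ODE forces $u \equiv 0$ along the entire characteristic, whence $u(t,\tau) = 0$. Since $(t,\tau)$ was an arbitrary point with $\tau > t + \tau_{0,\max}$, the support of $u(t,\cdot)$ is contained in the bounded interval $[0,\, t + \tau_{0,\max}]$ and is therefore compact. To keep the argument consistent with the rest of the proof of Theorem \ref{thm1}, whose existence part is built on the discrete scheme \eqref{eq:Xnum}--\eqref{eq:Y2num}, I would in fact run the identical reasoning at the discrete level: from the update rule \eqref{eq:Unum}, $U[n+1,k]$ depends only on $U[n,k-1]$, so a one-line induction on $n$ gives $U[n,k] = 0$ whenever $k > N_0 + n$ (with $\tau_{0,\max} = N_0 h$), the base case $n = 0$ following from $U[0,k] = u_0(kh) = 0$ for $k > N_0$. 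Passing to the limit $h \to 0$ preserves the bound $\tau \le \tau_{0,\max} + t$.

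I do not expect any serious obstacle here; the statement is essentially a structural consequence of the finite propagation speed of the transport operator together with the compact support of $u_0$. The one point I would state carefully is that the prey-dependence of $\mu$ through $x$ does not interfere, precisely because the characteristic ODE is linear and homogeneous in $u$ and hence propagates the value zero irrespective of the (finite) coefficient $\mu(x,\tau)$; thus the coupling between $x$ and $u$ that complicates the existence proof plays no role in the support estimate.
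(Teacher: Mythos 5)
Your argument is correct and is essentially the paper's own proof: both trace the characteristic through a point with $\tau > t + \tau_{0,\max}$ back to an initial age outside $\mathrm{supp}\,u_0$ and use the fact that the linear homogeneous characteristic ODE propagates the value zero, yielding $\mathrm{supp}\,u(t,\cdot)\subseteq[0,\tau_{0,\max}+t]$. The additional discrete-level induction and the remark that the $x$-dependence of $\mu$ is irrelevant are fine elaborations but not a different route.
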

\begin{proof}
The initial age density $u_0(\tau)$ is bounded and has a compact support. The age density $u(t,\tau)$ propagates along characteristics \eqref{eq:u1}--\eqref{eq:u2}. Equation \eqref{eq:u1} implies that $u(t,t+\tau_0) = 0$ whenever $u(0,\tau_0) = u_0(\tau_0) = 0$. Hence, if $u_0(\tau) = 0$ on $[\tau_{0,\max},+\infty)$ then at a fixed $t$, $u(t,\tau) = 0$ on $[\tau_{0,\max}+t,+\infty)$. 
\end{proof}
For brevity, we denote the maximal possible age at time $t$ by $\tau_{\max}(t)$:
\begin{equation}
    \label{eq:taumax}
    \tau_{\max}(t): = \tau_{0,\max} +t. 
\end{equation}

\subsection{A Banach space for prey population size and predator age density.} 
To analyze the solutions to the initial value problem \eqref{eq:model-prey}--\eqref{eq:smooth_phi}, we define a space for the prey population size $x$ and the predator age density $u$.

\begin{lemma}
The space
 \begin{equation}
    \label{eq:Xdef}
    \mathcal{X} = \{(x,u)~|~x\in\mathbb{R},~u\in L_1([0,\tau_{\max}(t)])\}
\end{equation}
with the norm 
\begin{equation}
    \label{eq:normX}
    \|(x,u)\|_{\mathcal{X}} = |x| + \int_{0}^{\tau_{\max}(t)} |u|d\tau.
\end{equation}
 is a Banach space at each fixed finite time $t$.  
\end{lemma}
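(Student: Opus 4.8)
The plan is to prove that $\mathcal{X}$ is Banach by exhibiting it as a direct product of two spaces that are already known to be complete, and then invoking the standard fact that a finite direct product of Banach spaces, equipped with a norm that controls each factor, is itself Banach. Concretely, I would write $\mathcal{X} = \mathbb{R} \times L_1([0,\tau_{\max}(t)])$ as a vector space under componentwise addition and scalar multiplication, observe that the proposed $\|\cdot\|_{\mathcal{X}}$ is exactly the sum of the absolute value on $\mathbb{R}$ and the $L_1$-norm on the second factor, and then establish completeness.

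The key steps, in order, would be the following. First I would verify that $\|\cdot\|_{\mathcal{X}}$ is genuinely a norm: nonnegativity and the triangle inequality follow immediately from the corresponding properties of $|\cdot|$ and $\|\cdot\|_{L_1}$ term by term, homogeneity is clear, and definiteness holds because $\|(x,u)\|_{\mathcal{X}} = 0$ forces both $|x| = 0$ and $\int_0^{\tau_{\max}(t)} |u|\,d\tau = 0$, the latter giving $u = 0$ as an element of $L_1$ (i.e. almost everywhere). Second, I would take an arbitrary Cauchy sequence $(x_n, u_n)$ in $\mathcal{X}$. Since $|x_n - x_m| \le \|(x_n,u_n) - (x_m,u_m)\|_{\mathcal{X}}$ and likewise $\|u_n - u_m\|_{L_1} \le \|(x_n,u_n) - (x_m,u_m)\|_{\mathcal{X}}$, the two component sequences $(x_n)$ and $(u_n)$ are themselves Cauchy in $\mathbb{R}$ and in $L_1([0,\tau_{\max}(t)])$ respectively. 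Third, I would invoke completeness of each factor: $\mathbb{R}$ is complete, and $L_1$ of a measure space is complete by the Riesz--Fischer theorem. So $x_n \to x$ in $\mathbb{R}$ and $u_n \to u$ in $L_1$ for some limits $x$ and $u$. Finally I would check that $(x_n,u_n) \to (x,u)$ in the $\mathcal{X}$-norm, which is immediate because $\|(x_n,u_n) - (x,u)\|_{\mathcal{X}} = |x_n - x| + \|u_n - u\|_{L_1} \to 0$, and that the limit lies in $\mathcal{X}$.

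I do not anticipate a genuine obstacle here, as the statement is essentially the textbook fact that $\mathbb{R} \oplus L_1$ under the sum norm is complete; the paper itself notes this in the body. The only point requiring a word of care is the role of the time $t$: the integration interval $[0,\tau_{\max}(t)]$ depends on $t$ through $\tau_{\max}(t) = \tau_{0,\max} + t$, so strictly speaking $\mathcal{X}$ is a $t$-indexed family of spaces. I would emphasize that the argument is carried out at each fixed finite $t$, for which $\tau_{\max}(t)$ is a fixed finite number and $L_1([0,\tau_{\max}(t)])$ is an honest Banach space; completeness then holds for each such $t$ separately, which is exactly what the lemma asserts.
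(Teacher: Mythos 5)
Your proposal is correct and follows the same route as the paper, which simply observes that $\mathcal{X}$ is a direct product of the Banach spaces $\mathbb{R}$ and $L_1([0,\tau_{\max}(t)])$ under the sum norm and cites a reference for completeness. You merely fill in the standard componentwise Cauchy-sequence details that the paper leaves to the citation.
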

\begin{proof}
    Indeed, $L_1([0,\tau_{\max}(t)])$ is Banach and hence $\mathcal{X}$ is a direct product of Banach spaces -- see~(\cite{AdamsSobolev2003}).
\end{proof}

\subsection{Lipschitz continuity.}
\begin{lemma}
\label{lemma:Lip}
    The right-hand side $F:=(f_1,f_2)$ of \eqref{eq:x1}--\eqref{eq:u2} is locally Lipschitz with respect to the norm \eqref{eq:normX}, i. e. for any
 \begin{equation}
     (x,u),~(y,v)\quad\text{such that}\quad\|(x,u)\|_{\mathcal{X}} \le R,~\|(y,v)\|_{\mathcal{X}} \le R,
 \end{equation}
 there exists a Lipschitz constant $L_{F,R}$ dependent on $F$ and $R$ but independent of $(x,u)$ and $(y,v)$ such that
  \begin{equation}
  \label{eq:Flip}
     \|F(x,u)-F(y,v)\|_{\mathcal{X}} \le L_{F,R}\|(x,u)-(y,v)\|_{\mathcal{X}}.
 \end{equation}
\end{lemma}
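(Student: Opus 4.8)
The plan is to estimate the two pieces of $\|F(x,u)-F(y,v)\|_{\mathcal{X}}$ separately: the $\mathbb{R}$-component coming from the prey equation \eqref{eq:x1}, and the $L_1$-component coming from the transport of the predator density along the characteristics \eqref{eq:u1}--\eqref{eq:u2}. Throughout I would exploit the two uniform bounds furnished by the hypotheses $\|(x,u)\|_{\mathcal{X}}\le R$ and $\|(y,v)\|_{\mathcal{X}}\le R$, namely $|x|,|y|\le R$ and $\int_0^{\tau_{\max}(t)}|u|\,d\tau,\int_0^{\tau_{\max}(t)}|v|\,d\tau\le R$, together with the elementary inequality $\bigl|\int_0^{\tau^*}(u-v)\,d\tau\bigr|\le\int_0^{\tau_{\max}(t)}|u-v|\,d\tau$ (and its analogue on $[\tau^*,\tau_{\max}(t)]$), which transfers $L_1$-closeness of the densities to closeness of the juvenile and adult population sizes $y_1,y_2$.

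For the prey component, I would write $f_1(x,u)=rx-ax^2+sx\int_0^{\tau^*}u-bx\int_{\tau^*}^{\tau_{\max}(t)}u$, subtract $f_1(y,v)$, and regroup the result into telescoping pieces: the linear term $r(x-y)$, the quadratic term $-a(x+y)(x-y)$, and bilinear terms such as $s\bigl(x\int_0^{\tau^*}(u-v)+(x-y)\int_0^{\tau^*}v\bigr)$ and its $b$-analogue. Each piece is dominated by $R$ times either $|x-y|$ or $\int_0^{\tau_{\max}(t)}|u-v|\,d\tau$, yielding $|f_1(x,u)-f_1(y,v)|\le C_1\|(x,u)-(y,v)\|_{\mathcal{X}}$ with $C_1$ depending only on $r,a,s,b,R$.

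For the predator component the integrand is $f_2(x,u)(\tau)=-\mu(x,\tau)u(\tau)$, and I would add and subtract $\mu(x,\tau)v(\tau)$ to split it as $-\mu(x,\tau)(u-v)-\bigl(\mu(x,\tau)-\mu(y,\tau)\bigr)v$. The first term integrates to at most $\bigl(\sup_\tau|\mu(x,\tau)|\bigr)\int_0^{\tau_{\max}(t)}|u-v|\,d\tau$; here I must bound $\mu(x,\tau)=gx\varphi_{\{\tau<\tau^*\}}+\mu_B(\tau)+\mu_Me^{-\rho x}$ uniformly, using $|gx|\le gR$, the boundedness of the continuous function $\mu_B$ on the compact interval $[0,\tau_{\max}(t)]$, and — this is the delicate point — $e^{-\rho x}\le e^{\rho R}$, valid precisely because $x\ge -R$ in the abstract space even before positivity is known. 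The second term is handled by the $x$-Lipschitz estimate $|\mu(x,\tau)-\mu(y,\tau)|\le\bigl(g+\mu_M\rho e^{\rho R}\bigr)|x-y|$, in which the $\mu_B$ contributions cancel and the mean value theorem applied to $s\mapsto e^{-\rho s}$ on $[-R,R]$ supplies the factor $\rho e^{\rho R}$; multiplying by $\int_0^{\tau_{\max}(t)}|v|\,d\tau\le R$ gives a contribution bounded by $C_2|x-y|$.

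The step I expect to require the most care is the treatment of the hunger death rate $\mu_Me^{-\rho x}$, which is not globally Lipschitz in $x$. The entire lemma hinges on restricting to the bounded region $\|(x,u)\|_{\mathcal{X}}\le R$: it is the a priori bound $|x|\le R$ that simultaneously keeps $e^{-\rho x}$ finite and keeps its local Lipschitz constant $\rho e^{\rho R}$ finite. Once both components are estimated, adding the bounds yields $\|F(x,u)-F(y,v)\|_{\mathcal{X}}\le(C_1+C_2)\|(x,u)-(y,v)\|_{\mathcal{X}}$, so the constant $L_{F,R}=C_1+C_2$ — which depends on $R$ but not on the particular points — closes the argument.
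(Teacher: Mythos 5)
Your proposal is correct and follows essentially the same route as the paper's proof: estimate the $\mathbb{R}$-component and the $L_1$-component separately, telescope the bilinear terms in $f_1$, add and subtract $\mu(x,\tau)v$ in $f_2$, and use the ball $\|(x,u)\|_{\mathcal{X}}\le R$ to bound $|x|$, the population integrals, and the local Lipschitz constant of $e^{-\rho x}$. If anything, your explicit bound $e^{-\rho x}\le e^{\rho R}$ for $x\in[-R,R]$ is slightly more careful than the paper's written factor $\max\{1,e^{-\rho R}\}$ at the corresponding step.
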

    
\begin{proof}
We will compute 
 \begin{equation}
     |f_1(x,u)-f_1(y,v)| \quad{\rm and}\quad \|f_2(x,u)-f_2(y,v)\|_{L_1([0,\tau_{\max}(t)]}.
 \end{equation}
 For brevity, we will use the notation 
 \begin{equation}
   \|f_2(x,u)-f_2(y,v)\|_1: =  \|f_2(x,u)-f_2(y,v)\|_{L_1([0,\tau_{\max}(t)]}.
 \end{equation}
 We start with $f_1$, the right-hand side of \eqref{eq:x1}:
\begin{align}
    &\thinspace |f_1(x,u)-f_1(y,v)|  =   \notag \\  
    =\thinspace & \thinspace \left|r(x-y) - a(x^2-y^2)  +s\left(x\int_0^{\tau^*} u(t,\tau)d\tau - y \int_0^{\tau^*} v(t,\tau)d\tau\right)\right. \notag \\
    &\quad-\thinspace  \left.b\left(x\int_{\tau^*}^{\tau_{\max}(T)} u(t,\tau) d\tau - y\int_{\tau^*}^{\tau_{\max}(T)} v(t,\tau) d\tau\right)\right| \notag \\ \leq\thinspace & |x-y|\left|r + a(x+y)\right| +s\left|x\int_0^{\tau^*} u(t,\tau)d\tau - y \int_0^{\tau^*} v(t,\tau)d\tau\right| \notag \\
    &\quad+\thinspace  b\left|x\int_{\tau^*}^{\tau_{\max}(T)} u(t,\tau) d\tau - y\int_{\tau^*}^{\tau_{\max}(T)} v(t,\tau) d\tau\right| 
    \notag \\ 
    \le \thinspace & |x-y|\left|r + a(x+y)\right| + \max\{s,b\}|x|\int_0^{\tau_{\max(t)}}|u(t,\tau)-v(t,\tau)| d\tau \notag \\
    &\quad+\thinspace   \max\{s,b\}|y-x|\int_0^{\tau_{\max}(t)}|v(t,\tau)|d\tau \notag \\
   \le \thinspace & |x-y|\left(r+2aR\right) +\max\{s,b\} R\left(|y-x| + \|u-v\|_1\right).
   \label{eq:lip-for-f1}
\end{align}
We continue with $f_2(x,u) = -\mu(x,\tau) u$, the right-hand side  of \eqref{eq:u1} and \eqref{eq:u2}, where the death rate function $\mu(x,\tau)$ is given by \eqref{eq:mufun}. The base death rate term in \eqref{eq:mufun}, $\mu_B(\tau)$, is continuous and bounded at every finite $\tau$ by the statement of Theorem \ref{thm1}. Therefore, we assume that $0\le \mu_B(\tau)\le M_B$ on $0\le \tau\le\tau_{\max(t)}$. Hence,
\begin{align}
& \|f_2(x,u)-f_2(y,v)\|_1 =\int_{0}^{\tau_{\max}(T)}|f_2(x,u)-f_2(y,v)|d\tau \notag \\
\le\thinspace & g |x|\int_{0}^{\tau^* }|u(t,\tau)-v(t,\tau)|d\tau + g|x-y|\int_{0}^{\tau^* }|u(t,\tau)|d\tau \notag \\  
&\quad+\thinspace M_B \int_{0}^{\tau_{\max}(t)}|u(t,\tau)-v(t,\tau)|d\tau + \mu_Me^{-\rho x} \int_{0}^{\tau_{\max}(T)}|u(t,\tau)-v(t,\tau)|d\tau \notag \\ 
&\quad+ \thinspace \mu_M\left(e^{-\rho x} -e^{-\rho y}\right) \int_{0}^{\tau_{\max}(t)}|v(t,\tau)|d\tau  \notag \\ 
\le \thinspace & g R \|u-v\|_1 + g R |x-y| + M_B\|u-v\|_1  \notag \\ 
&\quad+ \mu_M\max\{1,e^{-\rho R}\} \|u-v\|_1  + \mu_M R |x-y| W_{\rho,R}, 
\label{eq:lip-for-f2}
\end{align}
where $ W_{\rho,R}$ is the Lipschitz constant for $e^{-\rho x}$ in $|x|\le R$.
Putting inequalities \eqref{eq:lip-for-f1} and \eqref{eq:lip-for-f2} together, we get
\begin{align}
     \|F(x,u)-F(y,v)\|_{\mathcal{X}} &= |f_1(x,u)-f_1(y,v)| + \|f_2(x,u)-f_2(y,v)\|_1 \notag \\
 & \le    L_{F,R}\|(x,u)-(y,v)\|_{\mathcal{X}}
 \end{align}
 where the Lipschitz constant $L_{F,R}$ is
\begin{align}
  L_{F,R} : = r+M_B + R\left(2a + \max\{s,b\} + g + \mu_M    W_{\rho,R}\right) + \mu_M \max\{1,e^{-\rho R}\}.
\end{align}
This completes the proof of Lemma \ref{lemma:Lip}.
\end{proof}

\subsection{How long does a numerical solution stay in a ball?}
The Lipschitz continuity of $F$, \eqref{eq:Flip}, and the fact that $F(0,0) = (0,0)$ imply that 
\begin{equation}
\label{eq:Fbound}
    \|F(x,u)\|_{\mathcal{X}} = |f_1| + \|f_2\|_1\le L_{F,R}\|(x,u)\|_{\mathcal{X}} \le  L_{F,R}R.
\end{equation}
This allows us to guarantee that a numerical solution with any finite time step exists for at least a minimal time independent of the step size, if the step size is sufficiently small.
\begin{lemma}
    \label{lemma:mintime}
    Let $(X,U)$ be a numerical solution to \eqref{eq:Xnum}--\eqref{eq:Unum_init} with a time step $h$. Let $\|(X,U)\|_{\mathcal{X}}\le R_0$. Then the numerical solution will remain in the ball $\|(X,U)\|_{\mathcal{X}}\le R$ where $R > R_0$ at least for some minimal positive time independent of $h$.
\end{lemma}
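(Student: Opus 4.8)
The plan is to control the growth of the discrete norm $V_n := \|(X[n],U[n,\cdot])\|_{\mathcal{X}}$ by a one-step multiplicative estimate and then to invoke a discrete Gr\"onwall inequality. The crucial structural observation is that the predator update \eqref{eq:Unum} is a shift along the discrete characteristics, $k\mapsto k-1$, multiplied by the decay factor $1-h\mu(X[n],h(k-1))$; the shift is an \emph{isometry} of $\ell^1$, so it does not inflate the $L_1$-norm of $U[n,\cdot]$, while the decay factor changes it by at most $1+h\mu_{\max}$, where $\mu_{\max}:=gR+M_B+\mu_M e^{\rho R}$ bounds $|\mu(X[n],\tau)|$ on $|X[n]|\le R$. (Positivity is not yet available at this stage, so I keep absolute values throughout.) This is precisely why \eqref{eq:Fbound} does not suffice on its own: the scheme is the method of characteristics, not a forward-Euler step of $z'=F(z)$, so the per-step displacement $\|(X[n+1],U[n+1,\cdot])-(X[n],U[n,\cdot])\|_{\mathcal{X}}$ need not be $O(h)$ (the transport term contributes the full variation of $U[n,\cdot]$), whereas the multiplicative bound below is.

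First I would establish the one-step bound. For the prey, \eqref{eq:Xnum} gives $|X[n+1]|\le|X[n]|(1+hC_X)$ with $C_X:=r+aR+\max\{s,b\}R$, using $|X[n]|\le R$ and $|Y_1[n]|+|Y_2[n]|\le\|U\|_1\le R$ under the running hypothesis. For the predator, the renewal value \eqref{eq:birth_num} obeys $|U[n+1,0]|\le B_{\max}\|U[n,\cdot]\|_1$, with $B_{\max}:=kR+\|\tilde B\|_\infty(1+e^{\zeta R})$ bounding $|B(X[n],\cdot)|$, so its contribution to the norm carries a factor $h$. Combining the isometric shift-with-decay for $k\ge1$ with this renewal contribution at $k=0$ yields
\begin{equation*}
\|U[n+1,\cdot]\|_1 \le \bigl(1+h\mu_{\max}+hB_{\max}\bigr)\|U[n,\cdot]\|_1 ,
\end{equation*}
and hence $V_{n+1}\le(1+hC)V_n$ with $C:=\max\{C_X,\ \mu_{\max}+B_{\max}\}$, a constant depending only on $R$ and the fixed model parameters, \emph{not} on $h$.

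Next I would iterate. Under the inductive hypothesis $V_m\le R$ for all $m\le n$ (which is exactly what licenses the constants $\mu_{\max},B_{\max},C_X$), the one-step bound gives $V_n\le(1+hC)^nV_0\le e^{Cnh}R_0=e^{Ct}R_0$ with $t=nh$ and $V_0\le R_0$. Setting $T_{\min}:=C^{-1}\ln(R/R_0)>0$, we obtain $V_n\le R$ for every $n$ with $nh\le T_{\min}$, which closes the induction and shows that $(X[n],U[n,\cdot])$ remains in $\Omega_R$ on $[0,T_{\min}]$. Since $C$ and $T_{\min}$ are independent of the step size, this is the desired minimal time.

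The main obstacle is the bookkeeping that separates this from a textbook Euler estimate: one must verify that the transport and renewal parts of the scheme grow the norm only by the multiplicative factor $1+O(h)$, which rests on the shift being an $L_1$-isometry rather than on any smallness of the per-step displacement. A secondary technical nuisance is the trapezoidal weighting in the norm defining $\mathcal{X}$: the endpoint weights $h/2$ do not match under the index shift, producing boundary discrepancies. I would handle this by carrying out the estimate in the equivalent plain-sum norm $|X[n]|+h\sum_k|U[n,k]|$ (comparable to the trapezoidal norm up to a factor $2$), on which the shift is an \emph{exact} isometry, and then transferring the bound back; the mismatched endpoint terms are $O(h)$ multiples of boundary values and are harmlessly absorbed into the constant $C$.
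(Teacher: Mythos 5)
Your proof is correct and follows essentially the same route as the paper's: a one-step multiplicative bound $V_{n+1}\le(1+Ch)V_n$ with an $h$-independent constant $C$, iterated to $V_n\le R_0e^{Cnh}$, giving the minimal time $C^{-1}\log(R/R_0)$. The only differences are cosmetic — you bound the transport and renewal terms directly via $\mu_{\max}$ and $B_{\max}$ and are more explicit about the $\ell^1$-isometry of the shift, the closing induction, and the trapezoidal endpoint weights, where the paper instead invokes the bound $\|F\|_{\mathcal{X}}\le L_{F,R}R$ from its Lipschitz lemma and absorbs the bookkeeping into $O(h)$ terms.
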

\begin{proof}
    For the $x$-component of the solution we have:
    \begin{equation}
    \label{eq:Xk+1}
        |X[k+1]| = |X[k]| + h|f_1(X[k],U[k])| \le 
        |X[k]| + hL_{F,R}R\|(X[k],U[k,\cdot])\|_{\mathcal{X}}.
    \end{equation}
    Next, we estimate $U[k+1,0]$:
    \begin{align*}
        U[k+1,0]& = h{\sf Trap}(B(X[k],\cdot)U[k,\cdot])\le B_{\max}\|U[k,\cdot]\|_1 + O(h)\\
        &\le (B_{\max}+1)\|(X[k],U[k])\|_{\mathcal{X}},
    \end{align*}
    where $h{\sf Trap}(\cdot)$ denotes the trapezoidal rule with step $h$ applied to $(\cdot)$ and $B_{\max}$ is the maximum of $B(X,\tau)$ in the ball $\|(X,U)\|_{\mathcal{X}}\le R$.
    The, we bound $U[k+1,j]$ for $j\ge 1$:
    \begin{align*}
        U[k+1,j] = U[k,j-1] + hf_2(X[k],h(j-1),U[j-1]).
    \end{align*}
    Integration over age, we get
    \begin{align}
    \label{eq:Uk+1}
        \|U[k+1,\cdot]\|_1 \le hC_0\|U[k,\cdot]\|_1 + \|U[k,\cdot]\|_1 + hL_{F,R}R\|(X[k],U[k])\|_{\mathcal{X}}.
    \end{align}
    Adding \eqref{eq:Xk+1} and \eqref{eq:Uk+1} we obtain:
    \begin{align}
        \|(X[k+1],U[k+1])\|_{\mathcal{X}} \le \|(X[k],U[k])\|_{\mathcal{X}}(1 + Ch),
    \end{align}
    where $C: = 2L_{F,R}R + B_{\max}+1$.
Therefore, 
    \begin{align}
        \|(X[k],U[k])\|_{\mathcal{X}} \le \|(X[0],U[0])\|_{\mathcal{X}}(1 + Ch)^k\le R_0e^{Ckh}.
    \end{align}
This means that the numerical solution $(X[k],U[k])$ will remain in the ball $ \|(X[k],U[k])\|_{\mathcal{X}}\le R$ for the time at least
\begin{equation}
    T: = kh = \frac{1}{2L_{F,R}R + B_{\max}+1}\log\frac{R}{R_0} > 0.
\end{equation}
\end{proof}

\subsection{Existence.}
\label{sec:existence}
\begin{lemma}
\label{lem:existence}
Consider a time-space cylinder 
\begin{equation}
    \mathcal{C}:=\{(t,x,u)~|~t\in[0,T],~\|(x,u)\|_{\mathcal{X}}\le R\}.
\end{equation}
Let the initial condition satisfy $\|(x_0,u_0(\tau))\|_{\mathcal{X}} = R_0 < R$. 
Then the initial value problem \eqref{eq:x1}--\eqref{eq:u0} has a solution on the time interval $[0,\min\{T,t_R\}]$ where 
\begin{equation}
    t_R = \inf\{t \ge 0~|~\|(x(t),u(t,\tau))\|_{\mathcal{X}}\ge R\}.
\end{equation}
\end{lemma}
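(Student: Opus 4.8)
The plan is to establish existence \emph{constructively}: I show that the Forward–Euler numerical solutions \eqref{eq:Xnum}--\eqref{eq:Unum_init} form a Cauchy sequence in the Banach space $\mathcal{X}$ as the step is refined, and that their limit solves \eqref{eq:x1}--\eqref{eq:u0}. Three ingredients are already in place: $\mathcal{X}$ is complete, the right-hand side $F$ is locally Lipschitz on $\Omega_R$ (Lemma \ref{lemma:Lip}), and, by Lemma \ref{lemma:mintime}, every numerical solution launched from $\|(x_0,u_0)\|_{\mathcal{X}} = R_0 < R$ stays inside the ball $\|(x,u)\|_{\mathcal{X}}\le R$ for at least a fixed positive time \emph{independent} of the step $h$. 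Consequently, fixing $h$ small enough, the coarse solution is well defined on all of $[0,T^{\ast}]$, $T^{\ast}:=\min\{T,t_R\}$, and never leaves $\Omega_R$, so the Lipschitz bound \eqref{eq:Flip} applies along the entire discrete trajectory.

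First I set up a refinement ladder: alongside the step-$h$ solution $(X_h,U_h)$ I run, for each $p=1,2,\ldots$, the step-$h2^{-p}$ solution $(X_{h2^{-p}},U_{h2^{-p}})$, comparing the two at the common physical times $t=nh=(2^pn)(h2^{-p})$ and at the common age nodes (the fine grid refines the coarse one). The core estimate is the single-refinement bound: the step-$\delta$ and step-$\delta/2$ solutions differ by $O(\delta)$ at matching times, uniformly over $\Omega_R$. I obtain it by a discrete Grönwall argument: writing $E[n]$ for the $\mathcal{X}$-distance between one coarse step and the two corresponding fine steps, local consistency of Forward Euler gives a one-step mismatch of size $O(\delta^2)$, while Lemma \ref{lemma:Lip} controls the accumulation, yielding $E[n+1]\le(1+C\delta)E[n]+C'\delta^2$ and hence $E[n]\le C''\delta$ on $[0,T^{\ast}]$. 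Telescoping over $p$ — the consecutive differences being $O(h2^{-p})$ and therefore summable — gives \eqref{eq:Oh} and shows the sequence $(X_{h2^{-p}}[2^pn],U_{h2^{-p}}[2^pn,\cdot])$ is Cauchy; completeness of $\mathcal{X}$ then furnishes a limit $(x(t),u(t))\in\mathcal{X}$ at each $t=nh\le T^{\ast}$.

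Finally I verify that this limit is a genuine solution of \eqref{eq:x1}--\eqref{eq:u0}. Along each characteristic the scheme is the Euler discretization of the ODEs \eqref{eq:u1}--\eqref{eq:u2}; passing to the limit using the Lipschitz continuity of $\mu(x,\cdot)$ recovers the characteristic equations, and the same procedure applied to the discrete prey equation \eqref{eq:Xnum} recovers \eqref{eq:x1}. For the renewal condition \eqref{eq:u3} I pass to the limit in the trapezoidal quadrature \eqref{eq:birth_num}, using the $L_1$-convergence of $U$ together with the boundedness and continuity of the birth kernel $B(x,\tau)$ to identify $u(t,0)$ with $\int_0^\infty B(x,\tau)u(t,\tau)\,d\tau$.

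The hard part will be the single-refinement estimate, and within it the two features that break the textbook Euler analysis: the \emph{renewal boundary term} $u(t,0)$, which couples the newborn density to a quadrature of the \emph{entire} age profile through the $x$-dependent kernel $B(x,\tau)$, and the \emph{two-way coupling} whereby $x$ is driven by age-integrals of $u$ while the death rate $\mu(x,\tau)$ that propagates $u$ itself depends on $x$. Controlling these requires comparing the trapezoidal sums on the coarse and fine age grids and folding the resulting quadrature error into the Grönwall recursion, so that the feedback between the $x$-error and the $u$-error is closed uniformly on $\Omega_R$ rather than amplified step by step.
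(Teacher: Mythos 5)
Your proposal is correct and follows essentially the same route as the paper: Forward--Euler discretization, a one-step consistency bound of order $\delta^2$ folded into a discrete Gr\"{o}nwall recursion via the local Lipschitz estimate of Lemma \ref{lemma:Lip}, the uniform-in-$h$ confinement to the ball from Lemma \ref{lemma:mintime}, and convergence of the refinement sequence in the Banach space $\mathcal{X}$. Your telescoping of consecutive refinements (differences $O(h2^{-p})$, summable) is a slightly cleaner way to extract the Cauchy property than the paper's direct comparison of every fine mesh to the coarsest one, and your explicit limit-identification step is a welcome addition the paper leaves implicit, but neither constitutes a genuinely different approach.
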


\begin{proof}
We will construct a sequence of approximate, or numerical, solutions to \eqref{eq:x1}--\eqref{eq:u0} and show that this sequence is Cauchy and hence converges. This limit is the desired solution to the initial value problem \eqref{eq:model-prey}--\eqref{eq:smooth_phi}.
A numerical solution is obtained using the discretization \eqref{eq:Xnum}--\eqref{eq:Y2num_init} and time-stepping. 
We fix a large $N$, define $h:=T/N$, and  construct a sequence of numerical solutions with time and age step sizes $2^{-p}h$, where $p = 0,1,2,\ldots$. We denote the numerical solution with the time and age step $2^{-p}h$ by $(X^p,U^p)$.  A continuous numerical solution in the age variable at any fixed time $t = nh$ where $n$ is small enough is defined by linear interpolation of $(X^p[n],U^p[n,\cdot])$ in age. We terminate time stepping as soon as the $\mathcal{X}$-norm of the numerical solution with any time step $2^{-p}h$, $p = 0,1,2,\ldots$, $(X^0[n],U^0[n,\cdot])$, exceeds $R$, or as time reaches $T = Nh$, whichever event happens first. Lemma \ref{lemma:mintime} guarantees that the termination time is bounded from below. Let the terminal time be $T_1 = N_1h$. 

Our goal is to prove that for any $p\in\mathbb{N}$ and any $1\le n\le N_1$,
\begin{equation}
    \left\|(X^0[n],U^0[n,\cdot]) - (X^p[2^pn],U^p[2^pn,\cdot])\right\|_{\mathcal{X}} \le Ah,
\end{equation}
where $A$ is a constant independent of $p$. Note that $2^pn$ time steps of the numerical recurrence \eqref{eq:Xnum}--\eqref{eq:Y2num_init} with steps $h$ and  $h2^{-p}$ yield approximate solutions $(X^0[n],U^0[n,\cdot])$ and $(X^p[2^pn],U^p[2^pn,\cdot])$ to \eqref{eq:x1}--\eqref{eq:u0} at the same time $t = nh$.

{\bf Step 1. The discrepancy between the coarse and fine mesh solutions over the first step.} The first step toward this goal is to show that 
for any $p\in\mathbb{N}$,
\begin{equation}
\label{eq:one_step}
    \left\|(X^0[1],U^0[1,\cdot]) - (X^p[2^p],U^p[2^p,\cdot])\right\|_{\mathcal{X}} \le A_1h^2,
\end{equation}
where $A_1$ is a constant independent of $p$.
The $x$-components of numerical solutions over the time interval $h$ on the coarse $h$, and the fine, $2^{-p}h$, meshes are, respectively,
\begin{align*}
    X^0[1] &= x_0 + hf_1(x_0,u_0(\cdot)),\\
    X^p[1] & = x_0 + 2^{-p}hf_1(x_0,u_0(\cdot)),\\
    &\vdots\\
     X^p[2^p] & = X^p[2^p-1] + 2^{-p}hf_1( X^p[2^p-1],U^p[2^p-1,\cdot]).
\end{align*}
For the $u$-components of these solutions we have:
\begin{align*}
    U^0[1,k] &= u_0(h(k-1)) + hf_2(x_0,u_0(h(k-1)),h(k-1)),\quad k\ge 1,\\
    U^0[1,0] &= h{\sf Trap}(B(x_0,\cdot)u_0(\cdot)),\\
    U^p[1,j] &= u_0(2^ph(j-1)) + 2^{-p}hf_2(x_0,u_0(2^ph(j-1)),2^ph(j-1)),\quad j\ge 1,\\
    U^p[1,0] &= 2^{-p}h{\sf Trap}(B(x_0,\cdot)u_0(\cdot)),\\
    &\vdots\\
    U^p[2^p,j] &= U^p[2^p-1,j-1] + 2^{-p}hf_2(X^p[2^p-1],U^p[2^p-1,j-1],(2^p-1)h(j-1)),\\
    & j\ge 1,\\
    U^p[2^p,0] &= 2^{-p}h{\sf Trap}(B(X^p[2^p-1],\cdot)U^p[2^p-1,\cdot]).
\end{align*}

{\bf The discrepancy in $x$. }Now we bound the difference between $X^0[1]$ and $X^p[2^p]$:
\begin{align}
    |X^p[2^p] - X^0[1]| & \le 2^{-p}h\sum_{l=1}^{2^p-1}| f_1(X^p[l],U^p[l,\cdot]) - f_1(x_0,u_0)|\label{eq:Xstep1}\\
    &\le 2^{-p}h L_{F,R}\sum_{l=1}^{2^p-1} \left(|X^p[l]-x_0| + \| U^p[l,\cdot] - u_0\|_1\right).\notag
\end{align}
The Lipschitz continuity of $F$, \eqref{eq:Flip}, and the fact that $F(0,0) = (0,0)$ imply that 
\begin{equation}
\label{eq:Fbound}
    \|F(x,u)\|_{\mathcal{X}} = |f_1| + \|f_2\|_1\le L_{F,R}\|(x,u)\|_{\mathcal{X}} \le  L_{F,R}R.
\end{equation}
The bound \eqref{eq:Fbound} allows us to bound $|X^p[l] - x_0|$, $0 \le l \le 2^{p}-1$, as follows:
\begin{align}
    |X^p[l]-x_0| = \left|x_0 + 2^{-p}h\sum_{i = 0}^{l-1} f_1(X^p[i],U^p[i,\cdot]) - x_0 \right| \le 2^{-p}hlL_{F,R}R. \label{eq:step1Xbound}
\end{align}
Equation \eqref{eq:Fbound} also allows us to bound $\| U^p[l,\cdot] - u_0\|_1$. Since $u_0$ is piecewise smooth,
\begin{align}
   \| U^p[l,\cdot] - u_0\|_1 = 2^{-p}h{\sf Trap}(U^p[l,\cdot] - u_0 ) + O(h).
\end{align}
Therefore, we need to bound the trapezoidal rule sum. We continue:
\begin{align}
    2^{-p}h{\sf Trap}(|U^p[l,\cdot] - u_0| )
   & =2^{-p}h{\sf Trap}\left|u_0(\cdot) + 2^{-p}h\sum_{i=1}^l G(\cdot) - u_0(\cdot) \right| \notag\\
   & = 2^{-p}h{\sf Trap}\left|2^{-p}h\sum_{i=i_0}^{l-1} G(\cdot) \right|\label{eq:Utrap}\\
   & = 2^{-p}h\sum_{i=i_0}^{l-1}2^{-p}h{\sf Trap}|G(\cdot)|,\notag
\end{align}
where $i_0$ is $0$ if the point index $j \ge l$ and $i_0 = l-j$ if $0\le j< l$, and $G(\cdot)$ stands for $f_2(\cdot)$ or $2^{-p}h{\sf Trap}(B(\cdot)U^p[\cdot])$ -- see Fig. \ref{fig:ExistenceProofFig}. 
\begin{figure}[h!] 
    \centering    
    \includegraphics[width=\textwidth]{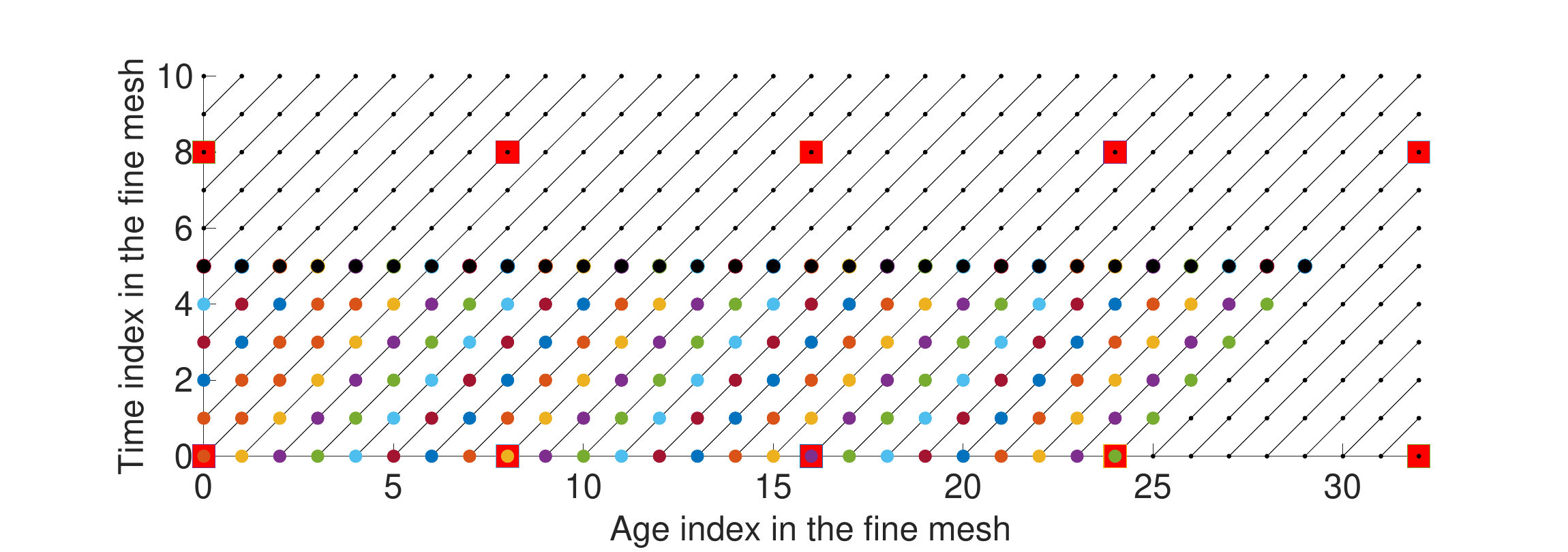}
    \caption{An illustration to the proof of Lemma \ref{lem:existence}, particularly to the calculation in \eqref{eq:Utrap}. The large red squares mark the mesh with step $h$ while the small black dots represent the mesh with step $2^{-p}h$, where $p = 3$. Here, $l= 5$ in \eqref{eq:Utrap}. The large black dots mark the row with $l= 5$. The summation in $i$ of $G(\cdot)$ is done over the points of the same color arranged along the characteristics drawn with black lines. Al all colored dots lying on the $y$-axis, $G(\cdot)$ is  $2^{-p}h{\sf Trap}(B(\cdot)U^p[\cdot])$, while at all other colored dots, $G(\cdot)$ is $f_2(\cdot)$.}
    \label{fig:ExistenceProofFig}
\end{figure}

Since $\|f_2\|_1\le L_{F,R}R$, $B(\cdot)\le B_{\max}$, and $\|u_2\|_1\le R$, we have
\begin{align}
\label{eq:step1Ubound}
     \| U^p[l,\cdot] - u_0\|_1 \le h2^{-p}l\left(\max\{B_{\max},L_{F,R}\}R + O(h)\right)\le A_uh2^{-p}l
\end{align}
for $h\le 1$ where $A_u : = \max\{B_{\max},L_{F,R}\}R + 1$.
Plugging \eqref{eq:step1Xbound} and \eqref{eq:step1Ubound} into \eqref{eq:Xstep1} and computing the sum of the arithmetic series,  we obtain the desired bound for the discrepancy in $x$ between the coarse and fine mesh solutions over the first step in $h$:
\begin{align}
      |X^p[2^p] - X^0[1]| 
     \le\thinspace & h^2L_{F,R}2^p(2^p-1)2^{-2p}\left[L_{F,R}R + A_u \right] \le A_{1,x} h^2. \label{eq:Xerror_step1} 
\end{align}

{\bf The discrepancy in $u$.} Next, we need to show that
\begin{align}
\label{eq:U1bound}
    \left\|U^p[2^p,\cdot] - U^0[1,\cdot]\right\|_1 \le A_{1,u}h^2.
\end{align}
The values of $U^p$ and $U^0$ between the fine and coarse mesh points, respectively, are defined by linear interpolation. Therefore, the $L_1$-norm is exactly equal to the quadrature by the trapezoidal rule on the fine mesh with step $2^{-p}h$. We will the trapezoidal rule quadrature with step $h$ and then estimate the correction due to $U^p$.
We start with the discrepancy at $\tau = 0$:
\begin{align*}
  &  \left|U^p[2^p,0] - U^0[1,0]\right|  = \\
 & \left|2^{-p}h{\sf Trap}\left(B(X^p[2^p-1],\cdot)U^p[2^p-1,\cdot]\right) - h{\sf Trap}\left(B(x_0,\cdot)u_0(\cdot)\right)\right| \le \\
    & \left|2^{-p}h{\sf Trap}\left(B(X^p[2^p-1],\cdot)U^p[2^p-1,\cdot]\right) - 2^{-p}h{\sf Trap}\left(B(x_0,\cdot)u_0(\cdot)\right)\right|  +\\
    &\left|2^{-p}h{\sf Trap}\left(B(x_0,\cdot)u_0(\cdot)\right) - h{\sf Trap}\left(B(x_0,\cdot)u_0(\cdot)\right)\right| \le \\
   & \left|2^{-p}h{\sf Trap}\left(B(X^p[2^p-1],\cdot)U^p[2^p-1,\cdot]\right) - 2^{-p}h{\sf Trap}\left(B(x_0,\cdot)U^p[2^p-1,\cdot]\right)\right|  + \\
   &\left|2^{-p}h{\sf Trap}\left(B(x_0,\cdot)U^p[2^p-1,\cdot]\right) - 2^{-p}h{\sf Trap}\left(B(x_0,\cdot)u_0(\cdot)\right)\right|  +\\ 
  & \left|2^{-p}h{\sf Trap}\left(B(x_0,\cdot)u_0(\cdot)\right) - h{\sf Trap}\left(B(x_0,\cdot)u_0(\cdot)\right)\right|
\end{align*}
To continue, we take into account the following facts:
\begin{itemize}
    \item $B$ is Lipschitz in $x$ with constant $L_B$ and $|X^p[2^p-1] - x_0|\le hL_{F,R}R$ according to \eqref{eq:step1Xbound};
    \item $B\le B_{\max}$ and $ \| U^p[2^p-1,\cdot] - u_0\|_1\le A_uh$ according to \eqref{eq:step1Ubound};
    \item since $B$ and $u_0$ are piecewise smooth, the difference between the trapezoidal rule quadrature with steps $h$ and $2^{-p}h$ is $O(h)$.
\end{itemize}
Therefore, we conclude that if $h$ is small enough,
\begin{align}
     \left|U^p[2^p,0] - U^0[1,0]\right| &\le L_BL_{F,R}Rh\|U^p[2^p-1,\cdot]\|_1 + B_{\max}A_uh + O(h)\notag \\
     &\le A_{1,0}h. \label{eq:U0bound}
\end{align}
Next, we calculate the discrepancies at a coarse mesh point with index $[1,k],$ with $k > 0$. This corresponds to the fine mesh index $[2^p,2^pk]$:
\begin{align*}
    &\left|U^p[2^p,2^pk]-U^0[1,k]\right| = \left|u_0 + 2^{-p}h\sum_{i=0}^{l-1}f_2(\cdot) - u_0 -hf_2(x_0,u_0(h(k-1))) \right|\\
    \le\thinspace & 2^{-p}h\sum_{i=0}^{2^p-1}|f_2[p,i,2^p(k-1)+i] - f_2(x_0,h(k-1),u_0(h(k-1))|
    % \le \thinspace & 2^{-p}h\sum_{i=0}^{2^p-1}|U^p[i,2^p(k-1)+i]\mu(X^p[i],2^{-p}h(2^p(k-1)+i)) - u_0\mu(x_0,h(k-1))|,
\end{align*}
where we used a  short-cut notation 
\begin{equation}
    f_2[p,i,2^p(k-1)+i]:=f_2(X^p[i],2^{-p}h(2^p(k-1)+i),U^p[i,2^p(k-1)+i]).
\end{equation}
Therefore,
\begin{align*}
   & h{\sf Trap}(|U^p[2^p,\cdot] - U^0[1,\cdot]|) \le \frac{h}{2}A_{1,0}h + h\sum'_{k\ge 1}\left|U^p[2^p,2^pk]-U^0[1,k]\right| \\
   \le \thinspace & 
   \frac{h^2}{2}A_{1,0} +  2^{-p}h\sum_{i=0}^{2^p-1}\left(L_{F,R}\|(X^p[2^p],U^p[2^p,\cdot]) - (x_0,u_0)\|_{\mathcal{X}} + O(h)\right) \le A'_1h^2,
\end{align*}
where the sum with a prime above it indicates that the last term in it is divided by 2. 
This trapezoidal rule is exact for $U^0$ but is not exact for $U^p$. However, on all intervals $[kh,(k+1)h]$ with $k\ge 1$, $U^p[2^p,\cdot]$ can be extended to a piecewise smooth function thanks to such property of $u_0$ and $f_2$. Therefore, the trapezoidal rule error on in most of these intervals will be $O(h^3)$ and in a few corresponding to jump discontinuities, $O(h^2)$. The error on the first interval, $[0,h]$, will be $O(h^2)$, because there $U^p$ can be extended continuously to $[0,h)$, but may have a discontinuity at $h$. Therefore, the overall error of the trapezoidal rule will be $O(h^2)$.
Therefore, we conclude that \eqref{eq:U1bound} holds. Moreover, $U^p[2^p,\cdot]$ can be extended into a piecewise smooth function.

{\bf Step 2. The accumulation of the discrepancy between the solutions of coarse and fine meshes over a finite time interval.} We will introduce the following {notations}
\begin{equation}
    e_{k,x}: = X^p[2^{p}k] - X^0[k],\quad e_{k,u}[\cdot]: = U^p[2^pk,\cdot]-U^0[k,\cdot].
\end{equation}
We also will use a short-cut notation 
\begin{equation}
   f_1[p,2^pk]:= f_1(X^p[2^pk],U^p[2^pk,\cdot]).
\end{equation}
First, we compute $e_{k+1,x}$ given $e_{k,x}$:
\begin{align*}
    e_{k+1,x}& : = X^p[2^{p}(k+1)] - X^0[k+1] \\
    & = X^p[2^{p}k] - X^0[k] +2^{-p}h\sum_{l=0}^{2^p-1}\left(f_1[p,2^pk+l] - f_1[0,k] \right)\\
    & = e_{k,x} + 2^{-p}h\sum_{l=0}^{2^p-1}\left(f_1[p,2^pk+l] - f_1[p,2^pk]+f_1[p,2^pk] - f_1[0,k]\right).
\end{align*}
Using the Lipschitz continuity of $f_1$ and \eqref{eq:one_step} we obtain
\begin{align}
\label{eq:ek+1x}
    |e_{k+1,x}| \le |e_{k,x}| + h\left(L_{F,R} |e_{k,x}| + A_1 h^2\right).
\end{align}

Second, we compute $e_{k+1,u}(0)$ given $e_{k,u}(\cdot)$:
\begin{align*}
    e_{k+1,u}[0]&: = 2^{-p}h{\sf Trap}(B[p,2^pk]U^p[2^pk,\cdot]) - h {\sf Trap}(B[0,0k]U^0[k,\cdot]) \\
    & = 2^{-p}h{\sf Trap}(B[p,2^pk]U^p[2^pk,\cdot]) - 2^{-p}h{\sf Trap}(B[0,k]U^p[2^pk,\cdot])\\
    &\quad+\thinspace2^{-p}h{\sf Trap}(B[0,k]U^p[2^pk,\cdot]) - 2^{-p}h{\sf Trap}(B[0,k]U^0[k,\cdot])\\
    &\quad+\thinspace  2^{-p}h{\sf Trap}(B[0,k]U^0[k,\cdot]) - h {\sf Trap}(B[0,k]U^0[k,\cdot]),
\end{align*}
where $B[p,2^pk]:= B(X^p[2^pk],U[2^pk,\cdot])$.
Taking into account the facts listed in the bullet list above \eqref{eq:U0bound} we obtain
\begin{align}
   | e_{k+1,u}[0]| \le L_B\left(|e_{k,x}| + \|e_{k,u}\|_1\right) + B_{\max}\|e_{k,u}\|_1 + O(h).
\end{align}
Third, we compute $e_{k+1,u}(j)$ for $j\le 1$ referring to the coarse mesh in age. We will use a short-cut notation
\begin{align}
    f_2[p,2^pk,2^pj]:= f_2(X^p[2^pk],2^{-p}h2^pk,U^p[2^pk,2^pj]).   
\end{align}
Thus, we calculate:
\begin{align*}
    &e_{k+1,u}[j]: = U^p[2^p(k+1),2^pj] - U^0[k+1,j] \\
   =\thinspace &  U^p[2^pk,2^p(j-1)] + 2^{-p}h\sum_{i=0}^{2^p-1}f_2[p,2^pk+i,2^p(j-1)+i]\\
  &\quad -\thinspace U^0[k,j-1] - hf_2[0,k,j-1]\\
   =\thinspace &  e_{k,u}[j-1] + 2^{-p}h\sum_{i=0}^{2^p-1}\left(f_2[p,2^pk+i,2^p(j-1)+i] - f_2[p,2^pk,2^p(j-1)]\right)\\
  &\quad +\thinspace   2^{-p}h\sum_{i=0}^{2^p-1}\left(f_2[p,2^pk,2^p(j-1)] - f_2[0,k,j-1]\right).
\end{align*}
Summing $e_{k,u}$ over the age range using the trapezoidal rule and taking into account that the numerical solutions can be extended to piecewise smooth functions, we get
\begin{align}
\label{eq:ek+1u}
    \|e_{k+1,u}\|_1 \le \frac{h}{2}L_B\|e_k\|_{\mathcal{X}} + \left(\frac{h}{2}B_{\max}+1\right)\|e_{k,u}\|_1  + hL_{F,R}\|e_{k}\|_{\mathcal{X}} + O(h^2),
\end{align}
where $\|e_k\|_{\mathcal{X}}: = |e_{k,x}| + \|e_{k,u}\|_1$.
Adding \eqref{eq:ek+1x} and \eqref{eq:ek+1u} we obtain
\begin{align}
    \|e_{k+1}\|_{\mathcal{X}} \le  \|e_{k}\|_{\mathcal{X}}(1 + C_1h) + C_2h^2,
\end{align}
where $C_1$ and $C_2$ are appropriate positive constants.
This implies that 
\begin{align}
    \|e_{k}\|_{\mathcal{X}} &\le (1 + C_1h)^k \|e_{0}\|_{\mathcal{X}} + \frac{(1 + C_1h)^k-1}{1 + C_1h -1}C_2h^2 \notag \\
    &= 
    e^{C_1hk}  \|e_{0}\|_{\mathcal{X}}+ \left(e^{C_1hk}-1\right)\frac{C_2}{C_1}h.
\end{align}
Since $hk\le hN_1 = T_1$ and $\|e_{0}\|_{\mathcal{X}} = 0$, we obtain the desired bound:
\begin{align}
\label{eq:ekX}
    \|e_{k}\|_{\mathcal{X}}\le \left(e^{C_1T_1} -1\right)\frac{C_2}{C_1}h.
\end{align}
\end{proof}

{\bf Step 3. Taking the limit.}
The bound \eqref{eq:ekX} shows that the sequence $(X^p,U^p)$ of numerical solutions at any fixed time $t < T_1$ with steps $2^{-p}h$ is Cauchy in the Banach space $\mathcal{X}$ and hence converges to an element $(x,u)$ in $\mathcal{X}$. The $x$-component and $u$ along the characteristics are continuously differentiable as immediately follows from time stepping. The function $u(t,\tau)$ may have discontinuities propagating characteristics emanating from the origin of the $(t,\tau)$-plane and from the finite number of discontinuities of the initial age density $u_0$ -- see the remark in Section \ref{sec:main_theorem}.

\subsection{Uniqueness}
\begin{lemma}
    \label{lemma:unique}
    Under conditions of Theorem \ref{thm1}, the solution to \eqref{eq:x1}--\eqref{eq:u0} is unique.
\end{lemma}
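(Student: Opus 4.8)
The plan is to assume, for contradiction to non-uniqueness, that \eqref{eq:x1}--\eqref{eq:u0} admits two solutions $(x,u)$ and $(y,v)$ on a common interval $[0,T^*]$ sharing the initial data $x(0)=y(0)=x_0$, $u(0,\cdot)=v(0,\cdot)=u_0$, and to force their difference to vanish by a Gr\"{o}nwall argument. Writing $w(t,\tau):=u(t,\tau)-v(t,\tau)$, I would set
\[
  e(t) := \|(x(t),u(t,\cdot))-(y(t),v(t,\cdot))\|_{\mathcal{X}} = |x(t)-y(t)| + \int_0^{\tau_{\max}(t)} |w(t,\tau)|\,d\tau ,
\]
so that $e(0)=0$. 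By Lemma \ref{lem:existence} both solutions remain in the ball of radius $R$ on $[0,T^*]$, so $\mu$ and $B$ are bounded there and depend Lipschitz-continuously on $x$ (with constants $L_\mu$, $L_B$); moreover, along each characteristic $v$ solves a linear ODE with bounded coefficient and bounded initial/boundary data, hence $|v|\le V_{\max}$ pointwise. The Lipschitz estimates of Lemma \ref{lemma:Lip} are then available, and the goal is an inequality of the form $e(t)\le K\int_0^t e(s)\,ds$.

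The prey and initial-generation contributions are the easy part. Integrating \eqref{eq:x1} for both solutions and using the Lipschitz bound on $f_1$ gives $|x(t)-y(t)|\le L_{F,R}\int_0^t e(s)\,ds$. For ages $\tau>t$, corresponding to generations already present at $t=0$, I would integrate \eqref{eq:u1} along the characteristic through $(0,\tau_0)$, $\tau_0=\tau-t$; since both characteristics start from the \emph{same} value $u_0(\tau_0)$, the difference $W_{\tau_0}(t):=w(t,\tau_0+t)$ solves $\dot W=-\mu(x,\cdot)W-(\mu(x,\cdot)-\mu(y,\cdot))v$ with $W_{\tau_0}(0)=0$, whence $|W_{\tau_0}(t)|\le \mu_{\max}\int_0^t|W_{\tau_0}(s)|\,ds + L_\mu V_{\max}\int_0^t|x(s)-y(s)|\,ds$. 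Integrating over $\tau_0\in[0,\tau_{0,\max}]$ and abbreviating $I_2(t):=\int_t^{\tau_{\max}(t)}|w(t,\tau)|\,d\tau$ yields $I_2(t)\le \mu_{\max}\int_0^t I_2(s)\,ds + C_2\int_0^t e(s)\,ds$.

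The essential difficulty is the contribution of ages $\tau<t$, i.e. generations born after $t=0$, because here the renewal boundary condition \eqref{eq:u3} couples the newborn density back to the whole state. Along the characteristic born at $t_0=t-\tau$, the difference solves the same type of linear ODE as above but now starts from the boundary discrepancy $w(t_0,0)$. Expanding $w(t_0,0)$ via \eqref{eq:u3} and splitting off the $x$-dependence of $B$, I would bound $|w(t_0,0)|\le L_B R\,|x(t_0)-y(t_0)| + B_{\max}\int_0^{\tau_{\max}(t_0)}|w(t_0,\tau)|\,d\tau \le C\,e(t_0)$, so that the boundary term feeds back nothing worse than the accumulated error at time $t_0$. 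Writing $I_1(t):=\int_0^t|w(t,\tau)|\,d\tau=\int_0^t|w(t,t-t_0)|\,dt_0$, integrating the characteristic estimate in $t_0$, and exchanging the order of integration in the resulting double integral via $\int_0^s|w(s,s-t_0)|\,dt_0=I_1(s)$ gives $I_1(t)\le \mu_{\max}\int_0^t I_1(s)\,ds + C_1\int_0^t e(s)\,ds$. This self-referential boundary bound together with the Fubini step are the parts requiring the most care.

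Adding the three estimates and using $I_1,I_2\le e$ collapses everything to $e(t)\le K\int_0^t e(s)\,ds$ with $e(0)=0$, and Gr\"{o}nwall's inequality then forces $e\equiv 0$ on $[0,T^*]$, i.e. $x\equiv y$ and $u\equiv v$ in $\mathcal{X}$, which is the claimed uniqueness. I expect the main obstacle to be precisely the coupling through the boundary condition: unlike a genuine ODE $z'=F(t,z)$, the newborn density $u(t,0)$ is a functional of the entire age profile, so the naive Lipschitz/Gr\"{o}nwall template does not apply directly and must be routed through characteristic integration, the self-consistent estimate $|w(t_0,0)|\le C\,e(t_0)$, and the order-of-integration swap that closes the loop.
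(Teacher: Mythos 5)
Your argument is correct, and it reaches the same Gr\"{o}nwall endpoint as the paper, but by a genuinely different route. The paper augments the state with the newborn density as a separate component, defining a vector $y(t)$ whose entries are the differences in $x$, in $u(t,0)$, and along the two families of characteristics, equipped with a norm $\|y\|_{\mathcal{Y}}=|x-\tilde{x}|+|u(t,0)-\tilde{u}(t,0)|+\|u-\tilde{u}\|_1$; it then differentiates the renewal integral in time, substitutes the PDE for $u_t$, integrates by parts in $\tau$, and derives the differential inequality $\|y'(t)\|_{\mathcal{Y}}\le L\|y(t)\|_{\mathcal{Y}}$ before applying Gr\"{o}nwall in differential form. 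This requires Lipschitz control of $B_\tau$ and of the product $B\mu$, i.e.\ it leans on the smoothness of the birth-rate function in age. You instead work purely in the $\mathcal{X}$-norm and in integral form: you never differentiate the boundary condition, but bound the newborn discrepancy algebraically by $|w(t_0,0)|\le L_BR|x(t_0)-y(t_0)|+B_{\max}\|w(t_0,\cdot)\|_1\le Ce(t_0)$, propagate it along the characteristics born at $t_0$, and close the loop with a Fubini swap to obtain $e(t)\le K\int_0^t e(s)\,ds$. Your version is closer to the classical Volterra-type argument for renewal equations, avoids the integration by parts and the Fr\'{e}chet-derivative-of-the-norm step, and needs only boundedness and $x$-Lipschitz continuity of $B$ rather than regularity of $B_\tau$; the paper's version buys a single compact differential inequality at the price of that extra smoothness. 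The only points you should make explicit are the pointwise bound $|v|\le V_{\max}$ on the ball (which follows from $|\mu|\le\mu_{\max}$ there, the bounded initial datum, and $|v(t_0,0)|\le B_{\max}R$) and the standard continuation remark ensuring both solutions stay in the ball $\|\cdot\|_{\mathcal{X}}\le R$ on the interval considered --- a point the paper also only asserts.
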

\begin{proof}
Let $(x,u)$ and $(\tilde{x},\tilde{u})$ be two solutions of \eqref{eq:x1}--\eqref{eq:u0} on the time interval $[0,T_1]$ defined in Section \ref{sec:existence}.
We define a vector $y(t)$ as follows:
\begin{align*}
    y(t): = \left(\begin{array}{c} x(t)-\tilde{x}(t)\\
    u(t,0) - \tilde{u}(t,0)\\
    u(t,t-t_0)-\tilde{u}(t,t-t_0)\\
    u(t,t+\tau_0)-\tilde{u}(t,t+\tau_0)\end{array}\right),
\end{align*}
where $t_0> 0$ is the birth time, $\tau_0> 0$ is the initial age. We separate the age-zero term because it generates the initial conditions for all characteristics with $t_0 \ge 0$ emanating from the $t$-axis.
We introduce the norm
\begin{align}
\label{eq:ynorm}
      \|y(t)\|_{\mathcal{Y}} = |x(t)-\tilde{x}(t)| + |u(t,0) - \tilde{u}(t,0)| 
      + \int_0^{\infty}|u(t,\tau)-\tilde{u}(t,\tau)|d\tau.
\end{align}
and a space $\mathcal{Y}$ consisting of all elements $(z,v_1(0),v_1(\cdot),v_2(\cdot))$ with finite $\mathcal{Y}$-norm: $|z| + |v_1(0)|+\|v_1\|_1 + \|v_2\|_1 < \infty$. The $\mathcal{Y}$-space is Banach as it is a direct product of a Banach space and an intersection of Banach spaces. As in the $\mathcal{X}$-space, we consider a ball of a large radius $R$ in the $\mathcal{Y}$-space and assume that the solutions $(x,u)$ and $(\tilde{x},\tilde{u})$ lie in this ball. 
Next, we consider a Fréchet derivative of $y(t)$:
\begin{align}
    y'(t) = \left(\begin{array}{c} f_1(x(t),u(t,\cdot))-f_1(\tilde{x}(t),\tilde{u}(t,\cdot))\\
   \int_0^{\infty} B(x(t),\tau)u_t(t,\tau)d\tau -  \int_0^{\infty} B(\tilde{x}(t),\tau)\tilde{u}_t(t,\tau)d\tau\\
    f_2(x(t),t-t_0,u(t,t-t_0))-f_2(\tilde{x},t-t_0,(t),\tilde{u}(t,t-t_0))\\
     f_2(x(t),t+\tau_0,u(t,t-t_0))-f_2(\tilde{x},t+\tau_0,(t),\tilde{u}(t,t+\tau_0))\end{array}\right).    
\end{align}
Using  the PDE for the predator age density \eqref{eq:model-predator}, we rewrite integrals in the second component of $y'(t)$ as follows:
\begin{align*}
   & \int_0^{\infty} B(x(t),\tau)u_t(t,\tau)d\tau =  \\
    -&\int_0^{\infty} B(x(t),\tau)u_\tau(t,\tau)d\tau - \int_0^{\infty} B(x(t),\tau)\mu(x(t),\tau)u(t,\tau)d\tau =
    \\
    -&B(x(t),0)u(t,0) + \int_0^{\infty} B_{\tau}(x(t),\tau)u(t,\tau)d\tau -\thinspace \int_0^{\infty} B(x(t),\tau)\mu(x(t),\tau)u(t,\tau)d\tau.
\end{align*}
 Taking into account that $u(t,\tau)$ has a compact support in $\tau$ at any $t<\infty$ as $u_0(\tau)$ is compactly supported, and that $(x,u)$ and $(\tilde{x},\tilde{u})$ lie in the ball of radius $R$ in the $\mathcal{Y}$-space, we obtain  the following bound for the second component of $y'(t)$
\begin{align*}
    & \left| \int_0^{\infty} B(x,\tau)u_t(t,\tau)d\tau -  \int_0^{\infty} B(\tilde{x}(t),\tau)\tilde{u}_t(t,\tau)d\tau\right|\\
    \le\thinspace &|B(x,0)||u(t,0)-\tilde{u}(t,0)| + |B(x,0)-B(\tilde{x},0)||\tilde{u}(t,0)|\\
   &\quad +\thinspace
    \int_0^{\infty}|B_{\tau}(x,\tau)||u(t,\tau)-\tilde{u}(t,\tau)|d\tau 
    +\int_0^{\infty}|B_{\tau}(x,\tau)- B_{\tau}(\tilde{x},\tau)||\tilde{u}(t,\tau)|d\tau \\
    &\quad +\thinspace \int_0^{\infty}|B(x,\tau)\mu(x,\tau)||u(t,\tau)-\tilde{u}(t,\tau)|d\tau\\ 
    &\quad +\thinspace \int_0^{\infty}|B(x,\tau)\mu(x,\tau)-B(\tilde{x},\tau)\mu(\tilde{x},\tau)||\tilde{u}(t,\tau)|d\tau  \\
    \le\thinspace &
    |B(x,0)||u(t,0)-\tilde{u}(t,0)| + L_{B_{\tau}}|x-\tilde{x}|R\\
    &\quad +\thinspace B_{\max}\mu_{\max}\|u(t,\cdot)-\tilde{u}(t,\cdot)\|_1 + L_{B\mu}|x-\tilde{x}|R,
\end{align*}
    where $L_{B_{\tau}}$ and $L_{B\mu}$ are the Lipschitz constants for $B_{\tau}$ and $B\mu$ respectively. Note that $B_{\tau}(x,\tau)$ is Lipschitz in $x$ because the smoothed indicator function has finite derivative  at all $\tau$.
Taking Lipschitz continuity of $f_1$ and $f_2$ into account and using the bound for the second component of $y'(t)$ just derived, we claim that
\begin{equation}
\label{eq:yFrechet}
    \|y'(t)\|_{\mathcal{Y}}\le L\|y(t)\|_{\mathcal{Y}}
\end{equation}
for an appropriate Lipschitz constant $L$.

Next, we fix a small $h>0$. By the triangle inequality,
\begin{equation*}
    \|y(t+h)\|_{\mathcal{Y}} \le  \|y(t+h) - y(t)\|_{\mathcal{Y}} + \|y(t)\|_{\mathcal{Y}}.
\end{equation*}
Hence,
\begin{equation*}
    \frac{\|y(t+h)\|_{\mathcal{Y}} -  \|y(t)\|_{\mathcal{Y}}}{h} \le  \frac{\|y(t+h) - y(t)\|_{\mathcal{Y}}}{h}.
\end{equation*}
The same is true if $h <0$ and $t > -h$.
Together with \eqref{eq:yFrechet} this implies that
\begin{align}
    \frac{d}{dt}\|y\|_{\mathcal{Y}}\le \|y'(t)\| \le L\|y\|_{\mathcal{Y}}.
\end{align}
 Therefore, by Gr\"{o}nwall's Inequality,
 \begin{equation}
     0\le \|y(t)\|_{\mathcal{Y}}\le \|y(0)\|e^{Lt}.
 \end{equation}
 Since $y(0) = 0$, it remains zero for all times. Hence the solution to \eqref{eq:x1}--\eqref{eq:u0} is unique.
 \end{proof}

\subsection{Positivity}
{\bf Positivity of $x(t)$.}
\begin{lemma}
\label{lemma:Xpos}
    Under the conditions of Theorem \ref{thm1}, the $x$-component of the solution is positive provided that $x(0)$ is positive.
\end{lemma}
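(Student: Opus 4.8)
The plan is to exploit the fact that the prey equation \eqref{eq:x1} is \emph{linear} in $x$ once the predator age density is regarded as a known coefficient along the solution whose existence and uniqueness have already been secured in Lemmas \ref{lem:existence} and \ref{lemma:unique}. Setting $g(t) := r - ax(t) + s y_1(t) - b y_2(t)$, equation \eqref{eq:x1} reads $x'(t) = g(t)\,x(t)$ with $x(0) = x_0 > 0$. If $g$ is continuous and bounded on the interval of existence $[0,T^\ast]$, the integrating-factor representation
\begin{equation*}
    x(t) = x_0\,\exp\!\left(\int_0^t g(s)\,ds\right)
\end{equation*}
holds, and since $x_0 > 0$ and the exponential is strictly positive, we conclude $x(t) > 0$ for every $t \in [0,T^\ast]$. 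The positivity mechanism is thus entirely transparent; the work lies in justifying that $g$ is an admissible coefficient.

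First I would verify that $g$ is well defined, bounded, and continuous. Boundedness is immediate from the fact that the solution stays in the ball $\|(x,u)\|_{\mathcal{X}} \le R$: since $|x(t)| \le R$ and $|y_1(t)|, |y_2(t)|$ are dominated by $\int_0^{\tau_{\max}(t)} |u(t,\tau)|\,d\tau \le R$, we obtain $|g(t)| \le r + (a+s+b)R =: G_R$, uniformly in $t$. Continuity of $g$ reduces to the continuity of $t \mapsto y_1(t)$ and $t \mapsto y_2(t)$; this follows from the continuous differentiability of $u$ along the characteristics \eqref{eq:u1}--\eqref{eq:u2} together with the compact support of $u(t,\cdot)$ at every finite time, which is exactly the regularity the paper already uses to differentiate these age integrals when deriving \eqref{eq:y11}. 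Hence $g$ is continuous, $\int_0^t g(s)\,ds$ is well defined, and the displayed formula applies.

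An alternative, fully discrete route mirrors the existence construction. For the Forward Euler update \eqref{eq:Xnum} one has $X[n+1] = X[n]\bigl(1 + h\,g_n\bigr)$ with $g_n := r - aX[n] + sY_1[n] - bY_2[n]$, and since the numerical solution remains in the ball, $|g_n| \le G_R$. Thus for $h < 1/G_R$ the factor $1 + h g_n$ is strictly positive, and by induction $X[n] > 0$ for all $n$ whenever $X[0] = x_0 > 0$. Passing to the limit $h \to 0$, however, yields only $x(t) \ge 0$, which is why I prefer the continuous integrating-factor argument when \emph{strict} positivity is required.

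The hard part is not the positivity itself but the regularity bookkeeping needed to treat $g$ as a legitimate coefficient: one must confirm that $y_1$ and $y_2$ depend continuously on $t$ even though $u(t,\cdot)$ is a priori only an $L_1$ object. This is resolved by transporting $u$ along characteristics, where it is $C^1$ and compactly supported, so that the two age integrals are in fact Lipschitz in $t$. With $g$ thereby shown continuous and bounded, strict positivity of $x$ follows on all of $[0,T^\ast]$, including the endpoint, and the lemma is established.
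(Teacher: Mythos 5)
Your proof is correct, and it reaches the conclusion by a route that is mathematically equivalent to the paper's but logically organized differently. The paper substitutes $z=\log x$ and then \emph{re-runs} the discretization/existence/uniqueness machinery for the transformed system $z'=f(e^z,u)$, concluding that a solution $z$ exists and hence $x=e^z>0$; positivity is obtained by reconstructing the solution in a coordinate where it is automatic. You instead work \emph{a posteriori} on the solution already produced by Lemmas \ref{lem:existence} and \ref{lemma:unique}: you freeze $g(t)=r-ax(t)+sy_1(t)-by_2(t)$ as a known coefficient, check that it is bounded (from the ball $\|(x,u)\|_{\mathcal{X}}\le R$) and continuous (from transport of $u$ along characteristics and compact support in age), and invoke the integrating-factor identity $x(t)=x_0\exp\bigl(\int_0^t g(s)\,ds\bigr)$. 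Both arguments rest on the same mechanism --- an exponential of a real quantity is strictly positive --- but yours is shorter and avoids repeating the existence construction, at the price of the regularity bookkeeping for $g$ that you correctly identify as the real content; the paper's version folds that bookkeeping into the already-established discrete framework (and implicitly relies on uniqueness to identify the reconstructed solution with the original one, a step it does not spell out). Your remark that the discrete induction on \eqref{eq:Xnum} only survives the limit $h\to0$ as the non-strict inequality $x\ge0$ is a good catch and is precisely why the continuous argument is the right one here.
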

\begin{proof}
    Equation \eqref{eq:x1} for $x$ can be rewritten as follows:
 \begin{align}
    x'(t) = x(t)f(x(t),u(t,\cdot)),
\end{align}
where 
\begin{align}
    f(t) = r - ax(t) + s\int_0^{\tau^*} u(t,\tau)d\tau - b\int_{\tau^*}^{\tau_{\text{max}}(T)} u(t,\tau) d\tau.
\end{align}
Therefore, we can introduce $z(t):=\log x(t)$ and write the following ODE for $z$:
\begin{align}
    z'= f(e^z,u(t,\cdot)).
\end{align}
 If we replace $x$ with $z$ in \eqref{eq:x1}--\eqref{eq:u0}, we can discretize it and conduct the existence and uniqueness proofs essentially repeating the arguments in the previous sections of Appendix A. Importantly, the initial condition for $z$, $z(0) = \log x(0)$ is defined. Hence, a solution $(z,u)$ exists. Then $x(t) = e^{z(t)}$ also exists and is positive.
\end{proof}

{\bf Positivity of $u(t,\tau)$.}
\begin{lemma}
Let $u_0(\tau)$ be positive on $[0,\tau_{0,\max})$.
    Under the conditions of Theorem \ref{thm1}, the $u$-component of the solution is positive  in the trapezoidal $(t,\tau)$-region
    \begin{equation}
        \mathcal{T}: = \{(t,\tau)~|~0\le t\le T,~0\le\tau < \tau_{0,\max} + t\}.
    \end{equation}
\end{lemma}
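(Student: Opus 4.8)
The plan is to exploit the fact that, once $x(t)>0$ is known (Lemma \ref{lemma:Xpos}), the predator density is transported along characteristics by the \emph{linear homogeneous} ODEs \eqref{eq:u1}--\eqref{eq:u2}, $\frac{d}{dt}u=-\mu(x,\tau)u$. Integrating along a characteristic gives $u(t)=u_{\mathrm{init}}\exp(-\int\mu\,ds)$, and since $\mu$ is bounded on the cylinder $\mathcal{C}$ the integrating factor is strictly positive and finite; hence the sign of $u$ is conserved along every characteristic. This single observation drives the whole argument. For the strip $t\le\tau<\tau_{0,\max}+t$ (\textbf{Stage 2}), a point with $\tau>t$ lies on a characteristic emanating from the initial line at age $\tau_0=\tau-t\in[0,\tau_{0,\max})$, where $u_{\mathrm{init}}=u_0(\tau_0)>0$ by the added hypothesis; sign conservation then yields $u>0$ on the entire strip.

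The delicate region is the triangle $0\le\tau<t$, whose characteristics start on the boundary at $u(t_0,0)=\int_0^\infty B(x(t_0),\tau)u(t_0,\tau)\,d\tau$ by \eqref{eq:u3}, so their sign is set by boundary values that are themselves integrals of the interior density. I would first prove \emph{non-negativity} on all of $\mathcal{T}$: since $x>0$ makes $B\ge 0$ in \eqref{eq:Bfun}, a first-time-of-sign-change (bootstrap in $t$) argument shows that as long as $u\ge 0$ up to a given time every newborn value $u(t_0,0)$ is a nonnegative integral, and sign conservation carries non-negativity forward, so $u\ge 0$ throughout $\mathcal{T}$. Then I would upgrade the boundary to strict positivity (\textbf{Stage 3}): the integral $\int Bu\,d\tau$ receives a strictly positive contribution from strip ages, where $u>0$, intersected with $\{B>0\}$, which contains $[\tau^*,\infty)$ because there $\tilde{B}>0$ and $1-e^{-\zeta x}>0$ (and in fact all ages when $k>0$, since $\varphi_{\{\tau\ge\tau^*\}}>0$ everywhere); thus $u(t,0)>0$ for all $t\in[0,T]$. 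Each triangle characteristic then starts at $u(t_0,0)>0$, so sign conservation gives $u>0$ on $0\le\tau<t$ (\textbf{Stage 4}), and combined with the strip this yields $u>0$ on all of $\mathcal{T}$.

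The main obstacle is the feedback between the boundary condition and the interior: the triangle values are fixed by $u(t,0)$, while $u(t,0)$ is an integral over the interior, so a naive characteristic-by-characteristic argument is circular. The two-stage scheme above resolves this --- first propagate non-negativity by bootstrapping in time, then promote it to strict positivity using the overlap of $\{B>0\}$ with the positive strip. A secondary technical point is that the limit density produced by the existence proof is only piecewise smooth, with a possible jump across $\tau=t$, so sign conservation must be read along characteristics, where $u$ is $C^1$, rather than across this discontinuity, and one must keep using the boundedness of $\mu$ on $\mathcal{C}$ to guarantee the integrating factor stays positive.
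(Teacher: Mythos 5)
Your proposal is correct and follows essentially the same route as the paper: sign conservation along the characteristics of the linear homogeneous transport equation, positivity of $u_0$ on the strip $\tau>t$, positivity of $B$ (given $x>0$) to keep the renewal integral positive, and a first-time-of-sign-change argument to break the circularity between the boundary value $u(t,0)$ and the interior. The paper packages the last step as a single contradiction at the minimal time the birth integral becomes nonpositive, whereas you split it into non-negativity followed by an upgrade to strict positivity, but this is only a cosmetic reorganization of the same argument.
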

\begin{proof}
    One can argue that $u(t,t+\tau_0) > 0$ and $u(t,t-t_0)>0$ as soon as $u_0(\tau_0) >0$ and $u(t_0,0)>0$ in the same manner as in the proof of Lemma \ref{lemma:Xpos}. It is given that $u_0(\tau) > 0$ on $[0,\tau_{0,\max})$. Therefore, the only opportunity for $u$ to become nonpositive in $\mathcal{T}$ is to acquire a nonpositive initial condition at $(t_0,0)$ for some $t_0 \ge 0$. This means that the birth integral must become nonpositive. Let $t_0 \ge 0$ be the smallest time at which
    \begin{align*}
        u(t_0,0) = \int_0^{\infty}B(x(t_0),\tau)u(t_0,\tau)d\tau\le 0.
    \end{align*}
    This mean that $u(t_0,\tau)\le 0$ at some set of $\tau$ of positive measure inside  $[0,\tau_{0,\max} + t_0)$. Let $\tau_1\in [0,\tau_{0,\max} + t_0)$ be such that $u(t_0,\tau_1) \le 0$. Then, since $u$ preserves its sign along the characteristics, $u$ must be negative along the whole characteristic passing through $(t_0,\tau_1)$. If this characteristic hits the $\tau$-axis first, we come to a contradiction with the assumption that $u_0 > 0$ on $[0,\tau_{0,\max})$. If this characteristic hits the $t$-axis first, we arrive at a contradiction with the assumption that $t_0$ is the earliest moment of time when $u(t,\tau)$ became nonpositive on $\tau\in [0,\tau_{0,\max} + t)$. Hence, $u(t,\tau)$ is positive in $\mathcal{T}$.
\end{proof}

\section{Linear Discriminant Analysis}
\label{sec:AppB}
Let $X$ be a $n\times d$ matrix whose rows are $d$-dimensional data points. The dataset $X$ consists of  $c>1$ categories. 
Let $\mathcal{I}_i$ denote the set of indices of data from category $i$, $i=1,\ldots,c$, and
\begin{equation*}
|\mathcal{I}_i| = n_i,\quad \mathcal{I}_i\cap \mathcal{I}_j =\emptyset,\quad\mathcal{I}_1\cup\ldots\cup \mathcal{I}_c = \{1,\ldots,n\}.
\end{equation*}

We seek a $d\times d_1$ matrix $W$, $d_1\le c-1$, that maps the data onto a $d_1$-dimensional space:
\begin{equation}
\label{e1}
Y = XW,\quad{\rm or},\quad y_k = W^\top x_k,\quad k=1,\ldots n,
\end{equation}
such that images of the data from different categories under this mapping are separated as much as possible while data from the same categories are clustered as much as possible.
In \eqref{e1}, $x_i\in\mathbb{R}^d$ is the $k$th row of $X$ written as a column vector. Likewise is $y_k$.

The means for each category in the spaces $\mathbb{R}^d$ and $\mathbb{R}^{d_1}$ are, respectively,
\begin{equation}
m_i = \frac{1}{n_i}\sum_{k\in\mathcal{I}_i}x_k,\quad {\rm and}\quad\tilde{m}_i = \frac{1}{n_i}\sum_{k\in\mathcal{I}_i}y_k = W^\top m_i,\quad i = 1,\ldots, c.
\end{equation}

To describe the data variation within and between the categories, the \emph{within-class} and \emph{between-class scatter matrices} $S_w$ and $S_b$ are introduced.
These matrices are of size $d\times d$.
The within-class scatter matrix is defined as
\begin{equation}
S_w : = \sum_{i=1}^c S_i,
\end{equation}
where $S_i$ is the scatter matrix of category $i$ defined as
\begin{equation}
S_i:= \sum_{k\in\mathcal{I}_i}(x_k-m_i)(x_k-m_i)^\top\equiv \left(X_{\mathcal{I}_i,:} - \mathbf{1}_{n_i\times 1}m_i^\top\right)^\top \left(X_{\mathcal{I}_i,:} - \mathbf{1}_{n_i\times 1}m_i^\top\right).
\end{equation}
The between-class scatter matrix is defined as
\begin{equation}
S_b: = \sum_{i=1}^c n_i(m_i-m)(m_i-m)^\top,\quad{\rm where}\quad 
m: = \frac{1}{n}\sum_{i=1}^cn_im_i \equiv \frac{1}{n}\sum_{k=1}^nx_k
\end{equation}
is the overall mean. The rank of $S_b$ is at most $c-1$.

A similar notation with tilde on top is used for the mapped data. The within- and between-class scatter matrices for the mapped data are:
\begin{equation}
\label{ySwSb}
\tilde{S}_w = W^\top S_w W,\quad \tilde{S}_b =  W^\top S_b W.
\end{equation}

The task of LDA is to find a $d\times d_1$ matrix $W$, $d_1\le c-1$, mapping the data within each category into clusters and mapping the clusters corresponding to different categories as far as possible from each other. Respectively, the LDA objective function is defined as
\begin{equation}
J(w) = \frac{w^\top S_b w}{w^\top S_w w}.
\end{equation}
The function $J(w)$ is maximized by solving the generalized eigenvalue problem. Indeed, the gradient of $J(w)$ is zero if and only if 
\begin{equation}
\label{eq:geneig}
S_bw = J(w)S_ww.
\end{equation}
Therefore, to find the desired projection, one needs to solve \eqref{eq:geneig} and compose the projection matrix $W$ out of $d_1$ eigenvectors corresponding to $d_1$ largest eigenvalues.

%%%%%%%%%%%%%%%%
\bibliographystyle{unsrtnat}
%\bibliography{reference}

\end{document}